\newtheorem{theorem}{Theorem}[section]
\newtheorem{lemma}[theorem]{Lemma}
\newtheorem{claim}[theorem]{Claim}
\newcommand{\comment}[1]{}
\newcommand{\NP}{\mathbb{NP}}
\newcommand{\R}{\mathbb{R}}
\newcommand{\Xomit}[1]{}
\newcommand{\E}{\mbox{\bf E}}
\begin{document}


\title{Algorithms for Constructing Overlay Networks For Live Streaming}

\author[1]{KONSTANTIN ANDREEV}
\author[2,6]{BRUCE M. MAGGS}
\author[3]{ADAM MEYERSON}
\author[4]{JEVAN SAKS}
\author[5,6]{RAMESH K. SITARAMAN}
\affil[1] {Oppenheimer Funds
Two World Financial Center, 225 Liberty Street, 11th Floor
New York, NY 10281, Email: {\tt kandreev@oppenheimerfunds.com}}
\affil[2]{Department of Computer Science, Duke University,
Durham, NC 27708-0129, Email: {\tt bmm@cs.duke.edu}}
\affil[3]{Department of Computer Science, 4732 Boelter Hall,   University of California, Los Angeles, CA 90095, Email: {\tt awm@cs.ucla.edu}}
\affil[4]{Microsoft, Seattle, WA}
\affil[5]{Department of Computer Science, University of Massachusetts,  Amherst MA 01003, Email: {\tt ramesh@cs.umass.edu}}
\affil[6]{Akamai Technologies, 8 Cambridge Center, Cambridge, MA 02142}
\maketitle
\begin{abstract}
In this paper, we present a polynomial time approximation algorithm for constructing an overlay multicast network for streaming live media events over the Internet.  The class of overlay networks constructed by our algorithm include networks used by Akamai Technologies to deliver live media events to a global audience with high fidelity. In particular, we construct networks consisting of three stages of nodes.  The nodes in the first stage are the entry points that act as sources for the live streams.  Each source forwards each of its streams to one or
more nodes in the second stage that are called reflectors.  A
reflector can split an incoming stream into multiple identical
outgoing streams, which are then sent on to nodes in the third and
final stage that act as sinks and are located in edge networks near end-users.  As the packets in a stream
travel from one stage to the next, some of them may be lost.  The job
of a sink is to combine the packets from multiple instances of the
same stream (by reordering packets and discarding duplicates) to form
a single instance of the stream with minimal loss.  We assume that the
loss rate between any pair of nodes in the network is known, and that
losses between different pairs are independent, but discuss an extension
to tolerate failures that happen in a coordinated fashion. Our primary contribution is an algorithm that constructs an overlay network that provably
satisfies capacity and reliability constraints to within a constant
factor of optimal, and minimizes cost to within a logarithmic factor of optimal.  Further in the common case where only the transmission costs are minimized, we show that our algorithm produces a solution that   has cost within a factor of 2 of optimal. We also implement our algorithm and evaluate it on realistic traces derived from Akamai's live streaming network. Our empirical results show that our algorithm can be used to efficiently construct large-scale overlay networks in practice with near-optimal cost. 
\end{abstract}





\section{Introduction}
One of the most appealing applications of the Internet is the
delivery of high-quality live video streams to the end-user's 
desktop or device at low cost. Live streaming is becoming increasingly
popular as more and more enterprises want to stream on the
Internet to reach a world-wide audience. Common examples include radio and television broadcasts, live events with a global viewership, sporting events, and multimedia conferencing. As all forms of traditional media inexorably migrate to the Internet, there has been sea-change in recent years in what is expected from live streaming technology. Today, broadcasters and end-users increasingly expect a high-quality live viewing experience that is nearly loss-free with fidelity comparable to that of a high-definition (HD) television broadcast!  A promising technology for delivering live streams with high fidelity is building an overlay network that can ``mask'' the packet loss and failures inherent in the Internet by using replication and redundancy \cite{KontothanasisSWHKMSS04}. Algorithms that automatically construct such overlay networks are a key ingredient of overlay streaming technology \cite{NygrenSS10}.  Such algorithms need to be efficient, since the failure and loss characteristics of the Internet change frequently, necessitating the periodic reconstruction of the overlay network. Further, the cost of delivering the streams using the overlay network needs to be minimized, so as to make the overall cost of live online media affordable. The primary contribution of this work is formulating the overlay network construction problem for live streams and developing efficient algorithms that construct overlay networks that provably provide high quality service at low operating cost.  

It is instructive to contrast overlay streaming technology  with the traditional approach to live streaming. The traditional centralized approach to delivering live streaming
involves three steps. First, the event is captured and encoded
using an {\it encoder\/}. Next, the encoder delivers the encoded
data to one more {\it media servers\/} housed in a centralized
co-location center\footnote{A co-location center is a data center that provides power, rack space, and Internet connectivity for hosting a large number of servers.} on the Internet. Then, the media server
streams the data to a media player on the end-user's computer.
Significant advances in encoding technology, such as MPEG-2 and H.264/MPEG-4, have
made it possible to achieve full-screen High-Definition (HD) television quality video with data rates between 2 to 20 megabits per second.  However, transporting the streaming bits across the Internet from the
encoder to the end-user without a significant loss in stream
quality is a critical problem that is hard to resolve with the traditional approach. More specifically, the traditional centralized approach for stream delivery outlined
above has two bottlenecks, both of which argue for the
construction of an overlay  network for delivering
live streams.

\noindent{\bf Server bottleneck.} Most media servers
can serve no more than several hundred Mbps of streams to end-users. In January
2009, Akamai hosted President Obama's inauguration event 
which drew 7 million simultaneous viewers world-wide with a peak aggregate traffic of 2 Terabits per second (Tbps). Demand for viewing live streams continues to rise quickly, spurred by a continual increase in broadband speed and penetration rates \cite{Belson10}. In April 2010, Akamai hit a new record peak of 3.45 Tbps on its network. At this throughput, the entire printed contents of the U.S. Library of Congress could be downloaded in under a minute. In the near term (two to five years), it is reasonable to expect that throughput requirements for some single video events will reach roughly 50 to 100 Tbps (the equivalent of distributing a TV-quality stream to a large prime time audience). This is an order of magnitude larger than the biggest online events today. To host an event of this magnitude 
requires tens of thousands of servers. In addition these servers must be
distributed across multiple co-location centers, since few co-location
centers can provide even a fraction of the required outgoing bandwidth  to end-users. 
Furthermore, a
single co-location center is a single point of failure. Therefore,
scalability and reliability requirements dictate the need for a
distributed infrastructure consisting of a large number of servers 
deployed across the Internet.

\noindent{\bf Network bottleneck.} As live media is increasingly
streamed to a global viewership, streaming data needs to be
transported reliably and in real-time from the encoder to the
end-user's media player over the long haul across the Internet.
The Internet is designed as a best-effort network with no quality
guarantees for communication between two end points, and packets
can be lost or delayed as they pass through congested routers or
links. This can cause the stream to degrade, producing ``glitches'',
``slide-shows'', and ``freeze ups'' as the user watches the stream. In
addition to degradations caused by packet loss,
catastrophic events can cause complete denial of service to segments of the audience. These
events include complete failure of large 
{\em Internet Service Providers\/} (ISP's), or failing of ISP's to peer with
each other. As an example of the former, in January 2008 an undersea cable cut brought down networks in the Middle East and India, dramatically impacting Internet services for several hours and taking several days to return to normality.
As an example of the latter, in June 2001, Cable and Wireless 
abruptly stopped peering with PSINet for financial reasons.
In the traditional centralized delivery model, it is customary to
ensure that the encoder is able to communicate well with the media
servers through a dedicated leased line, a satellite uplink, or
through co-location. However, delivery of bits from the media
servers to the end-user over the long haul is left to the vagaries
of the Internet.

The bottlenecks outlined above speak to the need of a distributed overlay network for delivering live streams. For more comprehensive treatment of the Internet bottlenecks and the architecture of delivery networks in general, including media and application delivery, the reader is referred to \cite{NygrenSS10}. 

\subsection{An overlay network for delivering live streams}

The purpose of an overlay network is to transport each stream from its
encoder to its viewers in a manner that alleviates the server and
network bottlenecks. An {\em overlay network\/}  can be represented by a tripartite digraph $N = (V, E)$, and a set of paths $\Pi$ in $N$ that are used to transport the streams (See Figure~\ref{tripartite}).  The node set $V$ consists of a set of sources $S$ representing entry points, a set $R$ representing reflectors, and a set of sinks $D$ representing edge servers. Physically, each node is a {\em cluster\/} of machines deployed within a data center of an ISP on the Internet. The nodes are globally distributed and are located in diverse ISPs and geographies across the Internet.  The set of edges $E = (S \times R) \cup (R \times D)$ denote links that can potentially be used for transporting the streams.  Note that transporting a stream across a link $(u,v) \in E$ involves a server at node $u$ sends a sequence of packets that constitutes the stream to a server at $v$ using the public Internet.

Given the server deployments that are represented by $N$, overlay network construction entails computing the set of paths $\Pi$ that can be used to route each stream from its source to each of its sinks. Each path in $\Pi$ originates at a source, passes through a reflector, and terminates at a sink. Note that there can be more than one path between a source and sink when multiple copies need to be sent to enhance stream quality.
\begin{figure}[t]
        \centerline{\psfig{figure=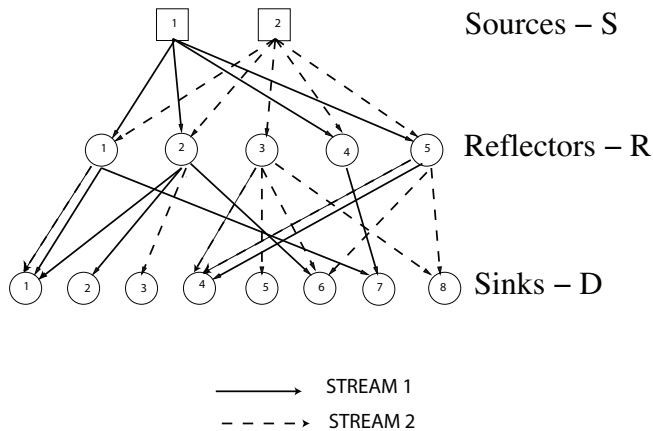,width=5in}}
        \caption{An overlay network for live streaming }
\label{tripartite}
\end{figure}
We illustrate the functionality of an overlay network by tracking the path of a stream through the
overlay network as it travels from the encoder to the end-user's
media player.
\begin{itemize}
\item An {\it entry point\/} (or, {\it source\/}) serves as the point of entry for the
stream into the overlay network and it receives the sequence of
packets that constitutes the stream from the encoder. The
entry point then sends identical copies of the stream to one or
more reflectors. For instance, in Figure~\ref{tripartite},  source  1 originates stream 1 and it forwards the stream to reflectors 1, 2, 4, and 5.

\item A {\it reflector\/} serves as a ``splitter'' and can send
each stream that it receives to one or more edge servers. For instance, in Figure~\ref{tripartite},  reflector 1 forwards stream 1 to sinks 1 and 7, in addition to forwarding stream 2 to sink 1.

\item An {\it edge server\/} (or {\it sink\/}) receives one or more identical copies
of the stream, each from a different reflector, and
``reconstructs'' a cleaner copy of the stream, before sending it to the
media player of the end-user. Specifically, if the $k^{th}$ packet
is missing in one copy of the stream, the edge server waits for
that packet to arrive in one of the other identical copies of the stream
and uses it to fill the ``hole''. For instance,  in Figure~\ref{tripartite}, stream 2 is sent from source 2 to sink 8 through two edge-disjoint paths, one through reflector 3 and the other through reflector 5. Any packet lost on the path through reflector 3 can be recovered if that same packet is not lost on the path through reflector 5. The process of stream replication and reconstruction is key to ensuring a high quality of service (QoS) when there is no single reliable loss-free path from the source to the sink.  If the packet is lost on all paths, then that packet is unrecoverable by the sink. The unrecoverable loss is termed as {\em end-to-end\/} packet loss or {\em post-reconstruction\/} packet loss.
\end{itemize}
The architecture of the overlay network allows for
distributing a stream from its entry point to a large number of
edge servers with the help of reflectors, thus alleviating the
server bottleneck. The network bottleneck can be broken down into
three parts. The first-mile bottleneck from the encoder to the
entry point can be alleviated by choosing an entry point close to
(or even co-located with) the encoding facility. The middle-mile
bottleneck of transporting bits over the long-haul from the
entry point to the edge servers can be alleviated by building an
overlay network that supports low loss and high reliability. This
is the hardest bottleneck to overcome, and algorithms for
automatically constructing such an overlay network is the primary contribution of this paper. The last-mile
bottleneck from the edge server to the end-user can be alleviated
to a significant degree by deploying edge servers ``close'' to end-users (in a network sense) and mapping each user to the most proximal edge server. Further, with significant growth of broadband
into the homes of end-users, the last-mile bottleneck is bound to
become less significant in the future\footnote{From the year 2000 to October 2009, the percentage US households with high-speed broadband Internet services grew from a mere 4.4\% to 63.5\%. Recent statistics for broadband penetration derived from Akamai data can be found in \cite{Belson10}.}.

\subsection{Considerations for overlay network construction}
Given  a digraph $N = (V, E)$ that represents a deployment of sources, reflectors, and sinks, and given a set of live streams,  the construction of an overlay network involves computing a set of paths $\Pi$ that specify how each stream is routed from its  source to the subset of the sinks that are
designated to serve that stream to end-users.  As an example, in Figure~\ref{tripartite}, we are given a set of 2 sources, 5 reflectors, 8 sinks, and 2 streams. Further, we are given the designated subset of sinks for stream 1 and 2 to be  $\{1, 2, 4, 6, 7\}$ and $\{1, 3, 4, 6, 8\}$ respectively. The goal of overlay construction is to create one or more paths from each source to each sink that requires the stream. 

In practice, the designated subset of sinks for a given stream 
takes into account the expected viewership of that stream. For instance, a large live event with predominantly European viewership would
include a large number of sinks (i.e., edge servers) in Europe in its designated subset, so as to provide many proximal choices to the viewers of that stream.  Constructing the designated subset of sinks for each stream and subsequently directing each viewer to his/her most proximal sink within that designated subset, so as  to alleviate the last-mile bottleneck is called ``mapping'' (For a more technical  discussion on mapping, see \cite{DilleyMPPSW02,NygrenSS10}.) Mapping is a complementary problem to overlay network construction and is not a topic of this paper. From the perspective of constructing an overlay network, the source of each stream and the corresponding subset
of the sinks that are designated to serve that stream are simply given to us as inputs to  our  algorithm. Note that a physical entry point deployment may originate multiple streams and a physical edge server deployment may typically receive and serve a number of distinct streams. However, for simplicity, and without loss of generality, we will replicate the sources (resp., sinks) so that each source (resp., sink) originates (resp., receives) exactly one stream.

As noted earlier, given the sources, reflectors, sinks, and streams, constructing an overlay network involves constructing paths $\Pi$ to transport each stream to its designated subset of sinks. Overlay network construction can be viewed as an optimization problem to minimize {\it cost\/}, subject to  {\it capacity\/}, {\it quality\/}, and {\it reliability\/} requirements as discussed below.

\noindent{\bf Cost:\/} A significant fraction of the cost of operating an overlay
network is the transmission cost of sending traffic over the Internet.
The sources, reflectors, and sinks are servers co-located
in  data centers of ISPs across the Internet. Operating the
overlay network requires entering into contracts with each
data center (typically, owned by an ISP) for bandwidth use in and out of the facility. A typical bandwidth contract is based either on average bandwidth use for the month, or on the $95^{th}$
percentile of five-minute-averages of the traffic for the month \cite{AdlerSV06}.
Therefore, depending on the specifics of the contract and 
usage in the month so far, it is possible to estimate the
incremental cost (in dollars) of sending additional bits across each link in
the overlay network.  We would like to minimize the total transmission cost of usage of all the links. While transmission costs represent a  major fraction of the operating costs of an overlay network, there are also fixed costs such as the amortized cost of procuring servers and the recurring co-location expenses. From the perspective of operating a streaming overlay network, these are often sunk costs that are often shared across services, with the possible exception of dedicated reflectors. Therefore, we do model a fixed cost for reflector usage.

\noindent{\bf Capacity:\/} The capacity constraints
reflect resource  and other limitations of the nodes and links in the overlay network. These constraints can be represented as capacities associated with the nodes and links of the digraph $N = (V,E)$. 
Bandwidth capacity specifies the maximum total bandwidth (in bits/sec) that can be sent by a given node or sent through a given link. Bandwidth capacity incorporates CPU,
memory, and other resource limitations on the physical hardware, as well as any 
bandwidth limitations for sending outbound traffic from a co-location facility.
For instance, a reflector machine may be able to push at most 100
Mbps before becoming CPU-bound. Another type of capacity that is also relevant is called fan-out and represents the maximum number of distinct streams a node (such as reflector) can transmit simultaneously.  It is critical to model fan-out constraints and bandwidth capacity for the reflectors, since reflector capacity is the key to scalably transmitting streams to large audiences. We start by modeling reflector fan-out constraints and later extend our solution to accommodate reflector bandwidth capacity in Section~\ref{sec:extensions}.  Note that, in addition to resource
limitations, one can also impose capacities to clamp down traffic
in certain locations and move traffic through other locations in the network for contractual  or cost reasons.

\noindent{\bf Quality:\/} The quality of the stream that an
edge server delivers to an end-user is directly related to how well the edge server is able to reconstruct the stream without incurring significant packet loss. Consequently, we would like to guarantee good stream {\em quality-of-service (QoS)\/} by requiring that the end-to-end packet loss for a stream at each relevant sink be no larger than a pre-specified loss threshold.


\noindent{\bf Reliability:\/} From time to time, 
catastrophic events on the Internet
cause large segments of viewers to be denied service. To
defend against this possibility, the overlay network must be monitored and recomputed frequently to route the streams around major failures. In addition, one can place systematic constraints on overlay network construction to provide greater
fault-tolerance. An example of such a constraint is to require
that multiple copies of a given stream 
are always sent through reflectors
located in different ISPs. This constraint would protect 
against the catastrophic failure of any single
ISP. We explore incorporating such constraints into the construction of overlay networks in Section~\ref{sec:colconstraints}.

\subsubsection{Packet loss model}
To estimate the end-to-end (i.e., post-reconstruction) packet loss for a stream at an
edge server, we need to measure and model the packet loss of each
individual link in digraph $N = (V,E)$ of the overlay network. The packet loss on each
link can be periodically estimated by proactively sending test
packets to measure loss on that link. Typically, an exponentially-weighted historical average is used as an estimate of the packet loss on that link at a given point of time. In reality, losses in a link
tend to be bursty and correlate positively over time, i.e., if the
$k^{th}$ packet on a link was lost there is a slightly greater probability
that the $(k+1)^{st}$ packet on the same link will also be lost. Further, the loss
in different links can be correlated if the Internet routes corresponding to those links happen to pass through the same physical routers. However, as a first cut, it is quite reasonable to assume that
all loss events on two {\em different\/} links are independent and uncorrelated, i.e.,
losses on one link are completely unrelated to a losses on any other link. Notice that we don't 
assume that loss of packets on individual links are uncorrelated, but
we assume that losses on different links are independent\footnote{The assumption of loss independence between different links is not strictly true in practice, especially if the underlying Internet paths that these links represent share resources such as routers. However, we find the assumption to be a good first-cut approximation in practice that enables the design of efficient algorithms.}.
However, in Section~\ref{sec:extensions},  we explore extensions to this simplified loss model to account for some correlated link failures.

\subsubsection{Efficiency requirements}
An algorithm for constructing an overlay network needs to be efficient since  the inputs to our optimization change with time, requiring the  overlay network to be recomputed frequently. The digraph $N=(V,E)$ changes when new nodes are deployed, existing nodes fail and need to be taken out of service, or when streams are added or removed. The costs and capacities associated with the nodes and links also change over time, depending on server deployments,  bandwidth usage, and contracts.  For instance, suppose we have already used a reflector early in the month such that its
$95^{th}$ percentile of traffic for the month is guaranteed to be at least 40 Mbps. We can now set the capacity of that reflector to 40 Mbps and the link costs to zero and essentially use it for free for the rest of
month. Finally, the loss probabilities associated with each link in $E$ must be updated as loss conditions are measured on the Internet change. Thus, we need efficient algorithms that run in polynomial time and can feasibly (re)construct the overlay network, typically several times an hour. The need for algorithmic efficiency is particularly key since the size of the overlays are growing successively larger over time with growing streaming usage and wider deployments.

\subsection{Our contributions}

Our first key contribution is formulating the overlay network construction problem for live streams, an optimization problem that is at the heart of much of enterprise-quality live streaming today. Subsequently, we design the first  provably efficient algorithm for constructing an overlay network that obeys capacity and quality constraints while minimizing cost. We show that constructing the optimal overlay network is NP-Hard. However, we provide an approximation algorithm that constructs an overlay network that is provably near-optimal.  Specifically, we provide an efficient algorithm that constructs an overlay network that obeys all capacity and quality constraints to within a constant factor, and minimizes the cost to within $O(\log n)$ of optimal. The approximation bound for the cost is tight since set cover, which has a known logarithmic lower bound, is a special case of our problem. Further, for the important special case where only transmission costs are minimized, our algorithm produces a solution with cost that is provably within a factor of 2 of optimal. In addition, our algorithm can be extended to incorporate more complex constraints such as constructing  overlays that provide reliability even when individual ISPs fail.  Our technique is of independent interest and is based upon linear program rounding, combined with
a novel application of the generalized assignment algorithm \cite{ShmoysT93}. Finally, we implement our algorithm and 
show that it performs very well on a range of actual trace data collected from the Akamai's live streaming network. In particular, our algorithm produced near-optimal results within a feasible amount of time for real-world networks.

\subsection{Related work}

First, we discuss related work on systems that deliver streams utilizing multicast protocols, in lieu of the reflector-based overlay network that we study in our current work. Next, we discuss optimization research that is closely-related to our own algorithmic approach.

\subsubsection{Multicast protocols} 
One of the oldest alternative approaches to distributing streams is called ``multicast''~\cite{Deering91}.  The goal of multicast is to reduce the total
bandwidth consumption required to send the same stream to a large
number of hosts.  Instead of sending all of the data directly from one
server, a multicast tree is formed with a server at the root, routers
at the internal nodes, and end-users at the leaves.  A router receives
one copy of the stream from its parent and then forwards a copy to
each of its children.  The multicast tree is built automatically as
end-users subscribe to the stream.  The server does not keep track of
which end-users have subscribed.  It merely addresses all of the packets
in the stream to a special multicast address, and the routers take
care of forwarding the packets on to all of the end-users who have
subscribed to that address.  Support for multicast is provided at both the
network and link layer.  Special IP and hardware addresses have been
allotted to multicast, and many commercial routers support the
multicast protocols.

Unfortunately, few of the routers on major backbones are configured to
participate in the multicast protocols, so as a practical matter it is
not possible for a server to rely on multicast alone to deliver its
streams.  The ``Mbone'' (multicast backbone) network was organized to
address this problem~\cite{Eriksson94}.  Participants in Mbone have installed routers
that participate in the multicast protocols.  In Mbone, packets are
sent between multicast routers using unicast ``tunnels'' through
routers that do not participate in multicast.

However, multicast protocols have other issues as well. 
With the multicast protocols, the trees are not
very resilient to failures.  In particular, if a node or
link in a multicast tree fails, all of the leaves downstream of the
failure lose access to the stream.  While the multicast protocols do
provide for automatic reconfiguration of the tree in response to a
failure, end-users will experience a disruption while reconfiguration
takes place.  Similarly, if an individual packet is lost at a node or
link, all leaves downstream will see the same loss.  To compound
matters, the multicast protocols for building the tree do not attempt to minimize
end-to-end packet loss or maximize available bandwidth in the tree. In contrast, as noted earlier, commercial media delivery systems such as Akamai do not rely on multicast, but instead provide a new component called a reflector.  A reflector
receives one copy of a stream and then forwards one or more copies to
other reflectors or media servers.  Further, the media servers at the edge of the Internet are capable of receiving packets via multiple paths to recover from packet loss. 

The approach studied in our work involves constructing an overlay network in a centralized manner using dedicated hardware for entry points, reflectors, and edge servers. For a more detailed overview of the system architecture of Akamai's live streaming network, the reader is referred to \cite{KontothanasisSWHKMSS04}. And, for an analysis of the traffic on Akamai's live streaming network, the reader is referred to \cite{sripanidkulchai2004analysis}. A complementary approach is the peer-to-peer (P2P) approach to live streaming where end-user machines can self-organize themselves into an overlay tree that can be used to distribute media content. An example of such a system is ``End System Multicast''(ESM)~\cite{Chu00,LiuRLZ08}. In ESM, there is no distinction between reflectors, and edge servers.  Each host participating in the  multicast may be called on to play any of these roles simultaneously
in order to form a tree.  ESM allows multicast groups to be formed without any network support for routing protocols and without any other permanent infrastructure dedicated to supporting multicast.  While P2P live streaming is cost-effective, it is still unclear that it can provide the scale and quality-of-service of a dedicated overlay network. Recently, there has also been work on hybrid systems such as CoopNet~\cite{Ven01,PadmanabhanS02} that incorporate certain elements of both dedicated overlays and peer-to-peer systems. Still, the vast majority of enterprise-quality live streaming for key global events today happen using the dedicated overlay networks such as the one we study in this paper.

\subsubsection{Algorithms for facility location}

Our algorithmic approach is inspired by recent work on a general class of optimization problems known as facility location problems. In a classical version of the facility location problem, there are a set of potential locations where facilities may be built and a set of client locations that each require service from a facility. Given the cost of servicing each client from each facility location, the objective is to determine the set of locations at which facilities should be built so as to minimize the total cost of building the facilities and servicing all the clients. This class of problems has 
numerous applications in
operations research, databases, and computer networking. The first
approximation algorithm for facility location problems was given by
Hochbaum~\cite{Hochbaum82} and improved approximation algorithms have been the
subject of numerous papers including~\cite{Chudak98,CharikarG99,GuhaK98,JainMS02,JainV99,Mahdian,ShmoysTA97,Svir01}. Except for Hochbaum's result, the papers described above all assume
that the costs of  servicing clients from facilities form a metric
(satisfying the symmetry and triangle inequality properties). 
While our overlay network construction problem is significantly different from the prior work on facility location, we can provide a rough analogy where reflectors acts as facilities, sinks act as clients, and the costs are the
weights that represent packet loss probabilities. But the analogy does not fully capture the additional complexities and unique challenges that need to be overcome to solve our problem. Further,  the packet loss probabilities do not necessarily form a metric. And, the symmetry and triangle inequality constraints frequently fail in real networks.

Our overlay network construction problem includes set cover as a special case, though our
problem is much more general. But, it is instructive to review the set cover literature, since it provides lower bounds on the complexity of our problem. The fact that our problem has set
cover as a special case, gives us an approximation lower bound of $\Omega(\log n)$ 
with respect to cost achievable by a 
polynomial-time algorithm (unless $NP  \subset
DTIME(n^{O(\log\log{n})})$)~\cite{LundY94,Feige98}. A
simple greedy algorithm gives a matching upper bound for the set cover
problem~\cite{Johnson74,Chvatal79}.
Our problem is  capacitated (in
contrast to the set cover problem where the sets are
uncapacitated). Capacitated facility location (with ``hard''
capacities) has been considered in prior work \cite{PalTW01}, but the local
search algorithm provided depends heavily upon the use of an
underlying metric space. The standard greedy approach for the set
cover problem can be extended to accommodate capacitated sets, but our problem is significantly more complex as it requires a two-level assignment of sources (i.e., streams) to reflectors and reflectors to sinks. Two-level assignments have been considered previously in other contexts~\cite{KorupoluPR99,BaevR01,MeyersonMP01,GuhaM02},  though they assume that the points are located in a metric space. Another basic property of our problem that makes it less amenable to a greedy approach that has worked well in other contexts is that with multiple streams  the coverage no longer increases concavely as more reflectors are used to route the streams. In other words, using two additional reflectors may improve our solution by a larger margin than the sum of the improvements of the reflectors taken individually.

A unique feature of our overlay network construction problem is the ability to route a stream via multiple paths and combine the different copies at the sink to provide a high level of quality. This is reminiscent of extensions to the facility location problem where a given client is assigned to multiple facilities in a redundant fashion~\cite{JainV00,GuhaMM01}. However, unlike our results, each of the
previous papers assumes that the costs for connecting clients to facilities form a metric. Further, it is also assumed
that the coverage provided by each facility is equivalent (whereas in
our problem the reflectors provide benefit in a more complex manner by enabling multiple paths that decrease the end-to-end packet loss).

\subsubsection{Network reliability}
While our problem aims to construct an overlay network with low cost and providing a specified level of quality-of-service, there has been prior relevant work on network reliability that studies when a network becomes disconnected due to link failures. Given a network where each link $e$ fails (i.e., vanishes) independently with probability $p_e$, the all-terminal network reliability problem aims to determine the probability that the network becomes disconnected. Likewise, the two-terminal network reliability problem is to determine the probability that a specified source and sink node in the network are disconnected from each other.  Both problems and several related variations are known to be $\sharp$P-complete \cite{ProvanB83,Valiant79} for general networks. 
Karger, however,  showed a fully polynomial-time randomized approximation scheme (FPRAS) that approximates the all-terminal network reliability to within a relative error of $1 \pm \epsilon$ in time that is polynomial in the number of vertices and $1/\epsilon,$ with high probability~\cite{Karger99} . Further, Karger showed how his approach can be extended, with some restrictions, to a more general
problem, namely the multi-terminal network reliability problem, 
where instead of all terminals we have an arbitrary subset of terminals and we would like to compute the probability some pair of terminals in the subset are disconnected. While some versions of the network reliability problem are exactly solvable in polynomial time for the three-level networks that we study in this paper, our problem differs from the network reliability in that our goal is to construct an overlay network with considerations of both cost and quality-of-service as measured by end-to-end packet loss.

\subsection{Outline of the paper}

The remainder of this paper is organized as follows. 
In Section \ref{sec:problem}, we formally state the overlay network construction problem and show how the problem can be modeled as an Integer Program (IP). 
In Section \ref{sec:alg}, we describe our polynomial-time approximation algorithm {\tt Approx} that utilizes the technique of LP relaxation followed by rounding to obtain a near-optimal overlay network. In Section~\ref{sec:extensions}, we study various extensions to the overlay network construction problem to capture additional real-world constraints. 
In Section~\ref{sec:implement}, we show that our algorithm is capable of producing good overlay networks on variety of real-world trace data obtained from the Akamai live streaming network. Finally, in Section~\ref{sec:concl}, we conclude with directions for future research.   

 \section{The overlay network construction problem}
 \label{sec:problem}

In this section, we formally define the overlay network construction problem and show how the problem can be formulated as an integer program (IP). 
\subsection{Problem definition}
As an input to the problem, we are given the following.
\begin{itemize}
 \item A tripartite digraph $N=(V,E),$ where $V=S \cup R \cup D$ and $E = (S \times R) \cup (R \times D)$. The set $S$ denotes the set of sources, $R$ denotes the set of reflectors, and $D$ denotes the set of sinks (See Figure~\ref{tripartite}).  The nodes in the network represent the current deployment in the CDN of entry points (sources), reflectors, and edge servers  (sinks) that actually serve the streams to end-users. Each link $(i,j) \in E$ represents the underlying Internet path used for transmitting streams from node $i$ to node $j$. 

\item For each link (which corresponds to a path through the Internet), we are given the probabilities of packet loss on the links are denoted by
$$p_{ij} \in [0,1], \mathrm{for\ each\ }(i,j) \in E.$$ The loss probability reflects the odds that a packet sent on that link is lost or otherwise rendered useless (a packet that arrives significantly out-of-order or late is also useless). The link loss probabilities are typically measured by a software component residing at each node that sends a sequence of test packets to estimate the loss.

\item We are given a set of live streams where each stream has a specified source in $S$ and a specified subset of sinks in $D$ that require the stream. Though in reality multiple streams can originate at a physical entry point deployment or end at a physical edge server deployment, we assume that {\em exactly one stream originates at each source and exactly one stream ends at each sink\/}. We make this assumption without  any loss of generality since each source (resp., each sink)  can be replicated a sufficient number of times so that we have exactly one stream per source (resp., sink).  Once this modification is made, we let $n$ be the  maximum number of nodes in any level of the network, i.e., $n = \max (|D|, |R|)$ since $|S| \leq |D|$. 

\item Next, we are given the costs associated with routing the streams. Each stream incurs a cost for being transmitted over each link. The cost is represented by 
$$c^{k}_{ij} \in  {\R^{+}},$$ which is the cost of transmitting the stream originating at source $k \in S$ through link $(i,j) \in E$.
 Note that the costs of carrying different streams over a given link can be different depending on how they are encoded. The transmissions costs can also vary from link to link in accordance with the contracts between the CDN and the ISPs that provide the bandwidth. In addition to the transmission costs, we assume that there is a fixed cost for using a reflector to route one or more streams denoted by 
$$r_i \in {\R^{+}}, \mathrm{\ for\ each\ }i \in R.$$

\item The reflectors of the overlay network must obey capacity constraints that derive from hardware and software limitations.  First, we model the fan-out constraints on each reflector. For each reflector $i \in R$, reflector $i$ can simultaneously transmit to at most $F_i$ different sink nodes. In addition to fan-out constraints, one can also place an upper bound on the bandwidth (in bits per second) that can be transmitted through each reflector node. We extend our results to capture this additional constraint later in Section~\ref{sec:extensions}.

\item The goal of the overlay network is to simultaneously route each stream from its respective source to its subset of sinks with a minimum acceptable quality-of-service. The primary metric for quality of service (QoS) that we consider in this paper is end-to-end packet loss. To this end, we are given an {\em  end-to-end loss threshold\/} that represents the maximum acceptable end-to-end packet loss for each stream sent from its source $k$ to a sink $j$. The thresholds are represented by $$\Phi_{j}^k \in [0,1], \mathrm{for\ } k \in S\mathrm{\ and\ } j \in D.$$ Note that in this framework a given stream could require different levels of QoS at different sinks.
\end{itemize}
Given the digraph $N=(V,E)$, stream information, costs, capacities, link packet loss probabilities, and end-to-end loss thresholds as outlined above, the goal of overlay network construction is to create a set of paths $\Pi$ that can be used to simultaneously route each stream from its source to its subset of sinks such that capacity and end-to-end loss thresholds are met and the total cost is minimized.


\subsection{Integer programming formulation}
\label{subsec:ipf}
We formulate the overlay network construction problem as an integer program (IP) as follows.  We
use $y_i^k$ as the indicator variable for the delivery of the $k$-th
stream to the $i$-th reflector, $z_i$ as the indicator variable for
utilizing reflector $i$ and $x_{ij}^k$ as the indicator variable for
delivering the $k$-th stream to the $j$-th sink through the $i$-th
reflector. Consider a stream that 
originates from source $k$ that goes through reflector $i$ to reach sink $j$.  As noted earlier, the loss experienced by the packets in the stream when transmitted over link $(k,i)$ (resp., link $(i,j)$) is $p_{ki}$ (resp., $p_{ij}$). Assuming that packet loss in the two links are independent, the loss experienced by the stream along the entire path from source $k$ to sink $j$  is $p_{ki}+p_{ij}-p_{ki}p_{ij}$. Since it is more convenient to work with the logarithms of probabilities, we
transform the packet loss probability along paths into weights where 
$w_{ij}^k= -
\log{(p_{ki}+p_{ij}-p_{ki}p_{ij})}$.  In other words 
$w_{ij}^k$ is the negative log of the probability of packet loss along path $(k,i,j)$.    
Likewise, we define $W_j^k = -\log{\Phi_j^k}$ to be the {\it weight threshold\/} of the stream originating at source $k$ and ending at sink $j$, where $\Phi_j^k$ is the corresponding end-to-end loss threshold\footnote{Note that both $w_{ij}^k$ and $W_j^k$ can take the value of $+\infty$ as defined. In practice, we use a sufficiently large finite value instead.}.
 Thus we are able to write
the IP:
\[
\begin{array}{lll}
\min \hspace{3mm} {\displaystyle \sum_{i \in R} r_i z_i + \sum_{i \in R} \sum_{k \in S} c_{ki}^k y_i^k 
+ \sum_{i \in R}\sum_{ k \in S} \sum_{ j \in D} c_{ij}^k x_{ij}^k} & &\vspace{-3mm}\\
s.t. & & \vspace{3mm} \\ \vspace{3mm}
&\hspace{-75mm} (1) &\hspace{-65mm} y_{i}^k \leq z_i \; \; \forall i \in R, \; \forall k \in S \\ \vspace{3mm}
&\hspace{-75mm} (2) & \hspace{-65mm}  x_{ij}^k \leq y_i^k \; \; \forall i \in R,\; \forall j \in D,\; \forall k \in S \\ \vspace{3mm}
&\hspace{-75mm} (3) &\hspace{-65mm} \sum_{k \in S} \sum_{j \in D} x_{ij}^k \leq F_i z_i \; \; \forall i \in R \\ \vspace{3mm}
&\hspace{-75mm} (4) &\hspace{-65mm}  \sum_{j \in D} x_{ij}^k \leq F_i y_i^k \; \; \forall i \in R,\; \forall k \in S \\ \vspace{3mm}
&\hspace{-75mm} (5) &\hspace{-65mm}  \sum_{i \in R} x_{ij}^k w_{ij}^k \geq W_{j}^k  \; \; \forall j \in D,\; \forall k \in S \\ \vspace{3mm}
&\hspace{-75mm} (6) &\hspace{-65mm} x_{ij}^k \in \{0,1\}, \; y_i^k \in \{0,1\}, \; z_i \in \{0,1\}
\end{array}
\]

Constraints (1) and (2) force us to pay for the reflectors we are
using, and to transmit packets only through reflectors that are in
use. Constraint (3) encodes the fan-out restriction. Constraint (4) is
redundant in the IP formulation, but provides a useful cutting plane
in the LP rounding algorithm that we present in Section~\ref{sec:alg}. 
Constraint (5) are the ``weight constraints'' that capture the end-to-end loss requirements for QoS as shown in Claim~\ref{clm:threshold} below. Note that since we have replicated the sinks such that exactly one stream ends at each sink, there is exactly one weight constraint per sink. Thus, there are a total of $|D| \leq n$ weight constraints, where $n = \max (|R|, |D|)$. Constraint (6) is the integrality constraint for the variables. The set of paths $\Pi$ that is the output of overlay network construction can be extracted from the solution to the IP above by routing a stream from its source $k$ through reflector $i \in R$ to sink $j \in D$ if and only if $x_{ij}^k$ equals $1$ in the IP solution. The {\em cost objective function\/} that is minimized represents the total cost of operating the overlay network and is the sum of three parts: the fixed cost of using the reflectors, the cost of sending streams from the sources to the reflectors, and the cost of sending streams from the reflectors to the sinks.

\begin{claim}
\label{clm:threshold}
If constraint (5) holds, then for each stream originating at source $k$ and destined for sink $j$,  the end-to-end packet loss is at most the corresponding end-to-end  loss threshold $\Phi_j^k$. 
\end{claim}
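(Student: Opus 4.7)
The plan is to unwind the definitions so that the weight inequality in constraint (5) becomes, after exponentiation, a product inequality on the per-path loss probabilities, and then to argue that this product is exactly the end-to-end loss at sink $j$.

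First I would fix a stream originating at source $k$ and a sink $j$, and let $I = \{i \in R : x_{ij}^k = 1\}$ be the set of reflectors actually used to carry this stream to $j$. For a single path $(k,i,j)$ with $i \in I$, I would observe that by independence of losses on the distinct links $(k,i)$ and $(i,j)$, a packet fails to arrive at $j$ along that path precisely when it is lost on at least one of the two links; inclusion--exclusion then gives path loss probability $p_{ki} + p_{ij} - p_{ki}p_{ij}$, which matches the quantity appearing in the definition $w_{ij}^k = -\log(p_{ki}+p_{ij}-p_{ki}p_{ij})$.

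Next I would compute the end-to-end loss, i.e., the probability that a given packet is lost on \emph{every} path in $I$. Because distinct reflectors $i \neq i'$ use disjoint sets of links (the source-to-reflector link and reflector-to-sink link differ), the loss events across the different paths in $I$ are independent under the paper's link-independence assumption. Hence the end-to-end loss is
\[
\prod_{i \in I} (p_{ki} + p_{ij} - p_{ki}p_{ij}).
\]

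Finally I would translate constraint (5) by taking exponentials. Using $x_{ij}^k \in \{0,1\}$, constraint (5) reads
\[
-\sum_{i \in I} \log(p_{ki}+p_{ij}-p_{ki}p_{ij}) \;=\; \sum_{i \in R} x_{ij}^k\, w_{ij}^k \;\geq\; W_j^k \;=\; -\log \Phi_j^k,
\]
which, after negating and exponentiating, gives $\prod_{i \in I}(p_{ki}+p_{ij}-p_{ki}p_{ij}) \leq \Phi_j^k$. Combined with the previous paragraph, this is exactly the statement that the end-to-end loss at sink $j$ is at most $\Phi_j^k$.

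The only subtlety I anticipate is justifying the cross-path independence used in the product expression for end-to-end loss: one needs to invoke that different reflectors yield link-disjoint paths from $k$ to $j$, so that the independence assumption on link losses stated in the packet loss model transfers cleanly to the path-level events. Everything else is a direct algebraic rewriting of the logarithmic weights $w_{ij}^k$ and $W_j^k$.
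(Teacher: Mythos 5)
Your proof is correct and follows essentially the same route as the paper: express the end-to-end loss as the product of the per-path loss probabilities over the edge-disjoint (link-disjoint) paths through the chosen reflectors, note that the weights $w_{ij}^k$ and $W_j^k$ are negative logarithms, and exponentiate constraint (5). The extra details you supply (inclusion--exclusion for the per-path loss and the link-disjointness justifying cross-path independence) are exactly the steps the paper's proof leaves implicit.
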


\begin{proof}
Since packet loss on different links are assumed to be independent,  the end-to-end loss probability of the reconstructed stream from source $k$ to sink $j$ is the product of the loss probabilities along each of its edge-disjoint paths in $\Pi$. Since we use the negative logarithm of the loss probabilities as weights, the logarithm of the product of loss probabilities is equal to the sum of the corresponding weights. Specifically,  the LHS of constraint (5) is the sum of the weights of the edge-disjoint paths in $\Pi$ from source $k$ to sink $j$, which equals the  negative logarithm of the end-to-end loss probability of the stream from $k$ to $j$. Note that $W^{k}_{j}$  represents the negative logarithm of the acceptable end-to-end loss threshold $\Phi_j^k$. Therefore, asserting that the LHS value is at least $W^{k}_{j}$ captures the end-to-end threshold requirement for the stream.
\end{proof}

\begin{claim}{\it
In the IP formulation constraints (1),(2),(3) and (6) dominate (4). }
\end{claim}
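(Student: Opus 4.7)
The plan is to fix an arbitrary reflector $i \in R$ and source $k \in S$, and verify constraint (4), namely $\sum_{j \in D} x_{ij}^k \leq F_i y_i^k$, by a short case analysis on the value of the integer variable $y_i^k$, which by constraint (6) is either $0$ or $1$.

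First I would handle the case $y_i^k = 0$. Here constraint (2) forces $x_{ij}^k \leq y_i^k = 0$ for every $j \in D$, and since $x_{ij}^k \in \{0,1\}$ by (6), every term on the left-hand side of (4) vanishes, so the inequality $0 \leq F_i \cdot 0$ trivially holds.

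Next I would treat the case $y_i^k = 1$. By constraint (1), $z_i \geq y_i^k = 1$, and integrality (6) then forces $z_i = 1$. Applying constraint (3) with this value gives $\sum_{k' \in S} \sum_{j \in D} x_{ij}^{k'} \leq F_i z_i = F_i$. Since all $x_{ij}^{k'}$ are nonnegative (again by (6)), dropping all terms corresponding to $k' \neq k$ only decreases the left-hand side, yielding $\sum_{j \in D} x_{ij}^k \leq F_i = F_i y_i^k$, as required.

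I do not expect any real obstacle here: the claim is essentially a sanity check that (4) is implied in the integer setting, and the only subtlety is to remember to invoke integrality (6) to boost $z_i \geq 1$ up to $z_i = 1$ so that the right-hand side of (3) becomes exactly $F_i$. The authors emphasize that (4) is nonetheless retained because it is not dominated by the LP relaxation and therefore serves as a useful cutting plane for the rounding algorithm in Section~\ref{sec:alg}; the proof above makes precise that in the IP (but not the LP) it is redundant.
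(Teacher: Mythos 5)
Your proof is correct and follows essentially the same elementary case analysis as the paper; the only cosmetic difference is that you split on the value of $y_i^k$ (using (1) and (6) to force $z_i=1$ when $y_i^k=1$) whereas the paper splits on $z_i$, but the constraints invoked and the logic are identical. The remark that (4) is redundant only in the integer program, not in the LP relaxation, also matches the paper's stated motivation for retaining it as a cutting plane.
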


\begin{proof}
We look at cases for $z_i$. 
\begin{itemize}
\item[1)] If $z_i=0$, then from (1) and (6) we get $y_i^k=0$ for $\forall k \in S$. Now
from (2) and (6) we get $x_{ij}^k=0 \; \;\forall k \in S$ and $\forall j \in D$. Thus (4) is implied.
\item[2)] If $z_i=1$, then if $y_i^k=0$ we still have $x_{ij}^k=0 \; \; \forall j \in D$,
which means
\[ \sum_{j \in D} x_{ij}^k = 0 \]
If $y_i^k=1$ then from (3) we have
\[   \sum_{k \in S} \sum_{j \in D} x_{ij}^k \leq F_i \]
which means that $\forall k \in S$
\[ \sum_{j \in D} x_{ij}^k \leq F_i \]
\end{itemize}
Which concludes the proof.
\end{proof}

\section{An approximation algorithm for overlay network construction}
 \label{sec:alg}
 
 To motivate the need for an approximation algorithm, we first show a hardness result for computing the optimal solution for the overlay network construction problem. 
  \begin{theorem}
\label{thm:lowerbound}
The overlay network construction problem is NP-hard. Further, there is no polynomial time algorithm that can achieve cost that is within a $(1 - o(1)) \ln n$ factor of optimal, unless\footnote{Note that $NP  \subset DTIME(n^{O(\log\log{n})})$ is a weaker requirement than $P = NP$, thus yielding a stronger result.}  $NP  \subset DTIME(n^{O(\log\log{n})})$.
\end{theorem}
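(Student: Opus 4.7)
The plan is a polynomial-time, approximation-preserving reduction from set cover. Given an arbitrary set cover instance $(U,\mathcal{C})$ with $|U|=u$ and $\mathcal{C}=\{C_1,\ldots,C_m\}$, I would construct an overlay instance consisting of a single source $s$ carrying one stream, one reflector $\rho_i$ per set $C_i$ (each with fixed cost $1$ and fan-out set to $u$), and one sink $\delta_v$ per element $v\in U$. All transmission costs $c_{ij}^k$ are set to $0$. The source--reflector links are loss-free ($p_{s,\rho_i}=0$), and each reflector--sink link has $p_{\rho_i,\delta_v}=\Phi$ if $v\in C_i$ and $p_{\rho_i,\delta_v}=1$ otherwise, for some fixed $\Phi\in(0,1)$. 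The end-to-end threshold is $\Phi_{\delta_v}^s=\Phi$ at every sink.

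With these parameters, the path weight $w_{\rho_i,\delta_v}^s$ equals $W:=-\log\Phi$ when $v\in C_i$ and equals $0$ otherwise, while $W_{\delta_v}^s=W$. By Claim~\ref{clm:threshold}, constraint (5) at sink $\delta_v$ therefore demands that at least one reflector $\rho_i$ with $v\in C_i$ be selected and route the stream to $\delta_v$; all fan-out and consistency constraints are slack by the generous choice of $F_i=u$. Because only the fixed reflector costs enter the objective, the overlay cost equals the number of chosen reflectors, i.e., the size of the induced cover. Feasible overlays are thus in exact bijection with set covers of $(U,\mathcal{C})$, and any polynomial-time $\alpha$-approximation for overlay yields an $\alpha$-approximation for set cover on the same instance.

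NP-hardness of set cover immediately gives NP-hardness of overlay construction. To carry the $(1-o(1))\ln n$ inapproximability across, I would instantiate the reduction on Feige's hard set cover instances. The main bookkeeping obstacle, and in my view the only delicate point, is the size parameter $n=\max(|R|,|D|)=\max(m,u)$: one verifies that the standard hardness constructions (possibly after mild element-duplication padding that preserves the optimum) yield instances with $m=u^{1+o(1)}$, so that $\ln n=(1+o(1))\ln u$. A hypothetical $(1-\epsilon)\ln n$-approximation for overlay would then yield a $(1-\epsilon-o(1))\ln u$-approximation for set cover, contradicting Feige's lower bound under $\NP\not\subset\DTIME(n^{O(\log\log n)})$. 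Apart from this parameter bookkeeping, the reduction is essentially mechanical.
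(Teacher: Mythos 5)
Your reduction is essentially the paper's own proof: a single stream, one unit-fixed-cost reflector per set, one sink per element, zero transmission costs, generous fan-out, and 0/1 path weights encoded via the link loss probabilities, so that minimum-cost feasible overlays correspond exactly to minimum set covers and the Lund--Yannakakis/Feige $(1-o(1))\ln n$ bound transfers. The only difference is that you explicitly address the bookkeeping for $n=\max(|R|,|D|)$, which the paper glosses over; that extra care is sound (in the standard hard instances the number of sets does not exceed the universe size by more than a $u^{o(1)}$ factor) and does not change the approach.
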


\begin{proof}
The set cover problem that is known to be NP-hard \cite{GJ79} is as follows. Given a set of elements and a collection of sets containing those elements, the goal of set cover is to select the smallest number of sets such that every element is included in at least one of the selected sets.  The set cover problem is a special case of the overlay network construction problem as shown below. Let each set correspond to a reflector and each element correspond to a sink. Further, let there be one source (labeled 1) that originates a single stream that must be sent to all sinks with $W^1_j = 1$, for all $j \in D$. Next, let $w^1_{ij} = 1$ if the set represented by reflector $i$ contains the element represented by sink $j$, and zero otherwise. (Note that  $w^1_{ij}$ is set to $1$ by making the probability of loss on path $(1, i, j)$ to be $1/2$ and $w^1_{ij}$ is set to $0$ by making the probability of loss on path $(1, i, j)$ to be $1$.)  Finally, let $r_i = 1$, for all $i \in R$, let all other costs be zero, and let fan-out $F_i$ be unbounded for all $i \in R$.  It is easy to see that solving the above special case of the overlay network construction problem with the smallest cost is equivalent to finding the smallest collection of reflectors in $R$ that cover all the elements in $S$. Thus, overlay network construction is NP-hard. Further, it is known that the there is no polynomial time algorithm that can approximate the set cover problem within a $(1 - o(1)) \ln n$ factor of optimal, unless  $NP  \subset
DTIME(n^{O(\log\log{n})})$ \cite{LundY94,Feige98}. Thus, the same inapproximability result also holds for the overlay network construction problem.
\end{proof}
With Theorem~\ref{thm:lowerbound} in mind, we now describe a polynomial-time approximation algorithm {\tt Approx} that produces a solution that has cost within a $O(\log n)$ factor of optimal while satisfying the fan-out and weight constraints within constant factors, i.e., the algorithm has the best possible approximation ratio to within constant factors. Our approximation algorithm {\tt Approx} works in two phases. In the first phase, the integer program (IP)  described in Section~\ref{subsec:ipf} is ``relaxed'' to obtain a linear program (LP). Specifically, the  LP relaxation is obtained by substituting 
the integrality constraints (6) in the IP with 
\[
x_{ij}^k \in [0,1], \; y_i^k \in [0,1], \; z_i \in [0,1]
\]
That is, the above variables can now take fractional values, rather than just $0$ or $1$. We solve the LP optimally and find a fractional solution denoted by
\[ (\hat{z}_i, \; \; \hat{y}_i^k,  \; \; \hat{x}_{ij}^k) \] 

In the second phase, we find a solution to the IP by ``rounding'' the fractional LP solution to integral values using a two-step rounding process: a randomized rounding step (Section~\ref{sec:randround}) followed by a modified version of a Generalized Assignment Problem (GAP) approximation (Section~\ref{sec:gap}). Once the rounding process is complete, we establish that the integral solution obtained through rounding is a provably good approximate solution for the original IP, which in turn provides a provably good overlay network for simultaneously routing all the streams from their sources to their respective destinations. 

\subsection{Randomized rounding}
\label{sec:randround}

We apply the following randomized rounding procedure to obtain the values $(\bar{z}_i, \; \; \bar{y}_i^k,  \; \; \bar{x}_{ij}^k)$.
We use parameter $c>1$, which will be determined later, as a preset
multiplier.
\begin{itemize}
\item[(1)] {\em Compute $\dot{z}_i$:} $\forall i \in R$, set $ \; \dot{z}_i = \min (\hat{z}_i c\log{n}, 1)$ 
\item[(2)] {\em Compute $\dot{y}_i^k$:} $\forall i \in R, \; \forall k \in S$, if  $\dot{z}_i = 0$ then  set $\dot{y}_i^k = 0$, else set
$$\dot{y}_i^k = \min \left(\frac{\hat{y}_i^k c\log{n}}{\dot{z}_i}, 1\right)$$
\item[(3)] {\em Compute $\bar{z}_i$:} We round $\bar{z}_i=1$ with probability $\dot{z}_i$ and 0 otherwise.
\item[(4)] {\em Compute $\bar{y}_i^k$:} If $\bar{z}_i=1$ then round $\bar{y}_i^k=1$ with probability $\dot{y}_i^k$ and 0 otherwise.
\item[(5)] {\em Compute $\bar{x}_{ij}^k$:} If $\dot{z}_i=\dot{y}_i^k=1$ set $\bar{x}_{ij}^k=\hat{x}_{ij}^k$ \\
            else if $\bar{y}_i^k=1$ set $\bar{x}_{ij}^k=\frac{1}{c\log n}$ 
with probability $\hat{x}_{ij}^k / \hat{y}_i^k$ and 0 otherwise.
\item[(6)] Set all the variables $\bar{z}_i$, $\bar{y}_i^k$, and $\bar{x}_{ij}^k$ not set in the above steps to 0.
\end{itemize}
The only fractional values left after this procedure 
are $\bar{x}_{ij}^k$. As outlined later in Section~\ref{sec:gap}, to round the $\bar{x}_{ij}^k$'s we will apply a modified version of the 
Generalized Assignment Problem (GAP) approximation due to Shmoys and Tardos~\cite{ShmoysT93}.
The GAP rounding will preserve the cost and violate the fan-out and weight constraints by
at most an additional constant factor.  

\subsubsection{Analysis of the randomized rounding} 
\label{sec:rounding}
We bound the expected cost of the solution after randomized rounding  in terms of the optimal cost as follows.
Let $\hat{C}$ denote the value of the cost objective function for our fractional solution $(\hat{z}_i, \; \; \hat{y}_i^k,  \; \; \hat{x}_{ij}^k)$ obtained by solving the LP relaxation. Likewise, let
$\bar{C}$ be the value of the cost objective function after the randomized rounding procedure, i.e., $\bar{C}$ is the value obtained by evaluating the objective function using values $(\bar{z}_i, \; \; \bar{y}_i^k,  \; \; \bar{x}_{ij}^k)$. 
Finally, let $C^{OPT}$ the optimal objective function value obtained by solving the IP.
From steps (1) and (3) of the rounding procedure we conclude that
\begin{equation}
\E[\bar{z}_i]  = \dot{z}_i \leq \hat{z}_i c\log{n}. \label{eq:zbar}
\end{equation}
From step (4) of the rounding procedure, we conclude that
\begin{equation}
\E[\bar{y}^k_i] = \dot{z}_i \cdot \min \left(\frac{\hat{y}_i^k c\log{n}}{\dot{z}_i}, 1\right) \leq \hat{y}^k_i c\log{n}.  \label{eq:ybar}
\end{equation}
Further,  in all cases as shown below,
\begin{equation}
\E(\bar{x}_{ij}^k) = \hat{x}_{ij}^k.
\label{eq:xbar}
\end{equation}
 In the case where $\dot{z}_i=\dot{y}_i^k=1$, we deterministically set $\bar{x}_{ij}^k=\hat{x}_{ij}^k$ and hence $\E(\bar{x}_{ij}^k) = \hat{x}_{ij}^k$. Else, we have the following two cases to consider.
\begin{enumerate}
\item If $\dot{y}_i^k < 1$, it follows that $$\dot{y}_i^k =  \frac{\hat{y}_i^k c\log{n}}{\dot{z}_i}.$$
\item If $\dot{z}_i< 1$, then $\dot{z}_i = \hat{z}_i c \log n$. Thus,  since $\hat{y}_i^k  \leq \hat{z}_i$ due constraint (1)  of the LP, 
\[ \frac{\hat{y}_i^k c\log{n}}{\dot{z}_i}
= \frac{\hat{y}_i^k c\log{n}}{\hat{z}_i c \log n}  \leq  1.\]  
That is, it is again true that $$\dot{y}_i^k =  \min \left(\frac{\hat{y}_i^k c\log{n}}{\dot{z}_i}, 1\right) = \frac{\hat{y}_i^k c\log{n}}{\dot{z}_i}.$$
\end{enumerate}
 Therefore, in both of the above cases,
\[
\E(\bar{x}_{ij}^k) = \dot{z}_i \cdot \dot{y}_i^k  \cdot (\hat{x}_{ij}^k / \hat{y}_i^k) \cdot \frac{1}{c\log n} = \dot{z}_i \cdot \frac{\hat{y}_i^k c\log{n}}{\dot{z}_i}  \cdot (\hat{x}_{ij}^k / \hat{y}_i^k) \cdot \frac{1}{c\log n} =  \hat{x}_{ij}^k . 
\]
Using the linearity of expectations, the Equations~\ref{eq:zbar},\ref{eq:ybar}, and \ref{eq:xbar} above imply
\[ \E[\bar{C}] \leq c\log{n} \cdot \hat{C} \leq c\log{n} \cdot C^{OPT}. \]
Thus we have the following lemma.

\begin{lemma}{\it The expected cost after the randomized rounding step is  $O(\log{n})$ times the optimal cost. }
\label{lem:costbnd}
\end{lemma}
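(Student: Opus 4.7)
The plan is to bound the expectation of each decision variable in the randomized rounding solution $(\bar{z}_i, \bar{y}_i^k, \bar{x}_{ij}^k)$ in terms of its fractional LP counterpart $(\hat{z}_i, \hat{y}_i^k, \hat{x}_{ij}^k)$, then invoke linearity of expectation on the cost objective (which is a nonnegative linear combination of these variables). Since the LP is a relaxation of the IP, we have $\hat{C} \le C^{OPT}$, so it suffices to show that the expectation of every variable is at most $c \log n$ times its fractional value — or in the case of $\bar{x}_{ij}^k$, exactly equal to its fractional value. Summing these weighted contributions through the cost function then yields $\E[\bar{C}] \le c \log n \cdot C^{OPT}$.

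For $\bar z_i$, the bound is immediate: step (1) caps $\dot z_i$ at $\hat z_i c \log n$, and step (3) rounds to $1$ with probability $\dot z_i$, so $\E[\bar z_i] = \dot z_i \le \hat z_i c \log n$. For $\bar y_i^k$, I would observe that $\bar y_i^k$ is set to $1$ only when $\bar z_i = 1$ (step (4) is conditioned on this), so the event factors as a product of two independent Bernoullis with parameters $\dot z_i$ and $\dot y_i^k$. The definition of $\dot y_i^k$ in step (2) gives $\dot z_i \cdot \dot y_i^k \le \hat y_i^k c \log n$ directly.

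The main obstacle, and the step that requires the most care, is analyzing $\E[\bar x_{ij}^k]$, since step (5) branches on whether $\dot z_i$ and $\dot y_i^k$ hit the cap at $1$. I would handle the three branches separately. When $\dot z_i = \dot y_i^k = 1$, the variable is set deterministically to $\hat x_{ij}^k$. In the remaining cases, the key observation is that whenever $\dot y_i^k < 1$ we necessarily have $\dot y_i^k = \hat y_i^k c\log n / \dot z_i$ by definition; and when $\dot z_i < 1$ we can combine $\dot z_i = \hat z_i c \log n$ with LP constraint (1), namely $\hat y_i^k \le \hat z_i$, to conclude $\hat y_i^k c\log n / \dot z_i \le 1$, so again the minimum in step (2) is achieved by the ratio. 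In either subcase $\dot z_i \cdot \dot y_i^k = \hat y_i^k c\log n$, and then the chain of independent Bernoullis in steps (3)–(5), together with the scaling factor $1/(c \log n)$ and the conditional probability $\hat x_{ij}^k / \hat y_i^k$, telescopes to $\E[\bar x_{ij}^k] = \hat x_{ij}^k$.

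With these three bounds in hand, the cost objective $r_i z_i + c_{ki}^k y_i^k + c_{ij}^k x_{ij}^k$ summed over its indices yields $\E[\bar C] \le c \log n \cdot \hat C \le c \log n \cdot C^{OPT}$ by linearity of expectation and LP relaxation, completing the proof. I expect no probabilistic concentration inequalities to be needed here: the statement is purely about expected cost, so linearity suffices and all the work is in the per-variable expectation calculations, particularly the case analysis for $\bar x_{ij}^k$.
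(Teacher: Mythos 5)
Your proposal is correct and follows essentially the same route as the paper: per-variable expectation bounds ($\E[\bar z_i]\le \hat z_i c\log n$, $\E[\bar y_i^k]\le \hat y_i^k c\log n$, $\E[\bar x_{ij}^k]=\hat x_{ij}^k$), with the same case analysis showing that whenever the deterministic branch is not taken the minimum in step (2) is achieved by the ratio so that $\dot z_i\dot y_i^k=\hat y_i^k c\log n$, followed by linearity of expectation and $\hat C\le C^{OPT}$. No gaps.
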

Now we will show that with high probability all weight constraints are met to within a small constant factor. 
By high probability, we mean a probability of more than $1 - 1/n$, where $n = \max (|R|, |D|)$ as defined earlier. Recall that the constraints (5) are the ``weight constraints'' as shown below:
\[ \sum_{i \in R} x_{ij}^k w_{ij}^k \geq W_{j}^k  \; \; \forall j \in D,\; \forall k \in S. \]
Let random variable  $\overline{W}_{j}^k$ be the  LHS of the weight constraint evaluated at $\bar{x}_{ij}^k$, i.e., 
\[\overline{W}_{j}^k \stackrel{\Delta}{=}  \sum_{i \in R} \bar{x}_{ij}^k w_{ij}^k.\]  We show that after the randomized rounding step the following holds with high probability: $\overline{W}_{j}^k  \geq (1 - \delta) W_j^k$, for all $j$ and $k$ and a small constant $\delta > 0$.
To provide a probabilistic bound for $\overline{W}_{j}^k$, we use a version of the {\em Hoeffding-Chernoff bound} \cite{Hoeffding63,MotwaniR95} below.

\begin{theorem}[Hoeffding-Chernoff bound] 
\label{theorem:4.2}
Let $v_i$ be a set of independent random variables where for all $i$ either $v_i \in [0,1]$ or $Variance(v_i) = 0$. Let $0 < \delta < 1$, let $S=\sum_i v_i$ and 
$\mu = \E \left[ \sum_i v_i \right]$ then
\[
\begin{array}{l} 
\Pr(S < (1-\delta) \mu) \leq exp\left(-\frac{\delta^2\mu}{2}\right) \vspace{3mm}\\
\Pr(S > (1+\delta) \mu) \leq exp\left(-\frac{\delta^2\mu}{3}\right)
\end{array}
\] 
\end{theorem}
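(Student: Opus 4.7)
The plan is to prove both tail bounds by the classical Chernoff moment-generating-function method: first reduce to a statement about $\mathbb{E}[e^{tS}]$ via Markov's inequality, then use independence to factorize the MGF into a product over the $v_i$, then bound each factor with a convexity trick, and finally optimize over the free parameter $t$. The constant random variables (those with $\mathrm{Variance}(v_i)=0$) contribute $e^{t\,\mathbb{E}[v_i]}$ exactly to the MGF, which fits the same bounding form, so I will handle them uniformly with the $[0,1]$-valued variables.

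For the upper tail, I would write, for any $t>0$,
\begin{equation*}
\Pr(S > (1+\delta)\mu) \;=\; \Pr\bigl(e^{tS} > e^{t(1+\delta)\mu}\bigr) \;\leq\; \frac{\mathbb{E}[e^{tS}]}{e^{t(1+\delta)\mu}} \;=\; \frac{\prod_i \mathbb{E}[e^{tv_i}]}{e^{t(1+\delta)\mu}}.
\end{equation*}
For each $v_i \in [0,1]$, convexity of $x\mapsto e^{tx}$ on $[0,1]$ gives $e^{tv_i} \leq 1 + v_i(e^t-1)$, so $\mathbb{E}[e^{tv_i}] \leq 1 + \mathbb{E}[v_i](e^t-1) \leq \exp\bigl(\mathbb{E}[v_i](e^t-1)\bigr)$; for the degenerate (constant) $v_i$ the same bound holds with equality. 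Multiplying over $i$ yields $\mathbb{E}[e^{tS}] \leq \exp(\mu(e^t-1))$, and choosing $t=\ln(1+\delta)$ gives $\Pr(S > (1+\delta)\mu) \leq \bigl(e^\delta/(1+\delta)^{1+\delta}\bigr)^\mu$. Then I would finish by showing $(1+\delta)\ln(1+\delta)-\delta \geq \delta^2/3$ for $0<\delta<1$ via a standard one-variable calculus argument (Taylor-expand $\ln(1+\delta)$ and verify monotonicity of the residual), yielding the claimed $\exp(-\delta^2\mu/3)$ bound.

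The lower tail follows symmetrically: for $t>0$, Markov gives $\Pr(S < (1-\delta)\mu) \leq \mathbb{E}[e^{-tS}]\,e^{t(1-\delta)\mu}$, and convexity on $[0,1]$ now yields $e^{-tv_i} \leq 1 - v_i(1-e^{-t})$, so $\mathbb{E}[e^{-tv_i}] \leq \exp(-\mathbb{E}[v_i](1-e^{-t}))$. Setting $t=-\ln(1-\delta)$ produces $\Pr(S < (1-\delta)\mu) \leq \bigl(e^{-\delta}/(1-\delta)^{1-\delta}\bigr)^\mu$, which I would reduce to $\exp(-\delta^2\mu/2)$ using the inequality $(1-\delta)\ln(1-\delta) + \delta \geq \delta^2/2$ for $0<\delta<1$ — again a routine Taylor-series verification since $\ln(1-\delta) = -\sum_{k\geq 1}\delta^k/k$.

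The main technical nuisance, such as it is, is the last step in each direction: turning the exact Chernoff expression $(e^\delta/(1+\delta)^{1+\delta})^\mu$ and its lower-tail analog into the clean exponentials $e^{-\delta^2\mu/3}$ and $e^{-\delta^2\mu/2}$ respectively. These are not deep but require careful handling of the Taylor remainder to make sure the inequalities hold over the entire range $0<\delta<1$ (rather than only in the small-$\delta$ regime where the second-order expansion is obviously valid). The rest of the argument is mechanical, and the variance-zero hypothesis is built in only to legitimize treating some of the $v_i$'s as deterministic — a case for which the same MGF bound holds trivially.
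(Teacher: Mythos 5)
Your proposal is correct in substance but takes a genuinely different route from the paper. The paper does not reprove the concentration bound at all: it invokes the standard Hoeffding--Chernoff bound for $[0,1]$-valued variables as a black box, and handles the zero-variance variables by a decomposition trick --- a constant $v_i$ (nonnegative in all the paper's applications) is split into $\lceil v_i\rceil$ deterministic variables each equal to $v_i/\lceil v_i\rceil\in[0,1]$, after which the standard statement applies verbatim. You instead rederive the whole inequality by the moment-generating-function method and absorb the constants directly into the product $\prod_i \E\left[e^{tv_i}\right]$. Your route is self-contained and makes the role of the constants explicit; the paper's is shorter and isolates the only new idea (the splitting) from the classical theorem.

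One sentence of yours needs repair, and it concerns exactly the nonstandard case the theorem was modified to cover. For a constant $v_i=a>1$ the linearization $e^{ta}\le 1+a(e^t-1)$ is false (convexity gives it only for values in $[0,1]$), and the claim that ``the same bound holds with equality'' is also false, since $e^{ta}\ne \exp\left(a(e^t-1)\right)$ in general. What is true, and all you need, is the one-line inequality $e^{ta}\le \exp\left(a(e^t-1)\right)$ for $a\ge 0$, $t\ge 0$, which follows from $t\le e^t-1$; for the lower tail, $e^{-ta}\le \exp\left(-a(1-e^{-t})\right)$ because $t\ge 1-e^{-t}$. With that substitution (and the implicit assumption, shared with the paper's decomposition argument, that the zero-variance variables are nonnegative), your argument goes through: the choices $t=\ln(1+\delta)$ and $t=-\ln(1-\delta)$, and the inequalities $(1+\delta)\ln(1+\delta)-\delta\ge\delta^2/3$ and $(1-\delta)\ln(1-\delta)+\delta\ge\delta^2/2$ on $0<\delta<1$, are standard and correctly deployed.
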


\begin{proof}
This theorem is the standard Hoeffding-Chernoff bound with a small modification. The standard bound requires all random variables $v_i \in [0,1]$. However, we allow random variables with zero variances, i.e., those that take a single value with probability 1, to be in any range. Clearly, any such $v_i$ with zero variance can decomposed into $\lceil v_{i}  \rceil$ variables that are each deterministically set to $\frac{v_{i}}{\lceil v_{i} \rceil} \in \left[0, 1 \right]$ with probability 1. Applying the standard bound after this decomposition yields our theorem.
\end{proof}

We define random variable $v_{ij}^k \stackrel{\Delta}{=} (c \log n) \bar{x}_{ij}^k \frac{w_{ij}^k}{W_j^k}$. In order that we may use Theorem~\ref{theorem:4.2}, we need to establish the following two criteria.
\begin{enumerate}
\item The random choices made in computing $v_{ij}^k$  and $v_{i'j}^k$, for $i \neq i'$, are independent and not shared. Each random variable $v_{ij}^k$ is computed using random variable $\bar{x}_{ij}^k$ which in turn depends on the random choices made in computing $\bar{y}_{i}^k$.  But, for any $i  \neq  i'$, random variable $v_{ij}^k$  is independent of $v_{i'j}^k$, since the random choices in computing $\bar{y}_{i}^k$ and $\bar{z}_i$ are distinct from those made in computing $\bar{y}_{i'}^k$ and $\bar{z}_{i'}$.  
 
\item  Without loss of generality we can assume $w_{ij}^k \leq W_j^k$ since it never helps to have more weight on
a source-to-sink path than the weight that the sink itself demands. When  $\bar{x}_{ij}^k$ is probabilistically set, it is set to either $\frac{1}{c\log n}$ or $0$. It follows that $v_{ij}^k =(c \log n) \bar{x}_{ij}^k \frac{w_{ij}^k}{W_j^k} \in \left[0, 1\right]$.  Otherwise, $\bar{x}_{ij}^k$ is deterministically set to $\hat{x}_{ij}^k$ with probability $1$. The range of $v_i$ can be arbitrary in this case, since the variance of $v_i$ is zero.
\end{enumerate}
Using Equation~\ref{eq:xbar}, the expected value of $v_{ij}^k$ is
\[ \E[v_{ij}^k]= c\log{n} \cdot \frac{w_{ij}^k}{W_j^k} \cdot \hat{x}_{ij}^k.  \]
Since the weight constraints are satisfied by the LP solution,
\begin{equation}
\sum_{i \in R} \frac{w_{ij}^k}{W_j^k} \cdot \hat{x}_{ij}^k \geq 1.
\label{eq:wtineq} 
\end{equation}
Thus, 
\begin{equation}
 \mu \stackrel{\Delta}{=}  \E\left[ \sum_{i \in R} v_{ij}^k \right] = \sum_{i \in R} \E[v_{ij}^k] = c\log{n} \cdot 
\sum_{i \in R} \frac{w_{ij}^k}{W_j^k} \cdot \hat{x}_{ij}^k \geq c\log{n}. 
\label{eq:meanbnd}
\end{equation} 
Using the Hoeffding-Chernoff bound of Theorem~\ref{theorem:4.2}, we get the following chain of inequalities: 
\[
\begin{array}{l} 
\Pr(\overline{W}_j^k < (1-\delta)W_j^k) = \Pr\left(\sum_{i \in R} w_{ij}^k \cdot \bar{x}_{ij}^k < (1 - \delta) W_j^k\right)  \\
\leq \Pr\left(\frac{\sum_{i \in R} w_{ij}^k \cdot \bar{x}_{ij}^k}{W_j^k} < (1 - \delta) \right), \mathrm{by\ dividing\ both\ sides\ by\ W_j^k}\\
 \leq \Pr\left(\sum_{i \in R} \frac{w_{ij}^k \cdot \bar{x}_{ij}^k}{W_j^k}< 
(1-\delta)\sum_{i \in R} \frac{w_{ij}^k\cdot \hat{x}_{ij}^k}{W_j^k} \right), \mathrm{using\ Equation~\ref{eq:wtineq}}\\
=\Pr(\sum_{i \in R} v_{ij}^k  \leq (1-\delta) \mu) \leq exp\left(-\frac{\delta^2\mu}{2}\right). 
\end{array}
\]
Thus, for a fixed $j$ and $k$ we bound the probability that  the corresponding weight constraint is not met to within a factor of $1 - \delta$ using Equation~\ref{eq:meanbnd} and Theorem~\ref{theorem:4.2} as follows:
\[ \Pr\left(\overline{W}_{j}^k  < (1 - \delta) W_j^k\right) \leq
e^{\left(-\frac{\delta^2 \cdot c\log{n} }{2}\right)}=\frac{1}{n^{\delta^2 \cdot c /2}}. \]
Since there are at most $n$ weight
constraints, i.e., one constraint per sink, we set $\delta^2 \cdot c = 4$. Specifically, if we set $\delta=1/4$ and
$c \geq 64$, all weight constraints are satisfied to within a factor of $1 - \delta$ with probability at least $1 - \frac{n}{n^{\delta^2 \cdot c /2}} \geq 1 - \frac{n}{n^2} \geq 1 - \frac{1}{n} $. Thus, we can state the following lemma.
\begin{lemma}
\label{lem:weightbnd}
{\it
If we set $c \geq 64$ then after the randomized rounding procedure, all weight constraints are met to within a factor of $\frac{3}{4}$, with high probability. } That is,  
$\overline{W}_{j}^k  \geq \frac{3}{4} W_j^k$,  for all $j$ and $k$, with high probability.
\end{lemma}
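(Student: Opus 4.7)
The plan is to concentrate each $\overline{W}_j^k$ around its mean using a Chernoff--Hoeffding bound, and then union-bound over the at most $n$ weight constraints (one per sink, since each sink hosts a unique stream). The structure of the rounding makes this natural: each $\bar{x}_{ij}^k$ is a function only of the independent coins tossed for reflector $i$ (in steps (3)--(5)), so for a fixed $(j,k)$ the summands of $\overline{W}_j^k = \sum_i \bar{x}_{ij}^k w_{ij}^k$ are mutually independent across $i \in R$.

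The first thing I would verify is that $\E[\bar x_{ij}^k] = \hat x_{ij}^k$, so that $\E[\overline W_j^k] \ge W_j^k$ directly from LP constraint (5). The only non-trivial case is when $\dot z_i < 1$: then $\dot z_i = \hat z_i c \log n$, and LP constraint (1) ($\hat y_i^k \le \hat z_i$) forces $\hat y_i^k c \log n / \dot z_i \le 1$, so the $\min$ in step (2) is inactive, and multiplying the three nested selection probabilities together gives exactly $\hat x_{ij}^k$. In the deterministic branch ($\dot z_i = \dot y_i^k = 1$) step (5) assigns $\bar x_{ij}^k = \hat x_{ij}^k$ outright.

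To apply Chernoff I would rescale by setting $v_i = (c \log n)\, \bar x_{ij}^k\, w_{ij}^k / W_j^k$, so that $\mu := \E\left[\sum_i v_i\right] = c \log n \cdot \sum_i \hat x_{ij}^k w_{ij}^k / W_j^k \ge c \log n$. Assuming WLOG $w_{ij}^k \le W_j^k$ (it never pays to cover a sink with a path carrying more weight than the threshold demands), the probabilistic branch has $v_i \in \{0, w_{ij}^k/W_j^k\} \subseteq [0,1]$. The lower-tail Chernoff bound then gives $\Pr[\overline W_j^k < (1-\delta)W_j^k] = \Pr[\sum_i v_i < (1-\delta)\mu] \le \exp(-\delta^2 \mu / 2) \le n^{-\delta^2 c/2}$. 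Choosing $\delta = 1/4$ and $c \ge 64$ makes $\delta^2 c / 2 \ge 2$, so each constraint fails with probability at most $n^{-2}$, and a union bound over the $\le n$ constraints yields a total failure probability of $\le 1/n$.

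The main obstacle is the deterministic branch, where $v_i$ can be larger than $1$ and therefore lies outside the standard Chernoff regime. I would resolve this by noting that a deterministic summand contributes zero variance and can be split into $\lceil v_i \rceil$ identical deterministic variables each equal to $v_i / \lceil v_i \rceil \in [0,1]$; this splitting preserves both the sum and its expectation and lets the textbook Chernoff bound apply verbatim. The rest is bookkeeping: choose the constants ($\delta = 1/4$, $c = 64$), invoke the union bound, and conclude $\overline{W}_j^k \ge \tfrac{3}{4} W_j^k$ simultaneously for all $j,k$ with probability at least $1 - 1/n$.
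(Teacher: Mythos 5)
Your proposal is correct and follows essentially the same route as the paper: the same rescaled variables $v_{ij}^k = (c\log n)\,\bar{x}_{ij}^k w_{ij}^k / W_j^k$, the same verification that $\E[\bar{x}_{ij}^k]=\hat{x}_{ij}^k$ so $\mu \ge c\log n$, the same handling of the deterministic zero-variance summands by splitting them into $\lceil v_i\rceil$ pieces in $[0,1]$ (this is exactly the paper's modified Hoeffding--Chernoff bound, Theorem~\ref{theorem:4.2}), and the same choice $\delta=1/4$, $c\ge 64$ with a union bound over the at most $n$ weight constraints. No gaps.
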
       

Finally, we show that all fan-out constraints are met to within a factor of two. Specifically, our goal is to show that, with high probability, after the randomized rounding step the following holds:
\[ \sum_{k \in S} \sum_{j \in D} \bar{x}_{ij}^k \leq2  F_i  \; \; \forall i \in R. \]
We want to again apply the Hoeffding-Chernoff bound. Unfortunately, the random variables $\bar{x}_{ij}^k$ in the above sum are not all independent. Specifically, $\bar{x}_{ij}^k$ and $\bar{x}_{ij'}^k$, for $j \neq j'$, are not independent as both depend  on the random choices made in computing $\bar{y}^k_i$. For instance,  if $\bar{y}^k_i=1$ there is a higher probability 
 for all $j \in D$ that $\bar{x}_{ij}^k$ will be rounded to $1/c \log{n}$. However, $\bar{x}_{ij}^k$ are 
obtained by a two stage process in which first $\bar{y}_i^k$ is rounded to $0$ or $1$ and
then $\bar{x}_{ij}^k$ is rounded if and only if $\bar{y}_i^k=1$. 
We will use two claims to prove the next lemma. Let random variable $\E\left[\sum_{k \in S}\sum_{j \in D} \bar{x}_{ij}^k|\bar{y}_i^k\right] $ be defined over the probability space of  $\bar{y}_i^k$, i.e.,  the random variable $\E\left[\sum_{k \in S}\sum_{j \in D} \bar{x}_{ij}^k|\bar{y}_i^k\right]$ is a function that maps $\bar{y}_i^k$ to the value of the conditional expectation given the values of $\bar{y}_i^k$.
\begin{claim}{\it
\label{claim:4.5}
For any reflector $i$,  we have
\[
\Pr\left(\E\left[\sum_{k \in S}\sum_{j \in D} \bar{x}_{ij}^k|\bar{y}_i^k\right] >
\frac{3}{2}F_i \right) \leq \frac{1}{2n^2}.
\] }
\end{claim}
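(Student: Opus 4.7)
The plan is to write the conditional expectation as a sum of independent nonnegative summands and then apply Theorem~\ref{theorem:4.2}. Reading off step~(5) of the rounding procedure, $\E[\bar{x}_{ij}^k \mid \bar{y}_i^k] = \bar{y}_i^k\,\beta_{ij}^k$, where $\beta_{ij}^k = \hat{x}_{ij}^k$ in Case~1 ($\dot{z}_i = \dot{y}_i^k = 1$, in which $\bar{y}_i^k$ is deterministically $1$ whenever $\bar{z}_i=1$) and $\beta_{ij}^k = \hat{x}_{ij}^k/(c\log n\cdot \hat{y}_i^k)$ in Case~2. Setting $\alpha_i^k := \sum_{j}\beta_{ij}^k$, LP constraint~(4) immediately gives the crucial bounds $\alpha_i^k \leq F_i$ in Case~1 and $\alpha_i^k \leq F_i/(c\log n)$ in Case~2, and hence
\[
M_i \;:=\; \E\!\left[\sum_{k\in S}\sum_{j\in D}\bar{x}_{ij}^k \;\Bigg|\; \{\bar{y}_i^k\}_k\right] \;=\; \sum_{k\in S}\bar{y}_i^k\,\alpha_i^k.
\]

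Since $M_i=0$ on $\{\bar{z}_i=0\}$, I would condition on $\{\bar{z}_i=1\}$. Under this conditioning the $\bar{y}_i^k$'s are independent across $k$ (step~(4) flips an independent coin per stream), the Case~1 terms are identically $1$ and hence carry zero variance (though their coefficients may be as large as $F_i$), while the Case~2 terms are honest Bernoullis multiplied by the small coefficient $\alpha_i^k \leq F_i/(c\log n)$. Combining Equation~\ref{eq:xbar}, LP constraint~(3), and $\dot{z}_i \geq \hat{z}_i$ gives $\E[M_i \mid \bar{z}_i=1] = \sum_{k,j}\hat{x}_{ij}^k/\dot{z}_i \leq F_i\hat{z}_i/\dot{z}_i \leq F_i$. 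After rescaling by $(c\log n)/F_i$, the sum $\widetilde{M}_i := (c\log n / F_i)\,M_i$ has every Case~2 summand in $[0,1]$ and every Case~1 summand zero-variance of unrestricted value, so it fits the hypotheses of Theorem~\ref{theorem:4.2}.

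The main obstacle I anticipate is that the rescaled mean $\mu := \E[\widetilde{M}_i \mid \bar{z}_i=1] \leq c\log n$ can be arbitrarily small, so that the multiplicative deviation $\delta$ implicit in $(1+\delta)\mu = \tfrac{3}{2}c\log n$ may exceed the hypothesis $\delta<1$ of Theorem~\ref{theorem:4.2}. To sidestep this I would use a padding trick: append a single deterministic (zero-variance) summand of value $3(c\log n - \mu)$. Because this shifts the sum and the threshold by the same constant, the event $\{M_i > \tfrac{3}{2}F_i\}$ is unchanged; a direct calculation shows the padded sum has mean $3c\log n - 2\mu \geq c\log n$ and that the padded threshold equals exactly $\tfrac{3}{2}$ times the padded mean, independently of $\mu$. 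Theorem~\ref{theorem:4.2} then applies with $\delta = \tfrac{1}{2}$ and yields
\[
\Pr\!\left(M_i > \tfrac{3}{2}F_i \;\bigg|\; \bar{z}_i=1\right) \;\leq\; \exp\!\left(-\tfrac{c\log n}{12}\right) \;=\; n^{-c/12}.
\]
Choosing $c \geq 64$ (the same constant fixed in Lemma~\ref{lem:weightbnd}) makes this at most $\tfrac{1}{2n^2}$, and dropping the conditioning on $\bar{z}_i$ only decreases the probability, completing the claim.
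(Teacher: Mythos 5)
Your proof is correct, and its skeleton is the same as the paper's: write $\E\left[\sum_{k \in S}\sum_{j \in D} \bar{x}_{ij}^k \mid \bar{y}_i^k\right]$ as a sum over streams, use LP constraint (4) to bound each random per-stream summand by $F_i/(c\log n)$ (with the zero-variance exception for the deterministic case $\dot{z}_i=\dot{y}_i^k=1$), use constraint (3) together with $\E[\bar{x}_{ij}^k]=\hat{x}_{ij}^k$ to bound the rescaled mean by $c\log n$, and finish with Theorem~\ref{theorem:4.2}. You deviate in two places where the paper is looser, and both deviations are improvements. First, you condition on $\bar{z}_i=1$; this is actually needed, since for fixed $i$ the variables $\bar{y}_i^k$ across $k$ all share the single coin $\bar{z}_i$ and are not unconditionally independent, a point the paper's proof passes over, and since the conditional expectation vanishes on $\{\bar{z}_i=0\}$ the conditional bound implies the unconditional one, as you note. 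Second, the paper applies the upper tail of Theorem~\ref{theorem:4.2} with $\delta=\frac{c\log n}{2\mu}$, which violates the stated hypothesis $\delta<1$ whenever $\mu<\frac{c\log n}{2}$ (possible when little fractional flow crosses reflector $i$); your padding trick --- appending a deterministic zero-variance summand so that the threshold becomes exactly $\frac{3}{2}$ times the padded mean, then taking $\delta=\frac{1}{2}$ --- sidesteps this cleanly and still yields $n^{-c/12}\leq\frac{1}{2n^2}$ for the constants used elsewhere in the paper. In short, your route is the same decomposition-plus-Chernoff argument but with two genuine repairs; the paper's version is briefer and its gaps are fixable (e.g., by the $\delta\geq 1$ form of the Chernoff bound together with exactly your conditioning), but as written your argument is the more rigorous of the two.
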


\begin{proof}
In order to apply Theorem~\ref{theorem:4.2}, we set 
$$v^k_i \stackrel{\Delta}{=} \frac{c\log{n}}{F_i} \cdot \E\left[\sum_{j \in D} \bar{x}_{ij}^k|\bar{y}_i^k\right] .$$
Excluding the situation where $\dot{z}_i = \dot{y}_{i}^k  = 1$ where $\bar{y}^k_i= 1$ and $\bar{x}_{ij}^k$ is  set to $\hat{x}_{ij}^k$ with probability $1$ yielding a $v^k_i$ with zero variance, we have the following two cases.
 Either
$\bar{y}_i^k=0$ then
\[ \E\left[\sum_{j \in D} \bar{x}_{ij}^k|\bar{y}_i^k\right] =0 \]
Or, if $\bar{y}_i^k=1$ then from the cutting plane constraint (4) from the IP formulation we have
\[
\E\left[\sum_{j \in D} \bar{x}_{ij}^k|\bar{y}_i^k\right] = \sum_{j \in D}
\frac{1}{c\log{n}} \cdot \frac{\hat{x}_{ij}^k}{\hat{y}^k_i} \leq
\frac{F_i}{c\log{n}}
\]
It follows that in both these cases, 
$$v^k_i =  \frac{c\log{n}}{F_i} \cdot \E\left[\sum_{j \in D} \bar{x}_{ij}^k|\bar{y}_i^k\right] \in \left[0,1\right].$$
We know from Equation~\ref{eq:xbar} and the fan-out constraint (3) that
\[  \E\left[\sum_{k \in S}\sum_{j \in D} \bar{x}_{ij}^k\right]  = \sum_{k \in S}\sum_{j \in D} \E\left[\bar{x}_{ij}^k\right] = \sum_{k \in S}\sum_{j \in D} \hat{x}_{ij}^k \leq F_i. \]
Therefore, using the above equation and the linearity of expectation, we have
\begin{eqnarray}
 \mu & =  & \E\left[\sum_{k \in S} v^k_i \right] =  \frac{c\log{n}}{F_i}  \cdot \E\left[\sum_{k \in S}\E\left[\sum_{j \in D} \bar{x}_{ij}^k|\bar{y}_i^k\right]\right] \nonumber \\ &=&  \frac{c\log{n}}{F_i}  \cdot \E\left[\sum_{k \in S}\sum_{j \in D} \bar{x}_{ij}^k\right]  \leq \frac{c\log{n}}{F_i} \cdot  F_i  = c\log{n}.\label{eq:muupbnd}
 \end{eqnarray}
Now we use the Hoeffding-Chernoff bound of Theorem~\ref{theorem:4.2} and we get
\[
\begin{array}{l}
\Pr\left(\E\left[\sum_{k \in S}\sum_{j \in D} \bar{x}_{ij}^k|\bar{y}_i^k\right] >
\frac{3}{2}F_i\right) \\
= \Pr\left(\sum_{k \in S} v^k_i > \frac{3}{2} c \log n\right), \mathrm{multiplying\ both\ sides\ by\ \frac{c \log n}{F_i}} \\ 
\leq \Pr\left(\sum_{k \in S} v^k_i > \left(1+\frac{c \log n}{2 \mu}\right)\mu\right), \mathrm{using\ Equation~\ref{eq:muupbnd}}\\
\leq exp\left(\frac{-(c \log n)^2 \mu}{(2 \mu)^2 3}\right) \leq 
exp\left(\frac{-1}{12} c \log n\right)= n^{-c/12}. 
\end{array}
\]
By setting $c > 24$ we get
\[
\Pr\left(\E\left[\sum_{k \in S}\sum_{j \in D} \bar{x}_{ij}^k|\bar{y}_i^k\right] >
\frac{3}{2}F_i\right) < \frac{1}{2n^2}.
\]
Which concludes the proof of this claim.
\end{proof}

\begin{claim} {\it
\label{claim:4.6}
For some reflector $i$, suppose that the  $\bar{y}_i^k$ values are fixed such that the following holds: 
\[
\E\left[\sum_{k \in S}\sum_{j \in D} \bar{x}_{ij}^k|\bar{y}_i^k\right] \leq
\frac{3}{2}F_i
\]
Then for $c \geq 36$
\[
\Pr\left(\sum_{k \in S}\sum_{j \in D} \bar{x}_{ij}^k > 2F_i\right) \leq
\frac{1}{2n^2}
\]
}
\end{claim}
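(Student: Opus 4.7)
The plan is to apply the upper-tail form of the Hoeffding-Chernoff bound (Theorem~\ref{theorem:4.2}) directly to the sum $\sum_{k,j}\bar{x}_{ij}^k$, exploiting the fact that once the $\bar{y}_i^k$'s are fixed the outer layer of randomness has been conditioned out. The critical point that makes this approach work is that, given the values of the $\bar{y}_i^k$'s, the random variables $\{\bar{x}_{ij}^k : k\in S,\; j\in D\}$ are mutually independent: step~(5) of the randomized rounding makes its choice for each triple $(i,j,k)$ using a fresh independent coin, and every other source of dependence between these variables (namely, their shared dependence on $\bar{y}_i^k$ and $\bar{z}_i$) has been removed by the conditioning. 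Establishing this independence cleanly, with explicit reference to which coins are used where in the two-stage procedure, is the subtle step; everything else reduces to a Chernoff calculation.

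For the Chernoff step I would rescale and set $v_{ij}^k \stackrel{\Delta}{=} (c\log n)\,\bar{x}_{ij}^k$. Whenever $\bar{x}_{ij}^k$ is obtained probabilistically it takes values in $\{0,\,1/(c\log n)\}$, so $v_{ij}^k \in [0,1]$; whenever it is obtained deterministically (i.e., $\dot{z}_i=\dot{y}_i^k=1$), $v_{ij}^k$ has zero variance, which is exactly the extension permitted by Theorem~\ref{theorem:4.2}. Multiplying the hypothesis by $c\log n$ and using linearity of expectation yields
\[
\mu \;\stackrel{\Delta}{=}\; \E\!\left[\,\sum_{k,j} v_{ij}^k \;\Big|\; \bar{y}_i^k\,\right] \;=\; (c\log n)\cdot \E\!\left[\,\sum_{k,j} \bar{x}_{ij}^k \;\Big|\; \bar{y}_i^k\,\right] \;\leq\; \tfrac{3}{2}F_i\,c\log n,
\]
and the tail event $\sum_{k,j}\bar{x}_{ij}^k > 2F_i$ rewrites as $\sum_{k,j} v_{ij}^k > 2F_i\,c\log n$. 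Since the Chernoff bound $\exp(-\delta^2\mu/3)$ is weakest when $\mu$ is as large as possible (or, equivalently, one may pad the sum with a deterministic constant to force equality), I may take $\mu = \tfrac{3}{2}F_i c\log n$ and $\delta = 1/3$, which makes $(1+\delta)\mu$ match the threshold $2F_i c\log n$ exactly.

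Plugging into Theorem~\ref{theorem:4.2} then gives
\[
\Pr\!\left(\sum_{k,j}\bar{x}_{ij}^k > 2F_i\right) \;\leq\; \exp\!\left(-\frac{\delta^2\mu}{3}\right) \;\leq\; \exp\!\left(-\frac{F_i\,c\log n}{18}\right) \;\leq\; n^{-c/18},
\]
using $F_i \geq 1$ in the last step. Substituting $c \geq 36$ yields a bound of $n^{-2}$, which is at most $1/(2n^2)$ for all sufficiently large $n$, the same mild convention Claim~\ref{claim:4.5} relies on when passing from $n^{-c/12}$ to $1/(2n^2)$. The main obstacle is therefore the independence argument flagged in the first paragraph; the Chernoff calculation itself is routine and parallels the one in Claim~\ref{claim:4.5}, with the ratio $2F_i/(\tfrac{3}{2}F_i)=4/3$ replacing the ratio $\tfrac{3}{2}F_i/F_i=3/2$ and the Chernoff parameter $\delta$ tuned accordingly.
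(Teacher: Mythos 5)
Your proposal is correct and follows essentially the same route as the paper: condition on the fixed $\bar{y}_i^k$ values so that the $\bar{x}_{ij}^k$ become independent, rescale so the probabilistic variables lie in $[0,1]$ while the deterministic ones are absorbed by the zero-variance clause of Theorem~\ref{theorem:4.2}, and run the upper-tail Chernoff calculation to get $n^{-c/18}$. The only differences are cosmetic — you scale by $c\log n$ instead of $\frac{c\log n}{F_i}$ (invoking $F_i\geq 1$ at the end rather than in the boundedness step) and use a fixed $\delta=1/3$ with the padding trick where the paper takes $\delta=\frac{c\log n}{2\mu}$ — and the residual factor-of-2 looseness you flag is present in the paper's own argument as well.
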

\begin{proof}
For a given reflector $i$, when all $\bar{y}_i^k$ are fixed then $\bar{x}_{ij}^k$ are independent. 
We now define 
$$v^k_{ij} \stackrel{\Delta}{=} 
\frac{c\log{n}}{F_i} \cdot \bar{x}_{ij}^k.$$ 
As in Claim~\ref{claim:4.5}, excluding the case where $v^k_i$ has zero variance,  we have $v^k_{ij}  \in [0,1],$ since $\bar{x}_{ij}^k \in [0, \frac{1}{c \log n}]$, $F_i \geq 1$,  and hence
\[v^k_{ij} =
\frac{c\log{n}}{F_i} \cdot  \bar{x}_{ij}^k  \leq \frac{c\log{n}}{F_i} \cdot  \frac{1}{c \log n} \leq 1. \]
Furthermore from the first part of this claim
\begin{equation}
 \mu \stackrel{\Delta}{=} \E\left[ \sum_{k \in S} \sum_{j  \in D} v^k_{ij} \right] \leq  \frac{3}{2}F_i \cdot \frac{c\log{n}}{F_i}
=  \frac{3c\log{n}}{2}.
\label{eq:muupbnd2}
\end{equation}
Thus we apply Hoeffding-Chernoff bound of Theorem~\ref{theorem:4.2} and we get
\[ 
\begin{array}{l}
\Pr\left(\sum_{k \in S}\sum_{j \in D} \bar{x}_{ij}^k > 2F_i\right) = \\
= \Pr\left(\sum_{k \in S} \sum_{j \in D} v^k_{ij} > 2 c \log n\right), \mathrm{multiplying\ both\ sides\ by\ \frac{c \log n}{F_i}} \\ 
\leq \Pr\left(\sum_{k \in S} \sum_{j \in D} v^k_{ij} > \left(1+\frac{c \log n}{2 \mu}\right)\mu\right), \mathrm{using\ Equation~\ref{eq:muupbnd2}}\\
\leq  exp(-(\frac{c \log n}{2 \mu})^2 \mu / 3) \leq exp(-c \log n/18), \mathrm{using\ Theorem~\ref{theorem:4.2}\ and\ Equation~\ref{eq:muupbnd2}}\\
\leq  n^{\frac{-c}{18}}.
\end{array}
\]
Which by setting $c \geq 36$ concludes the proof of this claim.
\end{proof}

 For a given reflector $i$, we apply the union bound to the bounds in Claims~\ref{claim:4.5} and ~\ref{claim:4.6} to conclude that the 
 \[ \Pr\left( \sum_{k \in S} \sum_{j \in D} \bar{x}_{ij}^k > 2 F_i\right) \leq \frac{1}{2n^2} + \frac{1}{2n^2} = \frac{1}{n^2}.
 \]
 Since there are at most $n$ reflectors, the probability that some reflector exceeds the fan-out constraints by more than an factor of $2$ is at most $n \cdot \frac{1}{n^2} = \frac{1}{n}$. Thus. we can state the following lemma.
\begin{lemma}
\label{lem:fan-outbnd}
{\it
If we set $c \geq 36$ then after the randomized rounding procedure 
all fan-out constraint are met to within a factor of 2, with high probability. }
\end{lemma}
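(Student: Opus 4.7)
The plan is to combine Claims~\ref{claim:4.5} and \ref{claim:4.6} via the law of total probability and then union-bound over all reflectors. First I would fix an arbitrary reflector $i \in R$ and let $B_i$ denote the bad event $\{\sum_{k \in S}\sum_{j \in D} \bar{x}_{ij}^k > 2 F_i\}$, whose probability must be controlled. The natural decomposition is to condition on the partially rounded variables $\bar{y}_i^k$, since Claim~\ref{claim:4.6} gives a concentration statement for the \emph{conditional} distribution of $\sum_{k,j} \bar{x}_{ij}^k$ given the $\bar{y}_i^k$ values, under the hypothesis that the conditional mean is at most $\frac{3}{2} F_i$. Claim~\ref{claim:4.5}, in turn, is precisely the tool that certifies this hypothesis with high probability.

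Next, I would split on that bound. Let $A_i$ be the event that the conditional expectation $\E[\sum_{k,j} \bar{x}_{ij}^k \mid \bar{y}_i^\cdot]$ exceeds $\frac{3}{2} F_i$. Claim~\ref{claim:4.5} gives $\Pr(A_i) \leq \frac{1}{2n^2}$ (requiring $c \geq 24$). On the complementary event $\lnot A_i$, Claim~\ref{claim:4.6} gives that the conditional probability of $B_i$ is at most $\frac{1}{2n^2}$ (requiring $c \geq 36$). Because Claim~\ref{claim:4.6} holds for \emph{every} realization of $\bar{y}_i^\cdot$ satisfying its hypothesis, integrating this conditional bound over those realizations gives $\Pr(B_i \cap \lnot A_i) \leq \frac{1}{2n^2}$, and therefore
\[
\Pr(B_i) \leq \Pr(B_i \cap \lnot A_i) + \Pr(A_i) \leq \frac{1}{n^2}.
\]

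Finally, since $|R| \leq n$, a union bound over the reflectors yields $\Pr(\exists\, i \in R : \sum_{k,j} \bar{x}_{ij}^k > 2 F_i) \leq n \cdot \frac{1}{n^2} = \frac{1}{n}$, which is exactly the high-probability guarantee claimed in the lemma. Choosing $c \geq 36$ makes both of the underlying concentration bounds apply simultaneously. There is no real obstacle: the only mildly subtle point is the two-stage conditioning, which is routine precisely because Claim~\ref{claim:4.6} is stated uniformly over all admissible fixings of $\bar{y}_i^\cdot$, so no extra independence or measurability work is needed.
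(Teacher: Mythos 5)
Your proposal is correct and follows essentially the same route as the paper: the paper likewise combines Claim~\ref{claim:4.5} (conditional mean at most $\frac{3}{2}F_i$ except with probability $\frac{1}{2n^2}$) with Claim~\ref{claim:4.6} (conditional tail bound given an admissible fixing of the $\bar{y}_i^k$), adds the two failure probabilities to get $\frac{1}{n^2}$ per reflector, and then union-bounds over the at most $n$ reflectors. Your explicit decomposition $\Pr(B_i) \leq \Pr(A_i) + \Pr(B_i \cap \lnot A_i)$ is just a slightly more careful phrasing of the paper's step, with the same choice $c \geq 36$ making both claims apply.
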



\subsection{Rounding by modified GAP
approximation}
\label{sec:gap}
As the last step in the rounding process, we describe how to
convert the $\bar{x}_{ij}^k$  produced by the randomized rounding step to an integral solution. This solution will
violate the fan-out constraints by an additional constant factor, so that all weight constraints will be met to within a combined factor of $1/4$ with high probability.
As before  let $\bar{C}$ denote the cost value achieved by solution $(\bar{z}_i, \; \; \bar{y}_i^k,  \; \; \bar{x}_{ij}^k)$ after the randomized rounding step.
\begin{figure}[htbp]
        \centerline{\psfig{figure=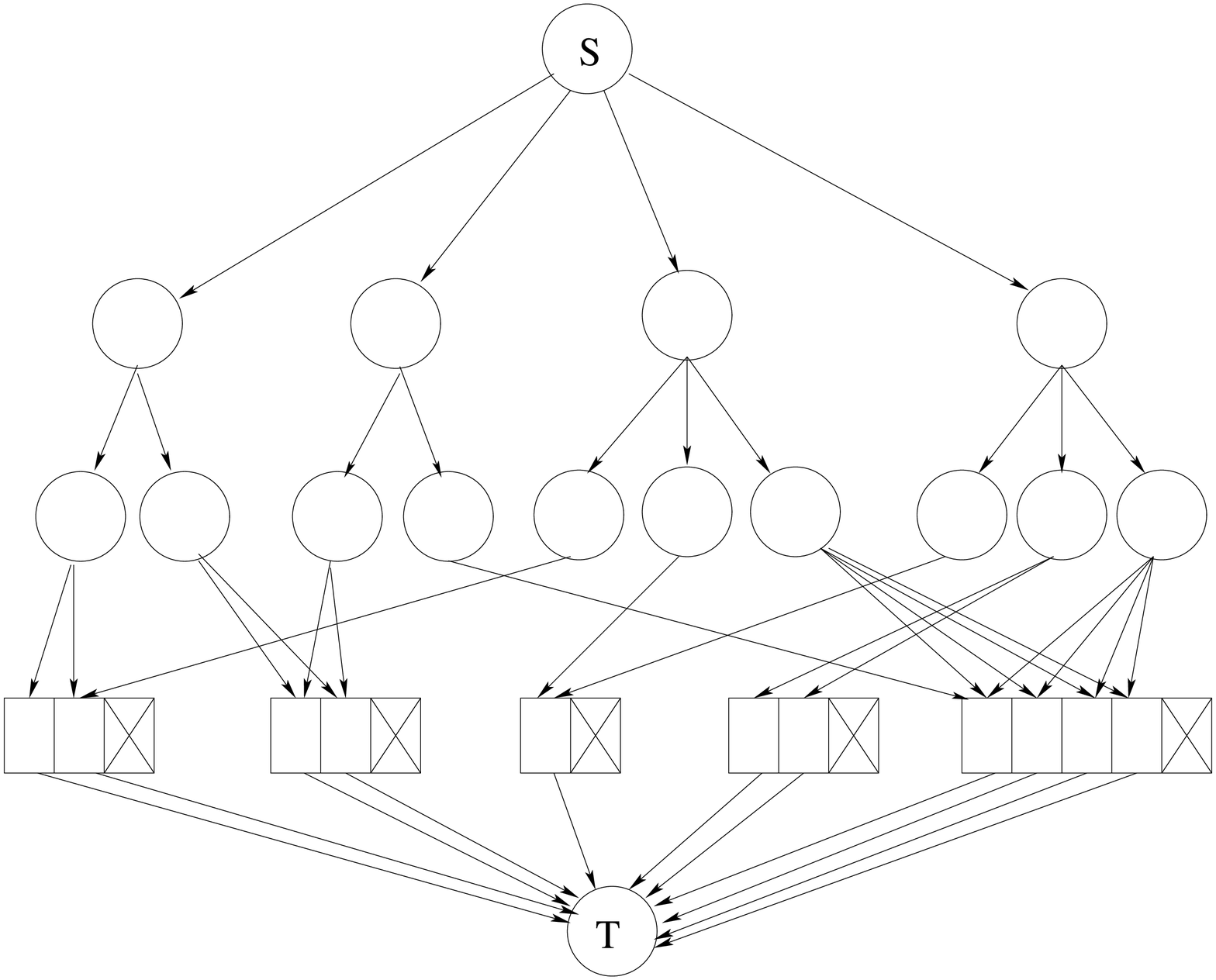,height=2.2in,width=3.2in}}
        \caption{Converting $\bar{x}_{ij}^k$ to an integral solution using a GAP flow graph}
\label{fig:conversion}
\end{figure}
Using the values $\bar{x}_{ij}^k$, we create a 
five-level ``GAP flow graph'' with edge capacities and edge costs that will help us perform the rounding (See Figure~\ref{fig:conversion}). The flow graph consists of a single node labeled $\bf{S}$ in the first level and a single node labeled $\bf{T}$ in the fifth level.  At the second level, we create a vertex for each reflector in $R$. The node $\bf{S}$ is connected to  each reflector $i$ in the second level with an edge of capacity equal to $2 F_i$ and zero cost, where $F_i$ is the maximum fan-out. The third level consists of nodes representing all (reflector i, sink j) pairs  
such that $\bar{x}_{ij}^k \neq 0$. We add an edge with capacity $1$ and zero cost from a node in the second level representing reflector $i$ to all nodes in the third level $(i,j)$ such that  $\bar{x}_{ij}^k \neq 0$.
In the fourth level we represent each sink $j$ as a collection of boxes where the number of boxes is equal to
\[ s_j = \left\lceil 2 \sum_{i \in R}  \bar{x}_{ij}^k \right\rceil. \]
Note that since each sink receives only one stream fixing $j$ automatically fixes $k$. We order the $w_{ij}^k$ for each sink in non-increasing order.  That is, WLOG we assume that
\[ w_{1j}^k \geq w_{2j}^k \geq \ldots \]
This ordering of weights induces a corresponding ordering on the nonzero $\bar{x}_{ij}^k$ values.
With each box we associate an interval of weights in this ordering. The corresponding $\bar{x}_{ij}^k$ values are also similarly associated with that box.  In associating weights and values, we ensure that the $\bar{x}_{ij}^k$ values associated with each box sum to exactly $1/2$, except possibly the last box. We associate weights and values with each box using the following process.
Let $t$ be the first index for which
\[ \sum_{i=1}^t \bar{x}_{ij}^k > \frac{1}{2}.\]
 We associate with the first box the weight interval $[w_{1j}^k,w_{tj}^k]$ and the corresponding portion of $\bar{x}_{ij}^k$ values that add up to exactly $1/2$. Next, we compute an index $r \geq t$ and associate an interval  with the second box as follows. Let $x'= \sum_{i=1}^t \bar{x}_{ij}^k -1/2$.
If $x'>1/2$ we set $r=t$ and associate with the second box the weight interval $[w_{tj}^k,w_{tj}^k]$ and the portion of $\bar{x}_{tj}^k$ of value $1/2$ . Otherwise we 
set $r$ to be the smallest index such that
\[ x'+\sum_{i=t+1}^r \bar{x}_{ij}^k > \frac{1}{2}.\] And, we associate the second box the weight interval $[w_{tj}^k,w_{rj}^k]$ and the corresponding portion of the $\bar{x}_{ij}^k$ values, $t \leq i \leq r$, that total to $1/2$. 
We continue with this process until we
associate all the boxes with weight intervals, except possibly the last box that may not be associated with a weight interval. Note the above algorithm ensures that the total of the $\bar{x}_{ij}^k$ values associated with each box is exactly $1/2$, with the possible exception of the last box.  We then eliminate the last box  for each sink.  Using Lemma~\ref{lem:weightbnd} and our assumption that $w_{ij}^k \leq W_j^k$, we can conclude that for all sinks $j$, with high probability, $$\sum_{i\in R} \bar{x}_{ij}^k \geq \sum_{i \in R} \frac{w_{ij}^k}{W_j^k} \cdot \bar{x}_{ij}^k \geq 3/4.$$ Thus, since the $\bar{x}_{ij}^k$'s add up to more than $1/2$ for each sink, there is at least one box that remains for each sink after eliminating the last box.
Using the weight interval assignments, we connect each node in the third level representing a (reflector i, sink j) pair
to the corresponding set of boxes that represent sink $j$ on the fourth level. Let $b$ be a box in the set of boxes that represents sink $j$ that is assigned the weight interval $[w_{tj}^k,w_{rj}^k]$ for some $t \leq r$. For each $t \leq i \leq r$, we place an edge with  capacity
1/2 and cost $c^k_{ij}$ between the third-level node representing pair $(i,j)$ and the fourth-level node representing box $b$. Finally, we connect all the (non-eliminated) boxes to node $\bf{T}$ with edges of
capacity 1/2  and zero cost. 

Using our construction, a maximum flow can be routed from node $\bf{S}$ to node $\bf{T}$ as follows. We start by routing a flow of $\frac{1}{2}$ from each each box in the fourth level to $\bf{T}$. We extend these flows to the third level by using $\bar{x}_{ij}^k$ values associated with each box as the flow values. This flow can be further extended to the second and first levels of the graph in the obvious way, following flow conservation at each node and the capacity constraints. From Lemma~\ref{lem:fan-outbnd}, we know that the capacity of the $2 F_i$ on the edges from first to the second level of the graph are not violated, with high probability.  Note that this flow saturates the cut of edges that come into $\bf{T}$ since each of these edges have a flow of $1/2$, hence the flow is maximum. The maximum flow routed in this fashion may have fractional flow values on the edges from the third to the fourth level and these values correspond to the $\bar{x}_{ij}^k$ values. This flow has a total cost of at most $\bar{C}$, since eliminating boxes can only reduce the total flow and hence the total cost. However, since all edge capacities are either integral or $\frac{1}{2}$,  there
exists a minimum cost maximum flow with flow variables that equal only 0, 1/2 or 1 that has a total cost that is at most the cost of the original max flow. Thus, the new min-cost maximum flow has a total cost of 
at most $\bar{C}$.  We find such a maximum flow with minimum cost using a known polynomial time algorithm \cite{AhujaMO93} and that provides us new flow values that we define to be $\tilde{x}_{ij}^k$ that equal either 0, 1/2, or 1.

Now, we show that the minimum cost maximum flow that we constructed satisfies at least a quarter of the weight threshold demanded by each sink. For each stream with source $k$ and sink $j$,  we know from Lemma~\ref{lem:weightbnd} that $\overline{W}_j^k \geq \frac{3}{4} W_j^k$, with high probability. Recall that for each sink $j$, we created $s_j = \left\lceil 2 \sum_{i \in R}  \bar{x}_{ij}^k \right\rceil$ boxes and potentially eliminated the last box.  In any maximum flow, each (uneliminated) box must receive exactly $1/2$ unit flow so that the edges from level 4 to $\bf{T}$ are saturated. Let $\min(w_{\ell j}^k)$ and  $\max(w_{\ell j}^k)$ denote the smallest and the largest weights respectively assigned to the $l^{th}$ box, for $1 \leq l \leq s_j$.
For the constructed flow, $$\sum_{i \in R} w_{ij}^k \cdot \tilde{x}_{ij}^k  \geq \frac{1}{2} \sum_{\ell=1}^{s_j-1} \min(w_{\ell j}^k),$$ since the last box is potentially eliminated and each uneliminated box receives a flow of $1/2$.  Note that 
\[
\begin{array}{l}
\frac{1}{2} \sum_{\ell=1}^{s_j-1} \min(w_{\ell j}^k)  
\geq \frac{1}{2} \sum_{\ell=2}^{s_j} \max(w_{\ell j}^k) \mathrm{,\ since\ the \ weights\ are\ in\ nonincreasing\ order}\vspace{3mm} \\ 
= \frac{1}{2} \sum_{\ell=1}^{s_j} \max(w_{\ell j}^k) - \frac{1}{2} w_{1j}^k \geq   \sum_{i \in R} w_{i j}^k 
\bar{x}_{ij}^k - \frac{1}{2} w_{1j}^k \geq \overline{W}_j^k - \frac{1}{2} W_j^k \mathrm{,\ since\ } w_{1j}^k \leq W_j^k \vspace{3mm}\\
\geq \frac{3}{4} W_{j}^k  - \frac{1}{2} W_j^k = \frac{1}{4} W_{j}^k\textrm{,\ using Lemma~\ref{lem:weightbnd}}.
\end{array}
\]
Thus, at least a quarter of the weight threshold of each sink is met.

The flow that we have constructed thus far is not 0-1, since some  flow values can equal $1/2$. To rectify this, we double all $\tilde{x}_{ij}^k=1/2$. Clearly, after the doubling we continue to satisfy at least a quarter of  the weight threshold demanded by each sink. However, we might violate fan-out constraints by at most an additional factor of two. Thus, in combination with Lemma~\ref{lem:fan-outbnd}, this means that we meet all  fan-out constraints to within a total factor of at most $4$. We also at most double the  cost $\tilde{C}$ associated with $\tilde{x}_{ij}^k$, but that can be absorbed in the $O(\log{n})$ factor on the cost derived in Lemma~\ref{lem:costbnd}. This concludes the rounding of the last fractional variables of our solution. We get the desired 0-1 solution.  Note that from Theorem~\ref{thm:lowerbound}, we know that the approximation ratio achieved by {\tt Approx} is the best possible to within constant factors.

\subsection{Putting it all together}

We will now calculate the running time of our approximation algorithm {\tt Approx}. First, we will determine the number of variables and constraints in the LP (or, the corresponding IP). Note that we replicated the sources and sinks so that each sink (resp., source) receives (resp., originates) exactly one stream. Therefore, $|S| \leq |D|$ and the total number of variables of the form $x^k_{ij}$ is $|R| \cdot |D|$. 
Thus, the LP (or, IP) has $O(|R|\cdot|D|)$ variables and $O(|R|\cdot|D|)$ constraints.  Since the LP can be solved in time polynomial in the number of variables and constraints, the first step of finding the fractional LP solution takes time polynomial in $O(|R|\cdot|D|)$. The randomized
rounding step takes at most as many iterations as the number of LP variables, so its running time is  $O(|R|\cdot|D|),$ which is dominated by the time for the LP solution step. The  GAP flow graph has $O(|R|\cdot|D|)$ nodes and edges. Thus, the running time of solving the network flow problem on the  GAP flow graph is also polynomial in
$O(|R|\cdot|D|)$.  Thus, {\tt Approx\/} is an efficient algorithm for solving the overlay network construction problem with a run time that is polynomial in $O(|R| \cdot |D|)$.

Putting it all together, we can state the following main theorem.

\begin{theorem}
\label{thm:algfinal}
Algorithm {\tt Approx} solves the overlay network construction problem by constructing paths $\Pi$ for simultaneously routing each stream from its source to its subset of sinks such that at least $\frac{1}{4}^{th}$ of the weight threshold is met for each stream and sink, and the reflector fan-out constraints are met to within a factor of $4$, with high probability.  Further, the expected cost of the solution produced by {\tt Approx} is within a factor of $O(\log n)$ of optimal. {\tt Approx} runs in time polynomial in its input size of $O(|R|\cdot|D|)$. Further, the approximation ratios achieved by {\tt Approx} are the best achievable by any polynomial time algorithm (to within constant factors).
\end{theorem}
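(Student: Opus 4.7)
The plan is to assemble the theorem by chaining together the three rounding-stage lemmas already proven (Lemmas 3.2, 3.4, 3.6) with the box-analysis of the GAP step in Section 3.2, and then invoking Theorem 2.1 for the lower bound. The argument proceeds stage by stage through the algorithm, tracking how each of the three quantities of interest (cost, weight coverage, fan-out) evolves.

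First I would fix the constant $c$ so that every probabilistic guarantee holds simultaneously: since Lemma 3.4 requires $c \geq 64$ and Lemma 3.6 requires $c \geq 36$, taking $c = 64$ suffices. With this choice, a union bound applied to the failure events in Lemmas 3.4 and 3.6 shows that with high probability after the randomized rounding step every weight constraint is met to within a factor of $3/4$ and every fan-out constraint is met to within a factor of $2$. In parallel, Lemma 3.2 gives $\E[\bar{C}] \leq c \log n \cdot C^{OPT}$; this expectation bound is established separately since it uses only linearity of expectation on the randomized rounding, not the high-probability event.

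Next I would track the effect of the GAP rounding of Section 3.2. The edge capacities $2F_i$ out of the source in the GAP flow graph are precisely those justified by Lemma 3.6, so on the high-probability event a feasible max flow realizing the $\bar{x}_{ij}^k$ values exists and has cost at most $\bar{C}$. A min-cost max-flow with flow values in $\{0,1/2,1\}$ then has cost at most $\bar{C}$ as well. The box construction guarantees that for each sink the integral/half-integral flow contributes weight at least $\overline{W}_j^k - \tfrac12 W_j^k \geq \tfrac34 W_j^k - \tfrac12 W_j^k = \tfrac14 W_j^k$, combining Lemma 3.4 with the bound $w_{1j}^k \leq W_j^k$. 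Doubling the half-integral values to obtain a $\{0,1\}$-valued solution preserves the $\tfrac14$ weight guarantee, at most doubles cost (still $O(\log n) \cdot C^{OPT}$ in expectation), and at most doubles fan-out, yielding the factor-$4$ fan-out bound when combined with Lemma 3.6.

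The running-time claim is then immediate from Section 3.3: the LP has $O(|R| \cdot |D|)$ variables and constraints and is solvable in polynomial time; the randomized rounding is a single linear pass; and the GAP graph has $O(|R| \cdot |D|)$ nodes and edges, so a min-cost max flow on it is computable in polynomial time by standard algorithms \cite{AhujaMO93}. Finally, the optimality of the approximation ratios (to within constants) follows from Theorem 2.1, whose set-cover reduction establishes an $\Omega(\log n)$ inapproximability lower bound on cost matching our upper bound. The one place demanding real care — and hence the main obstacle — is keeping the two high-probability events (the factor-$3/4$ weight bound and the factor-$2$ fan-out bound from randomized rounding) consistent with the subsequent deterministic GAP rounding, so that the final factors of $\tfrac14$ on weight and $4$ on fan-out can be stated jointly with high probability via a single union bound.
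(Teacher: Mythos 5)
Your proposal is correct and follows essentially the same route as the paper: fix $c \geq 64$ so that the cost bound (expectation, via linearity), the factor-$\tfrac{3}{4}$ weight bound, and the factor-$2$ fan-out bound hold, union-bound the two high-probability events, then push them through the GAP rounding (cost at most doubled and absorbed into $O(\log n)$, weight degraded to $\tfrac{1}{4}W_j^k$ via the box argument with $w_{1j}^k \leq W_j^k$, fan-out doubled to $4$), and finish with the $O(|R|\cdot|D|)$ running-time accounting and the set-cover hardness theorem for tightness. The only blemish is bookkeeping: the fan-out lemma and the inapproximability theorem you cite are mislabeled (they are Lemma~\ref{lem:fan-outbnd} and Theorem~\ref{thm:lowerbound}, not ``Lemma 3.6'' and ``Theorem 2.1''), but the content invoked is the right one.
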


Here is some intuition of what the weight guarantee achieved by {\tt Approx} means in our context. Since we
started by converting probabilities into weights using logarithms,  guaranteeing at least $\frac{1}{4}^{th}$ the weight threshold translates to guaranteeing at most the fourth root of the specified end-to-end packet loss  threshold. For example if we want end-to-end packet loss of at most $\Phi_j^k=0.0001$, our solution is guaranteed to provide an end-to-end loss probability of at most $0.1$. Our empirical studies in Section~\ref{sec:implement} indicate, however, that the extent of weight constraint violations can be much less in practice than the theoretical guarantees provided above.

 \section{Extensions and modifications}
\label{sec:extensions}

In this section, we examine extensions and modifications of the overlay network construction problem that are relevant for practical applications.

\subsection{Minimizing transmission costs}
Perhaps the most important special case of the overlay network construction problem is the common situation where the reflectors are considered  to be``free'' and the operating cost is entirely dictated by the bandwidth costs for transmitting streams from their sources to their respective sinks.  In this formulation, the fixed cost of utilizing a reflector is considered ``sunk'' cost and the overlay network is periodically reconstructed  to minimize transmission costs while obeying capacity constraints and maintaining quality of service. To model this situation, we can set the cost $r_i = 0$, for all $i \in R$. Further, our cost objective function can be simplified to 
\[ \sum_{i \in R}\sum_{ k \in S} \sum_{ j \in D} C_{ij}^k x_{ij}^k, \]
where $C_{ij}^k$ captures the entire bandwidth cost of transmitting  the stream from its source $k$ to sink $j$ via reflector $i$, i.e., 
\[C_{ij}^k = c_{ki}^k  + c_{ij}^k.\]
Note that since we are no longer considering the capital expenditure cost of purchasing reflectors, the overlay construction problem no longer contains set cover as a special case, suggesting perhaps that better approximation ratios are possible. In fact, we now show that our algorithm {\tt Approx} achieves an approximation ratio of $2$ for the above simplified cost objective function, rather than the approximation ratio of $O(\log n)$ for the general case. To see why, let $C^{OPT}$ be the optimal transmission cost. From Equation~\ref{eq:xbar}, we conclude that the expected cost after the randomized rounding step  equals the cost of the LP solution, which in turn is at most $C^{OPT}$. The GAP rounding step increases the cost by at most a factor of $2$, hence the cost of the solution produced by {\tt Approx} is at most $2 C^{OPT}$. Thus, we can state the following theorem.
\begin{theorem}
\label{thm:algtransmission}
In the special case where transmission costs are minimized, Algorithm {\tt Approx} produces a solution with expected cost that is within a factor of $2$ of optimal. Further, at least $\frac{1}{4}^{th}$ of the weight threshold is met for each stream and sink, and the reflector fan-out constraints are met to within a factor of $4$, with high probability.  
\end{theorem}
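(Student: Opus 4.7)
The plan is to track the cost objective through the two rounding phases of \texttt{Approx} when the reflector fixed costs are zero, and then invoke the previously established Lemmas for the weight and fan-out guarantees, which are unchanged by the specialization of the cost function.

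First, I would observe that with $r_i = 0$ the cost objective collapses to $\sum_{i,k,j} C_{ij}^k x_{ij}^k$, which is a simple linear functional of the $x$-variables only. Let $\hat{C}$ denote the optimal LP value. Since the LP is a relaxation of the IP, $\hat{C} \leq C^{OPT}$. Next, applying Equation~\ref{eq:xbar} (namely $\E[\bar{x}_{ij}^k] = \hat{x}_{ij}^k$) together with linearity of expectation to the simplified objective, the expected cost $\E[\bar{C}]$ immediately after the randomized rounding step of Section~\ref{sec:randround} equals $\hat{C}$, and hence is at most $C^{OPT}$. This is the key place where dropping the $\sum r_i z_i$ term helps: without that term, the $c \log n$ blow-up from $\E[\bar{z}_i]$ in Equation~\ref{eq:zbar} no longer appears in the cost bound.

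Second, I would track the effect of the GAP rounding phase of Section~\ref{sec:gap} on the cost. The analysis there shows that the min-cost maximum flow in the GAP flow graph has cost at most $\bar{C}$, and the final doubling step (converting half-integral flows to $0/1$) at most doubles this cost. Therefore the expected final cost $\E[\tilde{C}]$ is at most $2 \hat{C} \leq 2 C^{OPT}$, establishing the factor of $2$ claim.

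Third, for the weight and fan-out guarantees, I would simply observe that Lemmas~\ref{lem:weightbnd} and~\ref{lem:fan-outbnd}, together with the GAP-rounding analysis showing that fan-out doubles once more and that at least $1/4$ of each weight threshold is retained after the box construction and doubling, all depend only on the values of $\bar{x}_{ij}^k$, $\bar{y}_i^k$, $\bar{z}_i$ and not on the cost coefficients. Hence they carry over verbatim to this special case, yielding the desired $\tfrac{1}{4}$-weight and factor-$4$ fan-out guarantees with high probability.

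The main (and only) subtle step is the first one: verifying that the reason the approximation improves from $O(\log n)$ to $2$ is precisely the disappearance of the reflector-opening term, so that the $c \log n$ scaling used in steps (1)--(2) of the randomized rounding to boost $\hat{z}_i$ and $\hat{y}_i^k$ no longer feeds into the cost bound. Everything else is bookkeeping over the chain of inequalities $\E[\tilde{C}] \leq 2\, \E[\bar{C}] = 2\hat{C} \leq 2 C^{OPT}$.
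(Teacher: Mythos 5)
Your proposal is correct and follows essentially the same route as the paper: use $\E[\bar{x}_{ij}^k]=\hat{x}_{ij}^k$ to show the expected cost after randomized rounding equals the LP cost (at most $C^{OPT}$) once the reflector term $\sum_i r_i z_i$ is gone, then note that the GAP min-cost max-flow step preserves cost and the final doubling of half-integral flows at most doubles it, while the weight and fan-out guarantees of Lemmas~\ref{lem:weightbnd} and~\ref{lem:fan-outbnd} and the GAP analysis are unaffected by the cost coefficients. Your observation that the $c\log n$ factor entered only through $\E[\bar{z}_i]$ and $\E[\bar{y}_i^k]$ is precisely the paper's (implicit) reason the bound improves to $2$.
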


Is it possible to achieve an even better approximation ratio in the special case where only transmission costs are minimized? We show that the overlay network design problem for minimizing transmission cost is still $\NP$-hard. 
However, our argument does not forbid the existence of better approximation schemes for the problem, for example a PTAS,  which remains open. 

\begin{theorem}
\label{thm:NPHardtransmission}
The overlay network design problem for minimizing transmission cost is NP-Hard.
\end{theorem}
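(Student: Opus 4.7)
My plan is to reduce from the Partition problem, which is well-known to be NP-complete. Given positive integers $a_1, \ldots, a_n$ with $\sum_i a_i = 2B$, I would construct an overlay network instance with a single source $k$, a single sink $j$, and $n$ reflectors $1, \ldots, n$, in which minimizing the transmission cost exactly captures the corresponding min-knapsack decision.

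Concretely, I would assign loss probabilities $p_{ki} = 0$ on every source-to-reflector link and $p_{ij} = 2^{-a_i}$ on every reflector-to-sink link. With the weight transformation $w_{ij}^k = -\log_2(p_{ki} + p_{ij} - p_{ki} p_{ij})$ (working in base 2 so the values come out as integers), each candidate path $(k, i, j)$ then has weight exactly $a_i$. I would set the end-to-end loss threshold $\Phi_j^k = 2^{-B}$, so that $W_j^k = B$; the transmission cost $c_{ki}^k = 0$ and $c_{ij}^k = a_i$, so that $C_{ij}^k = a_i$; the reflector fixed cost $r_i = 0$; and the fan-out $F_i = 1$ (which suffices since there is only one sink). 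This construction is clearly polynomial in the size of the Partition instance.

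Under this construction, the weight constraint (5) in the IP formulation becomes $\sum_i a_i x_{ij}^k \geq B$ and the transmission-cost objective becomes $\sum_i a_i x_{ij}^k$. Therefore the minimum transmission cost equals $\min\{\sum_{i \in T} a_i : T \subseteq \{1, \ldots, n\},\ \sum_{i \in T} a_i \geq B\}$. Because the $a_i$ are integers, this minimum equals $B$ if and only if some subset of $\{a_1, \ldots, a_n\}$ sums to exactly $B$, i.e., iff the Partition instance is a yes-instance; otherwise every feasible subset strictly overshoots $B$. Hence the decision question ``is the minimum transmission cost at most $B$?'' is NP-hard, proving the theorem.

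I do not anticipate a genuine technical obstacle here; the reduction simply exploits the fact that constraint (5) in the IP formulation naturally encodes a min-knapsack instance once the loss model is used to manufacture arbitrary integer weights. The only mild subtlety is aligning the base of the logarithm in the weight transformation with the choice of the $p_{ij}$ values so that the weights come out as the integers $a_i$ exactly; this is handled cleanly by working in base 2, though any other base would work after a harmless rescaling.
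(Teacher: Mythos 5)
Your reduction is correct, but it takes a genuinely different route from the paper's. The paper reduces from subset sum using one source, \emph{two} sinks $A$ and $B$, fan-out-one reflectors with equal weights to both sinks, and all costs zero: there the fan-out constraints force a partition of the reflectors between the two sinks, so even \emph{deciding feasibility} (can both weight thresholds be met at all?) is NP-hard, which a fortiori makes the cost-minimization problem NP-hard. You instead use a single source and a single sink, make feasibility trivial (taking all reflectors always works, and the fan-out bound is non-binding), and tie the transmission costs to the path weights so that constraint (5) together with the objective becomes a covering-knapsack: the optimum equals the smallest subset sum that is at least $B$, which is $B$ exactly when the Partition instance is a yes-instance. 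This locates the hardness in the interaction between the cost objective and the QoS threshold rather than in capacity-driven infeasibility, and it shows hardness persists even in the simplest possible topology with slack capacity; the paper's version, by contrast, shows that with binding fan-outs even the feasibility question is hard, independent of any cost structure. One shared subtlety worth acknowledging: to realize integer path weights $a_i$ you set $p_{ij}=2^{-a_i}$, and if loss probabilities are the literal binary-encoded input these numbers have $\Theta(a_i)$ bits, which is exponential in the bit-length of a Partition instance (both reductions are from weakly NP-hard problems). The paper's own proof sidesteps this by assigning the weights $w^k_{i,A}=w^k_{i,B}$ directly as inputs, and your argument is on the same footing if you likewise regard the weights (equivalently, the log-loss values) as the given data, so this is a shared convention rather than a gap in your proof; it also explains why, as the paper notes, only weak NP-hardness is established and an FPTAS-style result is not ruled out.
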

\begin{proof}
We show that a simple restriction of the overlay network design problem with zero reflector costs yields the subset sum problem that is $\NP$-hard~\cite{GJ79}. Suppose that we have just a single source $k$ that originates a single stream and just two sinks $A$ and $B$ that demand that stream. Further, let each reflector have a fan-out constraint of one (i.e., it can serve only a single sink).  In addition, assume that for each reflector $i$, the weights $w^k_{i,A} = w^k_{i,B}$, i.e., each reflector has the same weight to either sink. Now the question of  ``Is there a way to provide weight at least $W^k_A$ to sink $A$ and
at least weight $W^k_{B}$ to sink $B$ without violating any fan-out constraints?'' is equivalent to  ``Is there
a way to partition the set of reflectors into two groups such the sum of the weights of the first group assigned to $A$ is at least $W^k_A$ and the sum of weights of the second group assigned to $B$ is at least $W^k_B$?''. The latter question is equivalent to  the subset sum problem that is  $\NP$-hard. However, it is worth pointing out that the subset sum problem can be solved in polynomial time provided
that there are only a constant number of distinct values for the weights. It also has a FPTAS (Fully-Polynomial Time Approximation Scheme) for any number
of distinct weight values \cite{CormenLRS2009}.
 \end{proof}
\subsection{Bandwidth capacity of reflectors}
\label{sec:bandonreflectors}     
An important constraint in practice is to bound the aggregate bandwidth (in bits per second) that a reflector  can push  to the sinks, rather than just the fan-out. This capacity bound is due to both hardware and software limitations of the reflector machines. We can introduce capacity bounds  by introducing the following constraints (3') and (4') to the IP formulation of Section~\ref{subsec:ipf}:
\[ 
\begin{array}{l}
(3') \hspace{10mm} \sum_{k \in S} \left (B^k \cdot \sum_{j \in D} x_{ij}^k \right) \leq F'_i z_i \; \; \forall i \in R \vspace{3mm}\\ 
(4') \hspace{10mm}  B^k \cdot \sum_{j \in D} x_{ij}^k \leq F'_i y_i^k \; \; \forall i \in R,\; \forall k \in S
\end{array}
\]
Here $B^k \in {\R^{+}}$ can be viewed as the encoded bandwidth (in bits per second) for the stream that originates at source $k$. Thus, the LHS of constraint (3') equals the total bandwidth sent by reflector $i$ and $F'_i$ is the bandwidth capacity bound that represents the maximum bandwidth (in bits per second) that reflector $i$ can send.  In this case, with small modifications, both our algorithm and our analysis hold. Specifically, a very similar argument to the one that we used to bound fan-out can be used to bound the bandwidth capacity instead.





\subsection{Enhancing reliability for correlated failures}
\label{sec:colconstraints}
Data centers hosted on the same ISP are more likely to fail simultaneously in a correlated fashion than data centers hosted on different ISPs. Such a failure is typically caused by some catastrophic event impacting that ISP. In the worst case, such a failure can render machines hosted in the failed ISP's data centers unreachable from rest of the Internet. Therefore, in a situation where there is a need to employ multiple paths between a source and a sink using multiple reflectors, we would prefer to use reflectors located in as many {\em distinct\/} ISPs as possible\footnote{ISP failures also impact entry points and edge servers and can be tackled through other means. One can build in fault tolerance for the entry points by automatically reassigning an alternate entry point to avoid the failed one  \cite{KontothanasisSWHKMSS04}. Further, one can move end users away from the failed  edge servers using mapping \cite{NygrenSS10}. In this paper, we focus only on the impact on reflectors in the ``middle-mile''.}. In other words, we would like to restrict the number copies that a sink receives from reflectors hosted on the same ISP. This provides an additional level of fault tolerance against coordinated failures that impact the transmission of the stream from a source to its sink. We can model this additional constraint by assigning colors to each reflector such that a reflector's color represents the ISP where it is hosted. That is, we partition $R$ by color into disjoint sets so that $R=R_1 \cup R_2 \cup R_3 \ldots \cup R_m$, where $m$ is total number of colors. 
Then, we have the following ``color constraints'' added to the IP formulation of Section~\ref{subsec:ipf}:
\[ (7) \hspace{10mm} \sum_{i \in R_{\ell}} x_{ij}^k \leq 1. \; \; \forall j \in D,\; \forall k \in S, \; 1 \leq \ell \leq m.\]
The purpose of these constraints is to break the reflectors into disjoint groups and ensure that no group is delivering more than one copy of the 
stream into a sink. 

We incorporate the additional color constraints in our algorithm for constructing overlay networks as follows. We first solve the LP relaxation with the color constraints. As before, we perform two steps of rounding to obtain an integral solution from the fractional LP solution. The first randomized rounding step in 
Section~\ref{sec:randround} can be carried out with no modifications. However, the final step of rounding
$\bar{x}_{ij}^k$ using the GAP flow graph in Section~\ref{sec:gap} requires modification. The color constraint restriction
introduces a new type of constraint in the GAP flow graph (Figure~\ref{fig:conversion}). This constraint bounds the total flow along some subsets of the edges between the second and third level of the GAP flow graph. 
Such constraints can be introduced into any flow problem. On a general graph this problem is called the ``Integral Flow
with Bundles Problem'' and is known to be $\NP$-hard~\cite{GJ79}.
A key issue that makes the problem more complex is that the introduction of a color constraint can create a gap between the optimal fractional and integral flows,
even in the more restricted leveled graph case that we are interested in. We provide a simple example in Figure~\ref{SetConstraint} to demonstrate this point. The capacities for all edges are as shown in the Figure~\ref{SetConstraint}. 
Suppose there is an additional set constraint that the set of edges $\{AB,PQ\}$ has
a capacity of 3.
\begin{figure}[t]
        \centerline{\psfig{figure=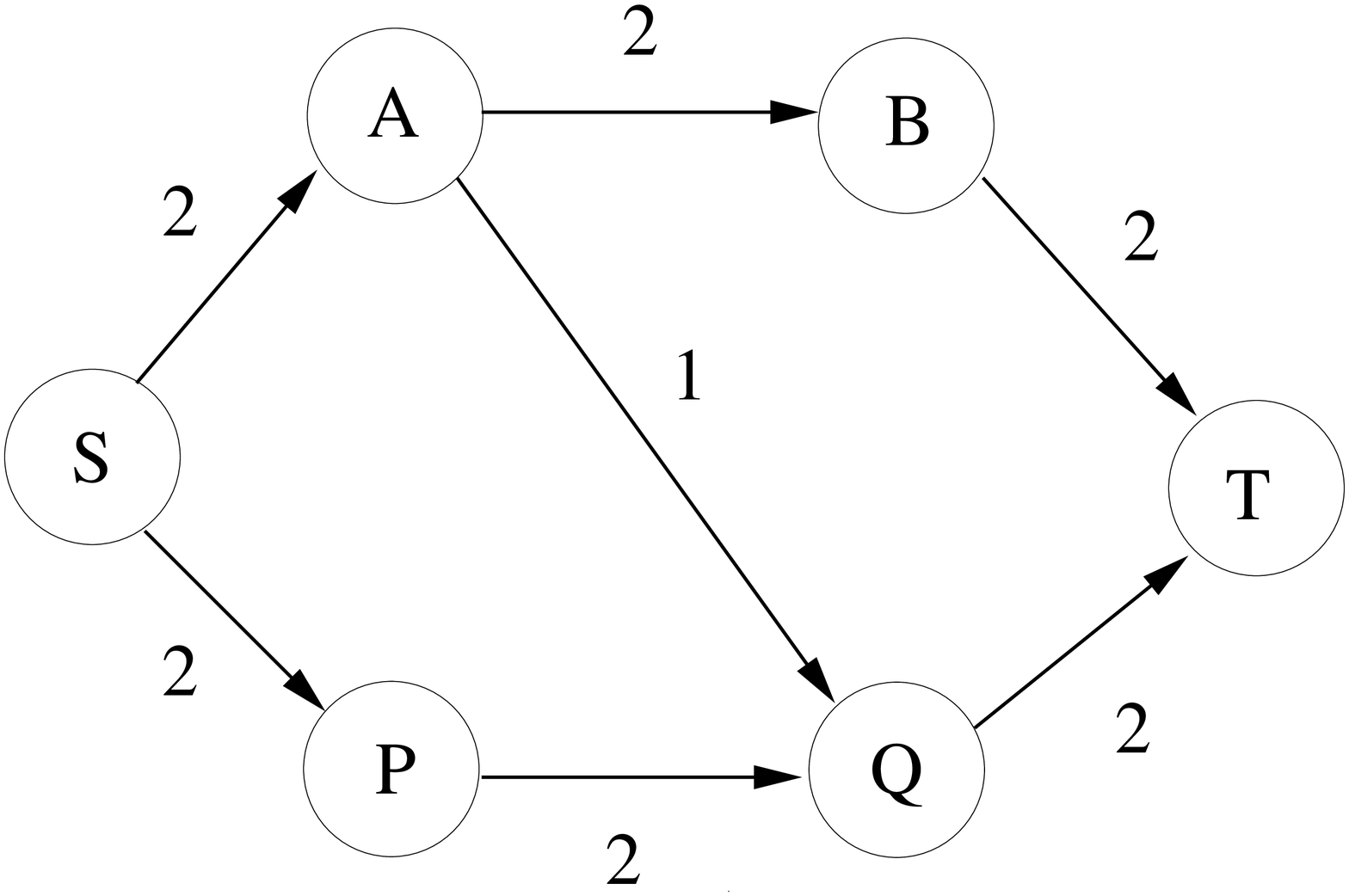,height=2.2in,width=3.2in}}
        \caption{Example of an integral flow with bundles problem.}
\label{SetConstraint}
\end{figure}
Clearly the max integral flow is only 3. However one can achieve a
fractional max flow of 3.5 units, by sending 2 units of flow on $SA$
and 1.5 units on edge $SP$ then splitting the flow at $A$ by sending .5
units on edge $AQ$ and the rest on $AB$.  This phenomenon will prevent us from
applying GAP directly, as we cannot always find an integral flow that is at
least as good as the fractional flow, necessitating a different approach.

Our approach finds an integral solution within a constant factor (of at most 13) of optimal cost while violating the constraints by an
additional constant factor (of at most 13) by adapting techniques due to Srinivasan and Teo~\cite{SriTeo01} and using an LP rounding theorem due to Karp et 
al.~\cite{KLRTVV87} . Given the larger constants, we view our results for enforcing color constraints to be primarily of theoretical interest. We leave open both the practical evaluation of these techniques as well as better algorithms with provably smaller constants.

We reformulate the flow problem on the GAP flow graph (see Figure~\ref{fig:conversion}) as a new LP in terms of paths. In the GAP flow graph, let $\mathcal{B}$ be
the set of boxes (nodes) at level 4  and let $\mathcal{P}$ be the set of all
paths  from $\bf{S}$ to the boxes in $\mathcal{B}$. Further, for each $1 \leq l \leq m$ and $j \in D$, let $S_{l, j}$  be the set of all edges from a node in level 2  to a node in level 3 such that the node in level 2 represents some reflector $i \in R_l$ and the node in level 3 represents the reflector-sink pair $(i,j)$.  Finally, let the variable $\pi_p \in [0,1]$ indicate the amount of fractional flow carried by path $p$, for each $p \in \mathcal{P}$.  The LP follows.
\[
\begin{array}{ll}
(i) \hspace{10mm} &{\displaystyle \sum_{p \in {\mathcal P}| e \in p}
\pi_p \leq u_e,} \hspace{5mm} \mathrm{\forall \; e \in E, where\ E\ is\ the\ set\ of\ edges\ in\ the\ GAP\ flow\ graph}\\
(ii) \hspace{10mm}&{\displaystyle \sum_{p \in {\mathcal P}| p = \{{ \bf{S}}
\rightarrow b\}} \pi_p = \frac{1}{2},}
\hspace{5mm} \forall \; b \in \mathcal{B}\\
(iii) \hspace{10mm}&{\displaystyle \sum_{p \in {\mathcal P}| p \cap S_{l,j} \neq \emptyset} \pi_p \leq 1}, \hspace{5mm}   j  \in D \ and\ 1 \leq l \leq m, where\ m \ is\ the\ number\ of\ colors \\
(iv) \hspace{10mm}&{\displaystyle \sum_{p \in {\mathcal P}} c_p
\pi_p \leq C}
\end{array}
\]
Here $u_e$ is the capacity on edge $e \in E$, ${\bf{S}}$ is the node in the level 1 of GAP flow graph, $\{{\bf{S}}
\rightarrow b\}$ denotes a path from $\bf{S}$ to a box $b$, $c_p$ is the cost of path $p \in
\mathcal{P}$, and $C$ is the total cost of the solution produced by the randomized rounding step.
The constraints $(i)$ above codify the capacity constraints on the edges.
Constraints $(ii)$ require a flow of $1/2$ to each of the boxes in $\mathcal{B}$. Constraints $(iii)$ are the special (set-type) color constraints and 
constraint $(iv)$ controls the cost. 

As we saw earlier, the values $\bar{x}_{ij}^k$  obtained from the randomized rounding step can be used to create a valid flow on the GAP flow graph. One can decompose this flow into flow paths in the standard fashion and produce a feasible fractional solution for the above LP that we denote by $\bar{\pi}_p, p \in \mathcal{P}$.  Next, analogous to Srinivasan and Teo's technique, we do a step of path filtering to eliminate all ``expensive'' paths $p$ such that $c_p > 4C$, resulting in a smaller set of paths $\mathcal{P'} \subseteq \mathcal{P}$.  Using the fact that $c_p > 4 C$ for $p \in {\mathcal P - P'}$, we have
\begin{equation}  {\displaystyle \sum_{p \in {\mathcal P - P'}} \bar{\pi}_p < \frac{1}{4},}  \label{eq:lessthan4}
\end{equation}
since otherwise the total cost of the solution would be more than $C$, leading to a contradiction.
 Thus, using constraint (ii) of the above LP and Equation~\ref{eq:lessthan4}, we have
\begin{equation}
 {\displaystyle \sum_{p \in {\mathcal P'}| p = \{{ \bf{S}}
\rightarrow b\}} \bar{\pi}_p \geq \frac{1}{4},} \hspace{5mm} \forall \; b \in \mathcal{B} \label{eq:pathfiltering}
\end{equation}
Now, set $\tilde{\pi}_p = 4 \bar{\pi}_p$, for all $p \in \mathcal{P'}$. The values $\tilde{\pi}_p$,  $p \in \mathcal{P'}$, are a feasible solution to the following LP, where constraints (i), (iii), and (iv) below are obtained by quadrupling the RHS of the corresponding constraints of the prior LP. Further, constraints (ii) below are obtained from Equation~\ref{eq:pathfiltering} and then multiplying both RHS and LHS by negative 36 for reasons that will become clear when we apply Theorem~\ref{thm:karpetal} below.
\[
\begin{array}{ll}
(i) \hspace{10mm} &{\displaystyle \sum_{p \in {\mathcal P'}| e \in p}
\pi_p \leq 4 u_e,} \hspace{5mm} \mathrm{\forall \; e \in E, where\ E\ is\ the\ edge\ set\ of\ GAP\ flow\ graph}\\
(ii) \hspace{10mm}&{\displaystyle   \sum_{p \in {\mathcal P'}| p = \{{ \bf{S}}
\rightarrow b\}} -9  \pi_p \leq -9,}
\hspace{5mm} \forall \; b \in \mathcal{B}\\
(iii) \hspace{10mm}&{\displaystyle \sum_{p \in {\mathcal P'}| p \cap S_{l,j} \neq \emptyset} \pi_p \leq 4,} \hspace{5mm}\;  1 \leq l \leq m \ and\  j  \in D\\
(iv) \hspace{10mm}&{\displaystyle \sum_{p \in {\mathcal P'}} \left(\frac{c_p}{C}\right)
\pi_p \leq 4}
\end{array}
\]

Now we round the fractional solution $\tilde{\pi}_p$ to obtain an integral solution using the following result due to Karp et al.~\cite{KLRTVV87}. 
\begin{theorem}[\cite{KLRTVV87}]
\label{thm:karpetal}
 Let $A$ be a real valued $r \times s$ matrix and $z$ be a real-valued
$s$-vector. Let $b$ be a real-valued vector such that $Az = b$ and $t$ be a positive real number
such that, in every column of $A$, (i) the sum of all the positive entries is at most $t$
and (ii) the sum of all the negative entries is at least $-t$. Then it is possible to compute an
integral vector $\ddot{z}$ such that for every $i$, either $\ddot{z}_i =\left\lfloor z_i \right \rfloor$ or $\ddot{z}_i = \left\lceil z_i \right \rceil$ 
 and $A\ddot{z} = \ddot{b}$ where
$\ddot{b}_i - b_i < t$ for all $i$. Furthermore, if $z$ contains $d$ nonzero components, the integral
approximation can be obtained in time $O(r^3 \log(1 + s/r) + r^3 + d^2r + sr)$.
\end{theorem}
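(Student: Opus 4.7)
The plan is to prove Theorem~\ref{thm:karpetal} by an iterative rounding algorithm that alternates between two kinds of steps: \emph{rounding steps}, in which one more coordinate of the working vector is fixed to an integer value without disturbing any active constraint, and \emph{constraint-dropping steps}, in which one row of $A$ is removed from the active set at the cost of a bounded additive drift in its value. Initialize $z^{(0)} = z$ and $I_0 = \{1,\dots,r\}$, and let $F_t$ denote the set of coordinates of $z^{(t)}$ that are still fractional. The output $\ddot{z}$ will be the final vector $z^{(T)}$ once $F_T = \emptyset$.

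At iteration $t$, I would examine the submatrix $A_{I_t,F_t}$. If its columns are linearly dependent (which is automatic whenever $|F_t| > |I_t|$), pick any nonzero kernel vector $v$ and move $z^{(t)}$ along $+v$ or $-v$, scaling so that at least one coordinate $j \in F_t$ hits $\lfloor z^{(t)}_j \rfloor$ or $\lceil z^{(t)}_j \rceil$ and becomes integral. Because $v$ lies in the kernel, the products $(A z^{(t)})_i$ are preserved for every $i \in I_t$, so no active constraint drifts. If instead $A_{I_t,F_t}$ has full column rank, no kernel move is available and a row must be dropped from $I_t$. Here I would invoke a counting argument driven by the column-sum hypothesis: since each column's positive entries sum to at most $t$ and negative entries to at least $-t$, the total positive mass of $A_{I_t,F_t}$ is at most $|F_t|\,t$, and since $|F_t| \leq |I_t|$ in this regime, averaging produces some $i \in I_t$ whose positive entries in $F_t$ sum to at most $t$ (and symmetrically for negative entries). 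I drop this row. Since the remaining rounding changes each coordinate by at most $1$, the residual drift in $(A\ddot{z})_i$ is bounded by the sum of absolute values of row $i$'s surviving entries, which stays within $t$ once signs are managed carefully.

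The main obstacle I anticipate is tightening this error bound to the strict inequality $\ddot{b}_i - b_i < t$ rather than a looser $\leq t$ or $\leq 2t$. The leverage for the strict bound is the freedom to choose between $+v$ and $-v$ at each kernel move: I would exploit this freedom to keep, for every already-dropped row, the accumulated positive drift and accumulated negative drift nearly balanced, so that neither one-sided total reaches $t$. A careful induction on the iteration count---using the observation that at each kernel move one can always pick the sign that weakly decreases the absolute drift in every dropped row---yields the strict bound. For the running time claim $O(r^3 \log(1+s/r) + r^3 + d^2 r + sr)$, the algorithm terminates in at most $r + s$ iterations (each reduces $|F_t| + |I_t|$ by one), and I would maintain an LU factorization of $A_{I_t,F_t}$ that can be updated after a row or column deletion in $O(r^2)$ work per iteration. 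The $\log(1 + s/r)$ factor arises from a blocked initialization of the factorization processing $r$ columns at a time, and the sparsity term $d^2 r$ captures the speed-up when $z$ begins with only $d$ nonzero components so that the initial active set $F_0$ effectively has size $d$ rather than $s$.
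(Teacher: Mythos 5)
First, a point of order: the paper does not prove Theorem~\ref{thm:karpetal} at all --- it is imported verbatim from Karp et al.~\cite{KLRTVV87} and used as a black box in Section~\ref{sec:colconstraints} --- so your proposal can only be measured against the known proof of that rounding theorem, not against anything in this paper. Your overall architecture (iterative rounding: move along a kernel vector of the submatrix on active rows and fractional columns until a variable hits its floor or ceiling, and otherwise drop a row) is indeed the right skeleton. The gap is in the quantitative heart of the argument: the rule for when a row may be dropped and why the resulting drift stays below $t$. Picking, by averaging, a row whose \emph{positive} entries on the fractional columns sum to at most $t$ does not bound its future increase by $t$: the value of a dropped row also increases when variables multiplying its \emph{negative} entries move down, so the relevant bound is against a mixture of the positive and negative masses, and the total absolute column mass allowed by the hypothesis is $2t$, so your counting only yields a $2t$-type guarantee. (Your ``symmetrically for negative entries'' gives a possibly \emph{different} row, and your fallback --- choosing the sign of each kernel move so as to ``weakly decrease the absolute drift in every dropped row'' --- is not available: one sign choice cannot simultaneously favor several dropped rows that are pushed in opposite directions, and kernel moves only protect rows that are still active.)

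The missing idea that makes the constant $t$ (and only the one-sided bound $\ddot{b}_i-b_i<t$) come out is to measure, for each still-active row $i$, its \emph{maximum possible future increase} $\mu_i=\sum_{j\in F}\bigl[a_{ij}^{+}\,(\lceil z_j\rceil-x_j)+a_{ij}^{-}\,(x_j-\lfloor z_j\rfloor)\bigr]$, where $x$ is the current point, $F$ the fractional columns, and $a^{+},a^{-}$ the positive and negative parts. Because the up-room and down-room of each fractional column sum to exactly $1$, and each column's positive entries sum to at most $t$ while its negative entries sum to at least $-t$, one gets $\sum_{i\ \mathrm{active}}\mu_i\le t\,|F|$; hence either some active row has $\mu_i<t$ --- it can be dropped, and its eventual increase is at most this $\mu_i$, giving the strict bound --- or the number of active rows is at most $|F|$ and (apart from a fully tight degenerate case that must be handled separately) a kernel move exists. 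This weighting by up-room/down-room, which exploits the one-sidedness of the conclusion, is exactly what your proposal lacks; without it the approach proves only $\ddot{b}_i-b_i<2t$, and your attempt to control two-sided (``absolute'') drift is aiming at a stronger statement than the theorem makes and cannot be salvaged by sign choices in the kernel moves. The termination count and the linear-algebra bookkeeping for the running time are plausible but secondary; the dropping rule is the step that fails.
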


To use the above theorem, note that our inequalities can be converted into equalities by using the standard trick of adding a distinct slack variable for each constraint \cite{CormenLRS2009}. Next, 
we bound the sum of the positive coefficients for each
$\pi_p$ in the above LP.  The variable $\pi_p$ appears 4 times
(at most once for each level) in constraints $(i)$ with a coefficient of $1$, at most once in $(iii)$ with a coefficient of $1$, and
exactly once in $(iv)$ with coefficient at most 4. This adds up to a
total of $9$. Further, each slack variable appears only in one constraint with a coefficient of 1. Likewise, the negative coefficients for any $\pi_p$ from constraints (ii) is at least $-9$. Applying Theorem~\ref{thm:karpetal}, we round the feasible fractional solution $\tilde{\pi_p} \in [0,1]$ to the above LP to get an integral solution $\ddot{\pi_p} \in \{0,1\}$ that
satisfies all the constraints with an additive factor less than 9. Thus, the rounded values $\ddot{\pi}_p$ satisfies the following modified constraint (ii):
\[
\sum_{p \in {\mathcal P'}| p = \{{ \bf{S}}
\rightarrow b\}} -9  \pi_p < -9 + 9 = 0, \hspace{5mm} \forall \; b \in \mathcal{B}
\]
The strict inequality in the above equation is important since it guarantees that there is at least one path $p \in {\mathcal P'}$ from source $\bf{S}$ to each box $b$ with $\pi_p = 1$ that can be used for routing. Further, this additive
factor translates into an approximation ratio of $4 + 9$ equal to $13$ for the cost, i.e., the obtained cost is no more than a factor of $13$ from optimal. Finally, the coloring constraints are (approximately) satisfied to ensure that no more than $4 + 9 = 13$ copies of any stream are sent to a sink from reflectors belonging to a particular ISP. Thus
we get the promised approximation guarantees, though the larger constants for satisfying the coloring constraints make the result of theoretical significance only, as streams in practice seldom use more than 3 paths in total to meet their packet loss thresholds.

The running time of the rounding step can be evaluated by observing that the number of non-zero 
values of $\tilde{\pi}_p$ (which is at most $|\mathcal{P}|$), the number of constraints in the LP, and the number of variables in the LP are each $O(|R| \times |D|)$. 
Thus, applying Theorem~\ref{thm:karpetal} the running time is $O(|R|^3\cdot|D|^3)$.
  
\section{Implementation and Experimental Results}
\label{sec:implement}
In this section, we demonstrate the efficacy of the approximation algorithm outlined in this paper by implementing and running it on realistic inputs derived from Akamai's live streaming network. We implemented our algorithm {\tt Approx} in C++.  As a comparison, we also implemented two other algorithms that we call {\tt ApproxHack} and {\tt IP}, resulting in three different algorithms being compared. The computers that were used to run the experiments each had a single
Intel Pentium 4 processor clocked at 2.4 Ghz and had 1 GB of RAM. 
When we compare two solutions from the different algorithms for the same input,
we ensure consistency by always running the experiments on the same machine.
\begin{figure}[h]
\begin{center}
$\begin{array}{c@{\hspace{.1in}}c}
\psfig{figure=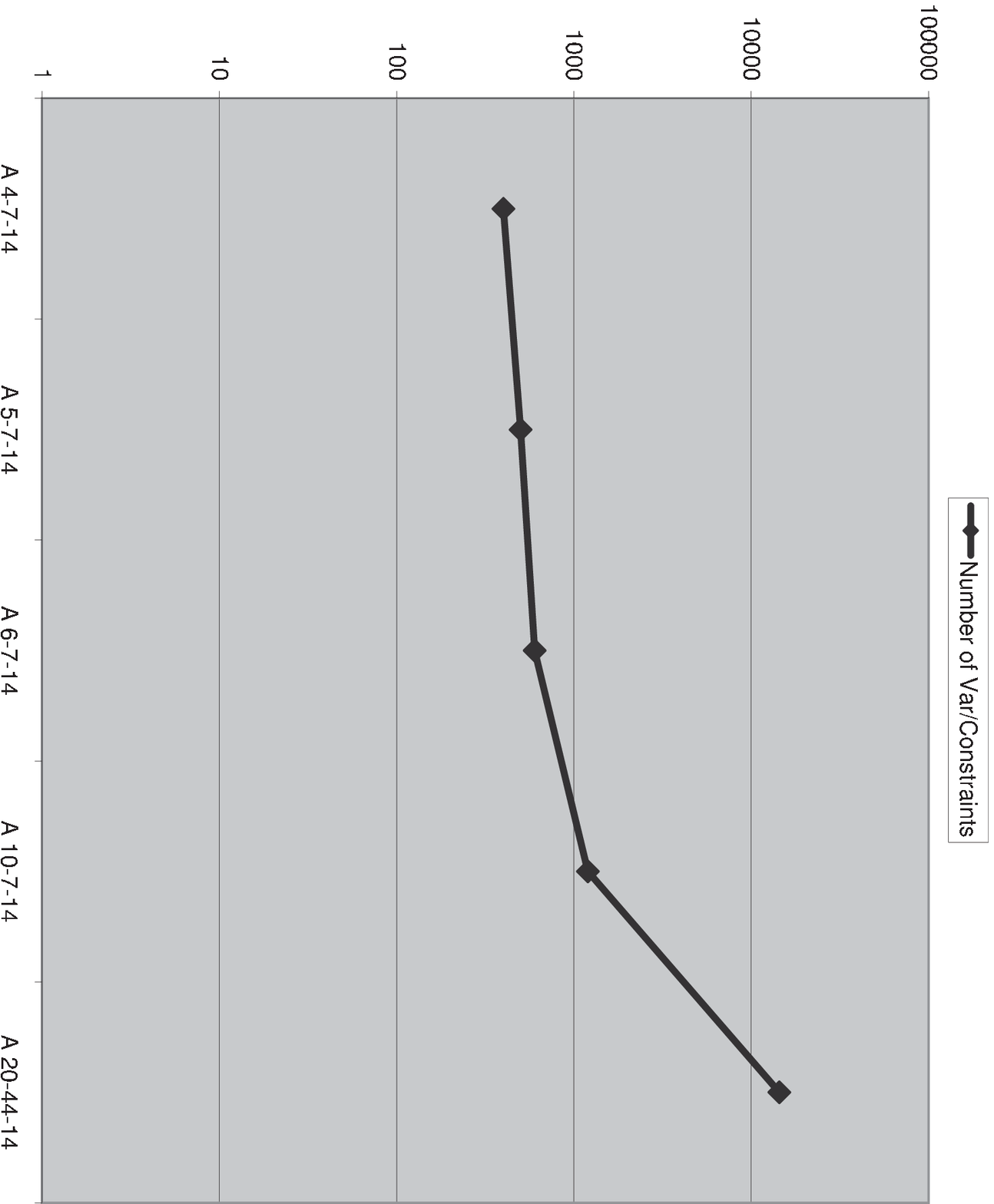,angle=90,width=3.25in} &
\psfig{figure=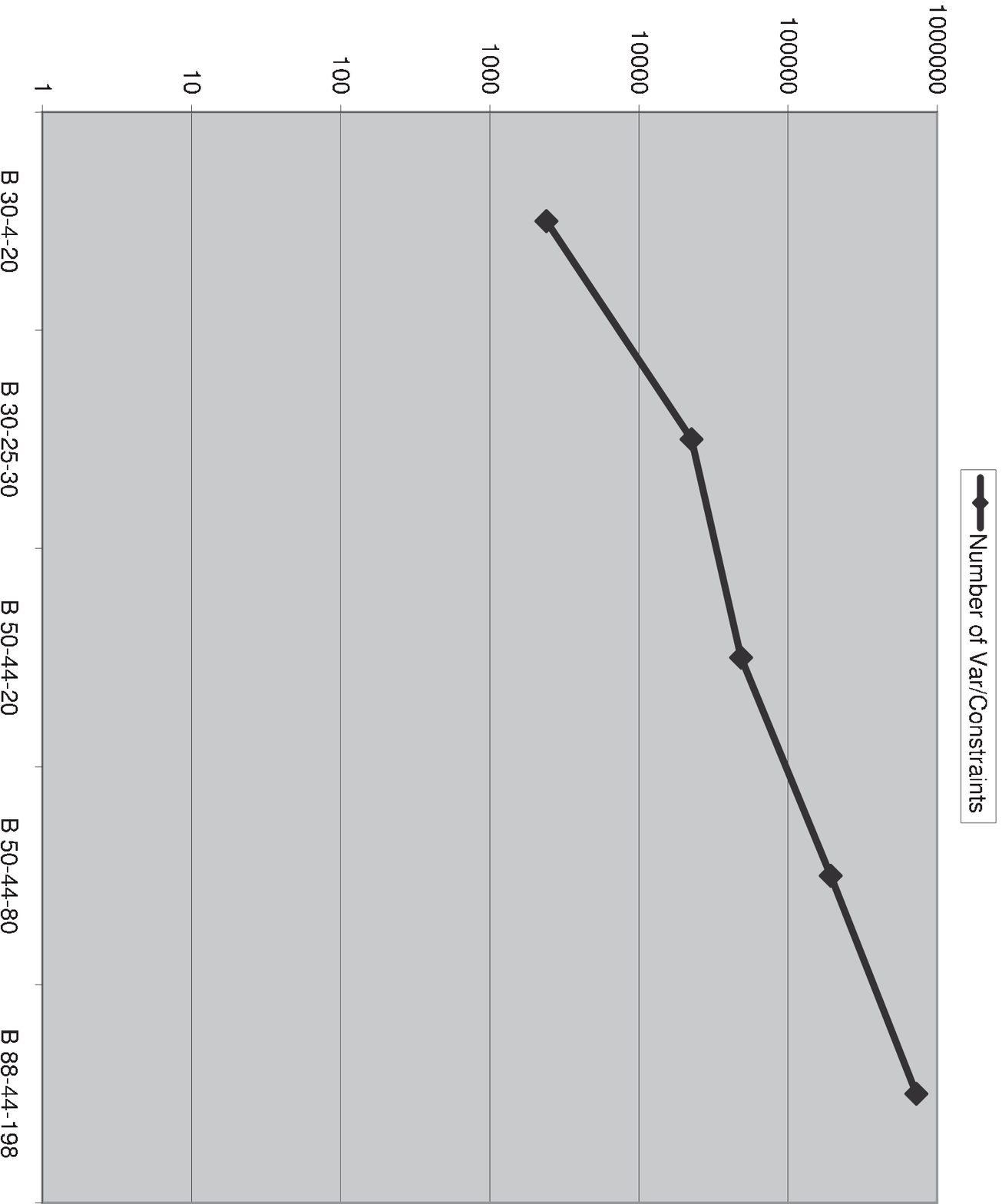,angle=90,width=3.25in}
\end{array}$
\end{center}
\begin{center}
\psfig{figure=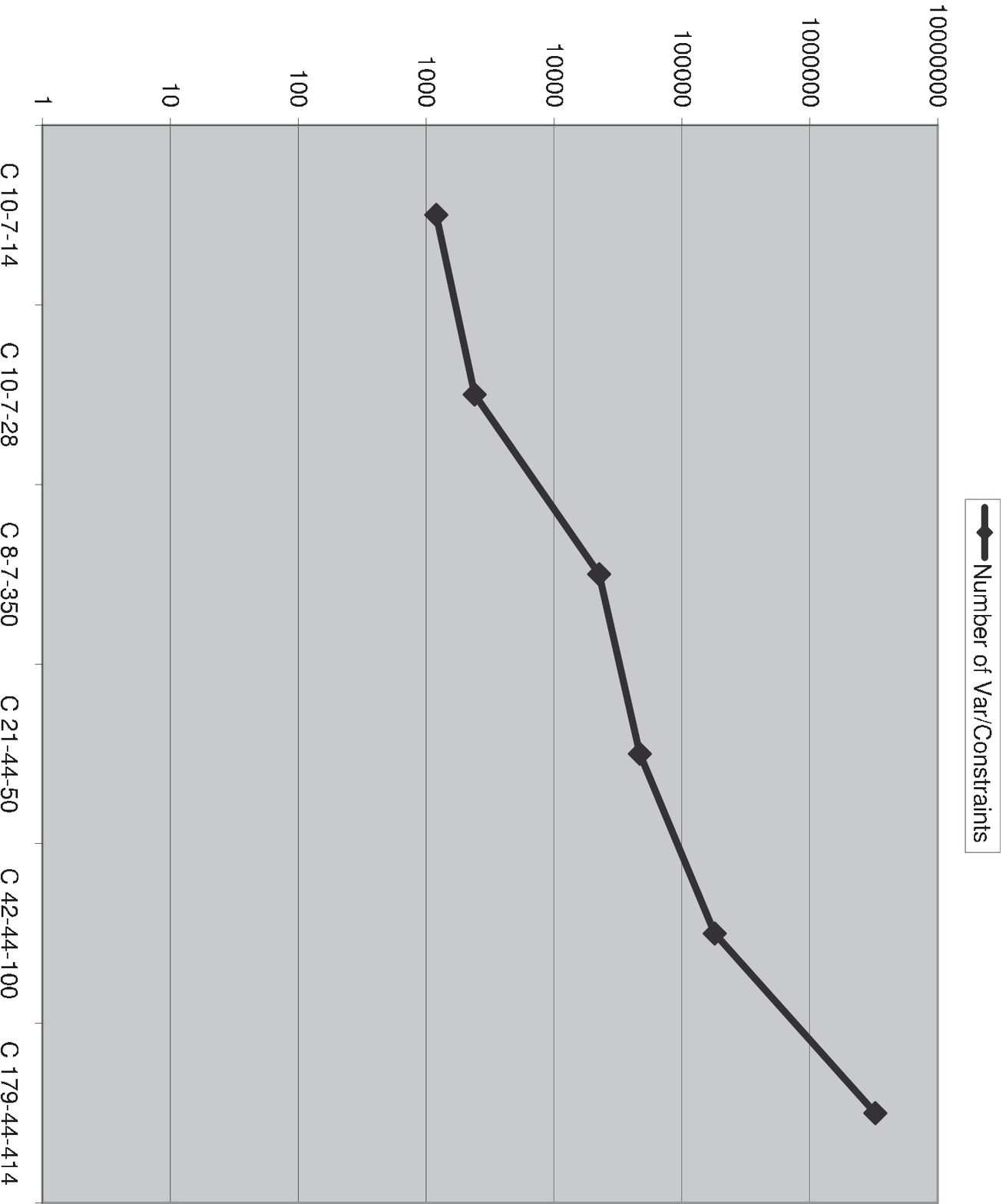,angle=90,width=3.25in}
\end{center}
\caption{Problem Size and Network Sizes for Media Formats}
\label{Network Sizes}
\end{figure}
\begin{figure}[h]
\begin{center}
$\begin{array}{cc}
\epsfig{figure=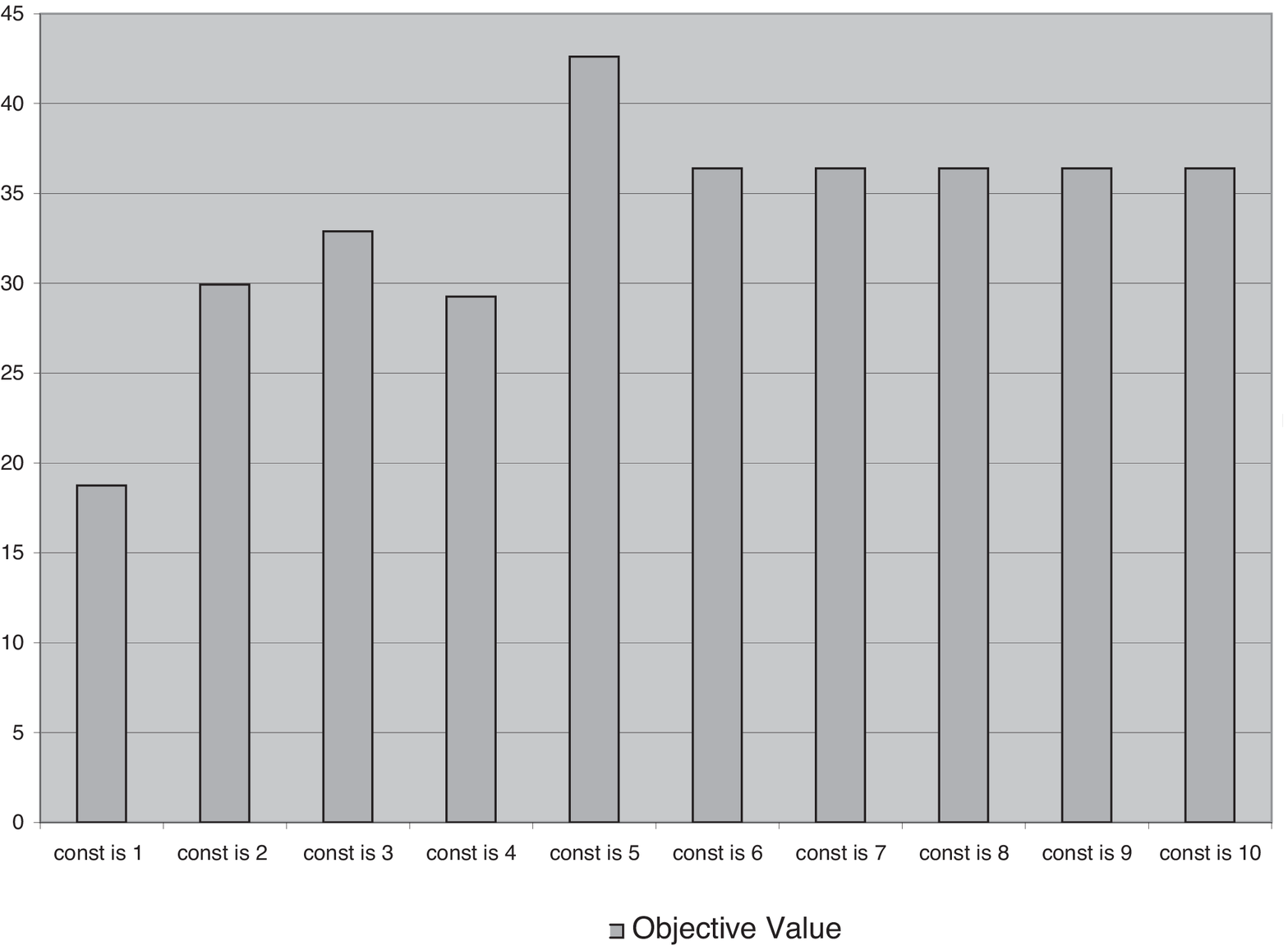,width=3.35in}&
\epsfig{figure=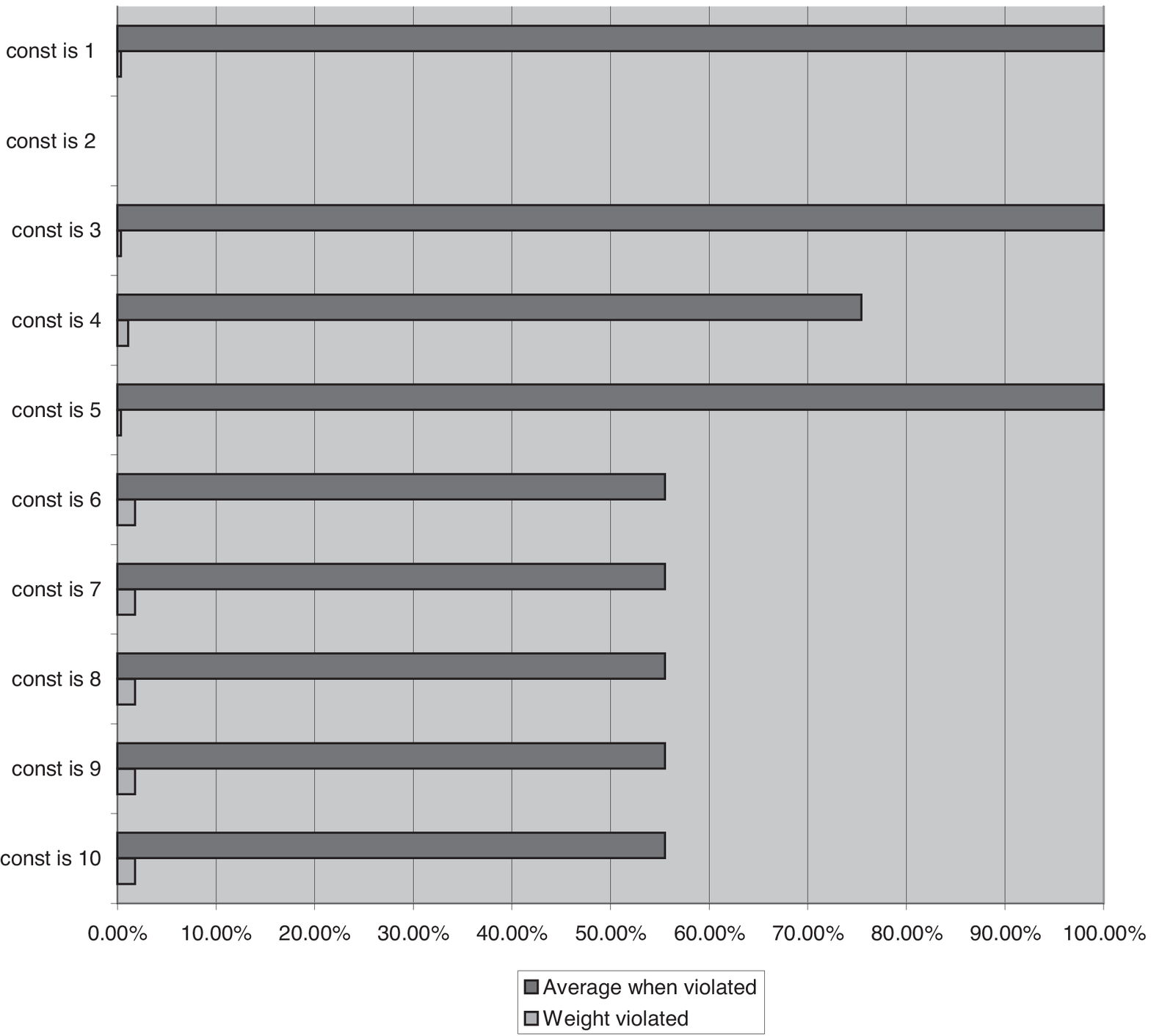,width=3.35in}\\
\end{array}$
\end{center}
\caption{ Cost objective value and loss violations for small constant multipliers for a $10\times7\times28$ network under average loss conditions for format C.}
   \label{C 10-7-28 ObjViol}
\end{figure}


As we saw earlier, {\tt Approx}  consists of solving a linear program followed by randomized rounding and another rounding step using modified GAP approximation. We now propose a local-search variant of our solution called {\tt Approxhack} where the linear program is solved. But, rather than perform a rounding process, we fix any variables that turn out to be integral in the LP solution, and solve the remaining variables using integer programming. One can think of {\tt Approxhack} as performing a local search heuristic as it tries to find an optimal
integer solution within the neighborhood of the fractional LP solution.
Note that {\tt Approxhack} always produces a solution with cost that is at most the cost of the solution produced by {\tt Approx}.  The reason is that   both {\tt Approx} and {\tt Approxhack} do not alter variables that happen to have integral values in the LP solution and leave them as is in the final solution. Thus, {\tt Approxhack} that solves an integer program for the remainder of the variables produces a solution that is at least as good as any other way of determining the values of those remaining variables, including the rounding procedure used by {\tt Approx}. However, unlike {\tt Approx}, {\tt Approxhack} does not run in polynomial time as it involves solving an integer program that could take exponential time. Finally, we implement a third solution that we refer to as {\tt IP}, that simply solves the integer programming formulation directly, without using  linear programming relaxation or rounding. {\tt IP} always produces the optimal solution with the smallest cost, but could take exponential time as it solves an integer program. Note that the cost achieved by {\tt IP} is at most the cost achieved by {\tt ApproxHack}, which in turn is at most the cost objective value achieved by {\tt Approx}. However, {\tt Approx} is the only polynomial time algorithm of the three.

For solving linear and integer programming problems,
we investigated using a number of packages, including 
DashOptimization XPress-MP, AMPL (plugging in any number of the supported 
solvers), Ciplex's Concert API, and GLPK. We also used COIN-OR's (http://www.coin-or.org) 
Open Solver Interface to call the different mathematical programming solvers. 
In general in our experiments it turned 
out that the running time of the solvers do not differ significantly so we 
chose to use GLPK as our default solver.  Hence GLPK is used as the solver in all three algorithms to solve linear and integer programs as needed.

 For randomness we used srand seeded with the current time and rand()/RAND\_MAX 
to generate random numbers between 0 and 1. We also used Boost's Graph Library 
(http://boost.org) for implementing the modified GAP rounding as a part of our algorithm {\tt Approx}.

\subsection{Input}

We collected usage data from Akamai's live streaming network to make our input as realistic as 
possible. Recall that Akamai's live streaming network is a three-layered network of entry points (sources), reflectors, and edge servers (sinks) as assumed in our work. To protect the privacy of end-users and content providers, and to remove proprietary Akamai information,
we erased server and stream names and we normalized costs. Cost data was challenging
to compile because some of the contractual agreements between Akamai and the various 
Internet service providers are complicated and 
have an array of clauses. Nevertheless we were able to come up with numbers that
represent a good estimate of the actual cost incurred.

Next, we used actual Internet loss data as measured from the Akamai network. The loss data was collected by Aditya Akella and Jeff Pang. For more on the methodology
of collecting the data, please refer to~\cite{FABK03,APMSS04}. 
We observed that there are two distinct 
periods with respect to the losses, a low loss period that occurs during the night and on
weekends, and a high loss period between 12 noon EST and 6 pm EST when Europe and both coasts of North America are active. We ran our algorithms on three different sets of loss data. One set is representative of a low loss period, another is representative of a high loss period, and a third represents an average loss situation that is in between the first two. The low and high loss data were extracted directly from the Akamai traces. To simulate an average loss case, we averaged the Akamai traces over a full 24 hour cycle.

 The Akamai live streaming network can be viewed as a collection of deployed networks, one for each format. Therefore, we broke the entry points (each of which correspond to one or more sources that originate  one or more streams) into three groups
based on format: Windows Media (WMS), Real Media (REAL), and Quick Time (QT).
We anonymized them by calling them media formats A, B, and C. For creating smaller networks, we removed entry points, reflectors and/or edge servers from the above deployed media format networks.
We tested a wide variety of deployed networks starting from a $4\times7\times14$
network all the way up to $179\times 44 \times 414$ (here the first number represents the number of entry points, the second one is number of reflectors and the third one is the number of edge servers).

\subsection{Experimental results}
We ran our algorithms for each scenario of average loss, low loss, and high loss.
Within each of these three cases, we study all three media formats as well as deployed networks of different sizes. For each media format and deployed network, we routed a set of streams to their respective sinks using {\tt Approx}, {\tt ApproxHack} and {\tt IP}. The streams and their demand patterns are derived from usage traces from Akamai's live streaming network. The size of the problem that we solve is the number of variables in the corresponding integer programming formulation of the problem in Section~\ref{subsec:ipf} and is $O(|R| \times |D|)$. In Figure \ref{Network Sizes}, we show the problem sizes that we solve for each format and deployed network. The x-axis is labeled by the media format (A, B, or C) followed by the deployed network characteristics (number of entry points, reflectors, and edge servers). The y-axis is a log-based plot of the corresponding problem sizes.

\subsubsection{Setting the multiplicative factor}

We want to point out an interesting practical feature of our algorithm {\tt Approx} that we observed in our experimentation. Recall that we have a preset multiplier of $c\log n$ that is used in step 1 of the randomized rounding procedure in Section~\ref{sec:randround}. This multiplier influences how close the cost of the solution is to the optimal as well as the degree to which the weight constraints are met. Specifically, choosing a smaller multiplier will bound the cost objective function to be closer to the optimum value. However, a smaller multiplier will also make it more likely that the weight constraints are violated to a larger degree. 
In our experiments, we explored this tradeoff by varying the value of the multiplier and running our algorithm on various network sizes and loss scenarios.  

Our key finding is that we can achieve very good results even with small constants as multipliers, which implies that {\tt Approx} produces a better approximation in practice than the theoretical bounds imply. For example, we ran experiments on media format C under average loss conditions on  a $10\times 7 \times 28$ deployed network. We tried all multipliers starting from 1 to 280. We found that
in the beginning the objective values varied a lot from run to run, but with the multipliers larger than 6 the solution stabilizes, and we get the same solution all the time.  This shows that in reality, the multiplier can be set to small value and the cost of the solution produced by {\tt Approx} is no more than a small
constant away from the optimal solution. The results from this experiment are summarized in Figure~\ref{C 10-7-28 ObjViol}, where we show both the cost objective value obtained (smaller is better) and the average constraint violation.
Note that in this specific example network we happened to get a better solution with a constant of 2 than the one the algorithm eventually converged to. In Figure~\ref{C 10-7-28 ObjViol}, one can see that it has a better cost objective value and no violated constraints.  However, this solution did not recur in a stable fashion for higher values of the constant.


\subsubsection{Behavior under different loss conditions}
We now compare our algorithm {\tt Approx} with {\tt Approxhack} and {\tt IP} described earlier in terms of run time and the cost achieved. We simulated all three algorithms for different sized networks, different formats, and different loss conditions.  We expect our algorithm {\tt Approx} to be the fastest as it is the only polynomial time algorithm. Note that both {\tt Approxhack} and {\tt IP} involve solving an integer program that could take exponential time and could become infeasible for larger problem sizes. However, we expect {\tt IP}  to produce the smallest value for the cost objective function, followed by {\tt Approxhack} as the next best. 

In addition to  these three algorithms, we also plot the value of the cost objective function of the fractional solution to the LP. Note that the cost of the LP solution is a lower bound on the cost produced by {\tt IP}. Though the LP solution may not always be a valid solution for the overlay network construction problem, it is still an instructive lower bound for large problem sizes when {\tt IP} is too inefficient to produce a valid optimal solution.

\begin{figure}[p]
\vspace*{-0.25in}
\begin{center}
$\begin{array}{c}
\hspace*{-0.2in}\epsfig{figure=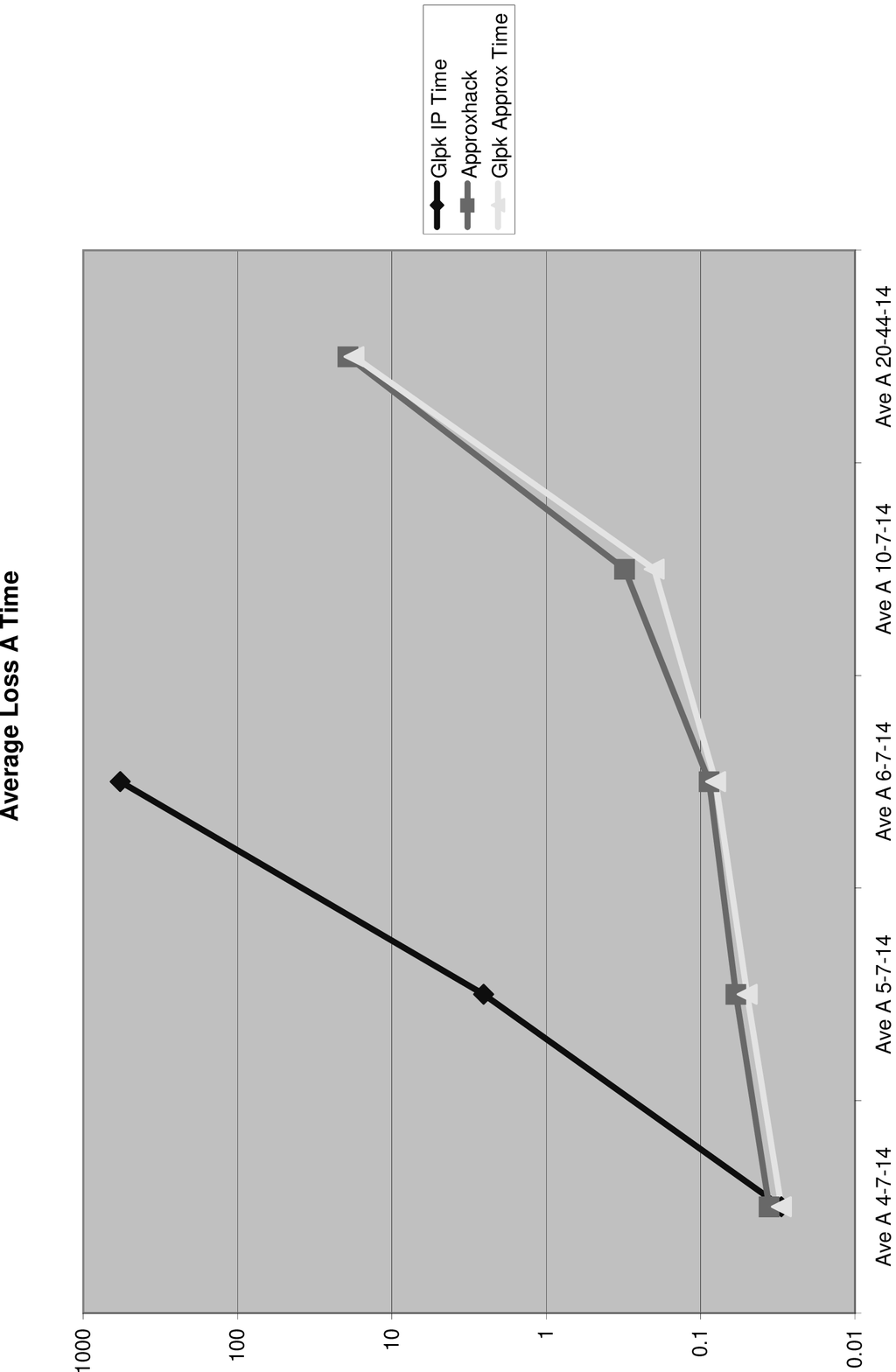,angle=-90,width=3.6in}\vspace*{0in}\\
\epsfig{figure=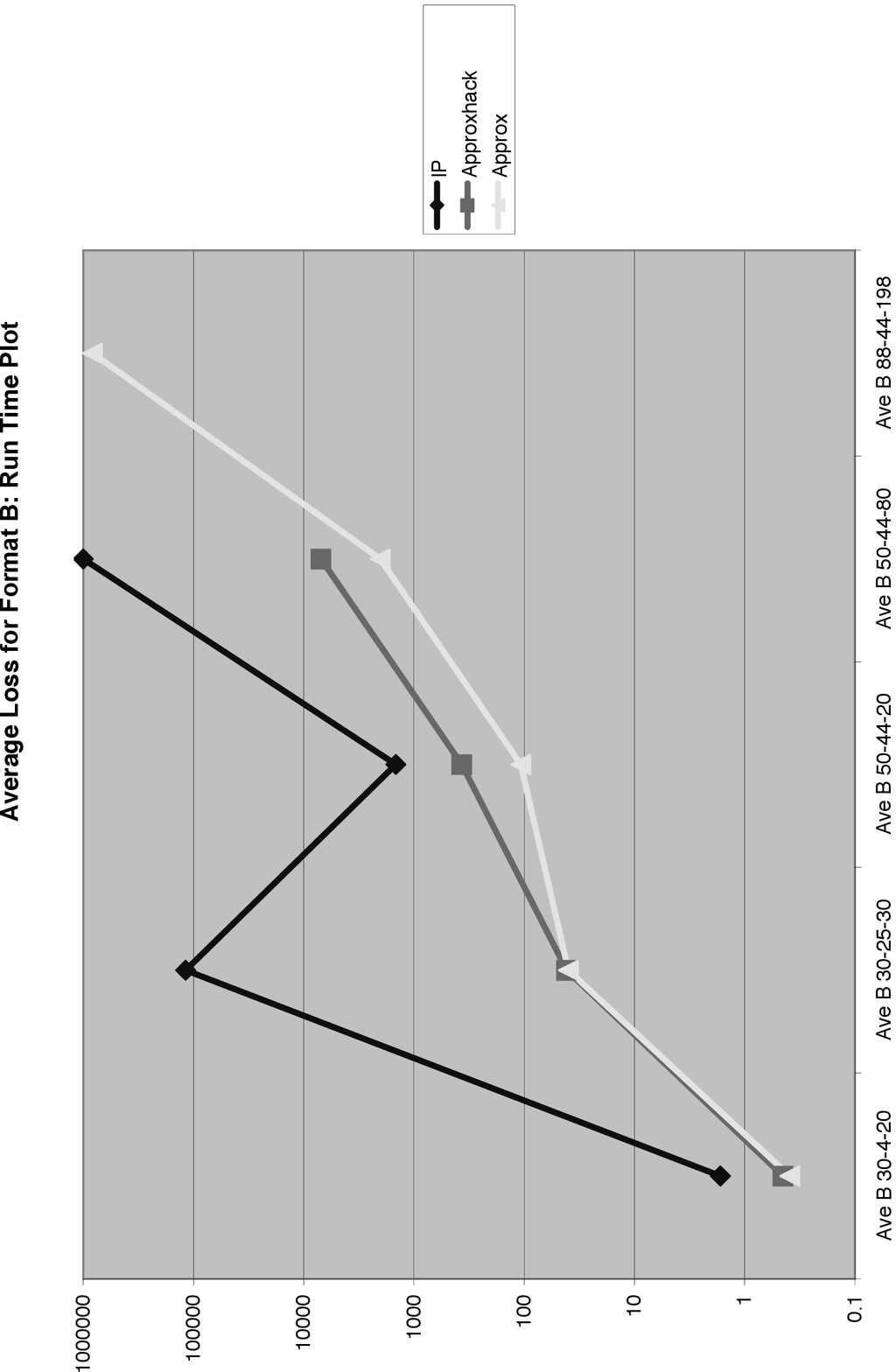,angle=-90,width=4.3in}\vspace*{-1in}\\
\epsfig{figure=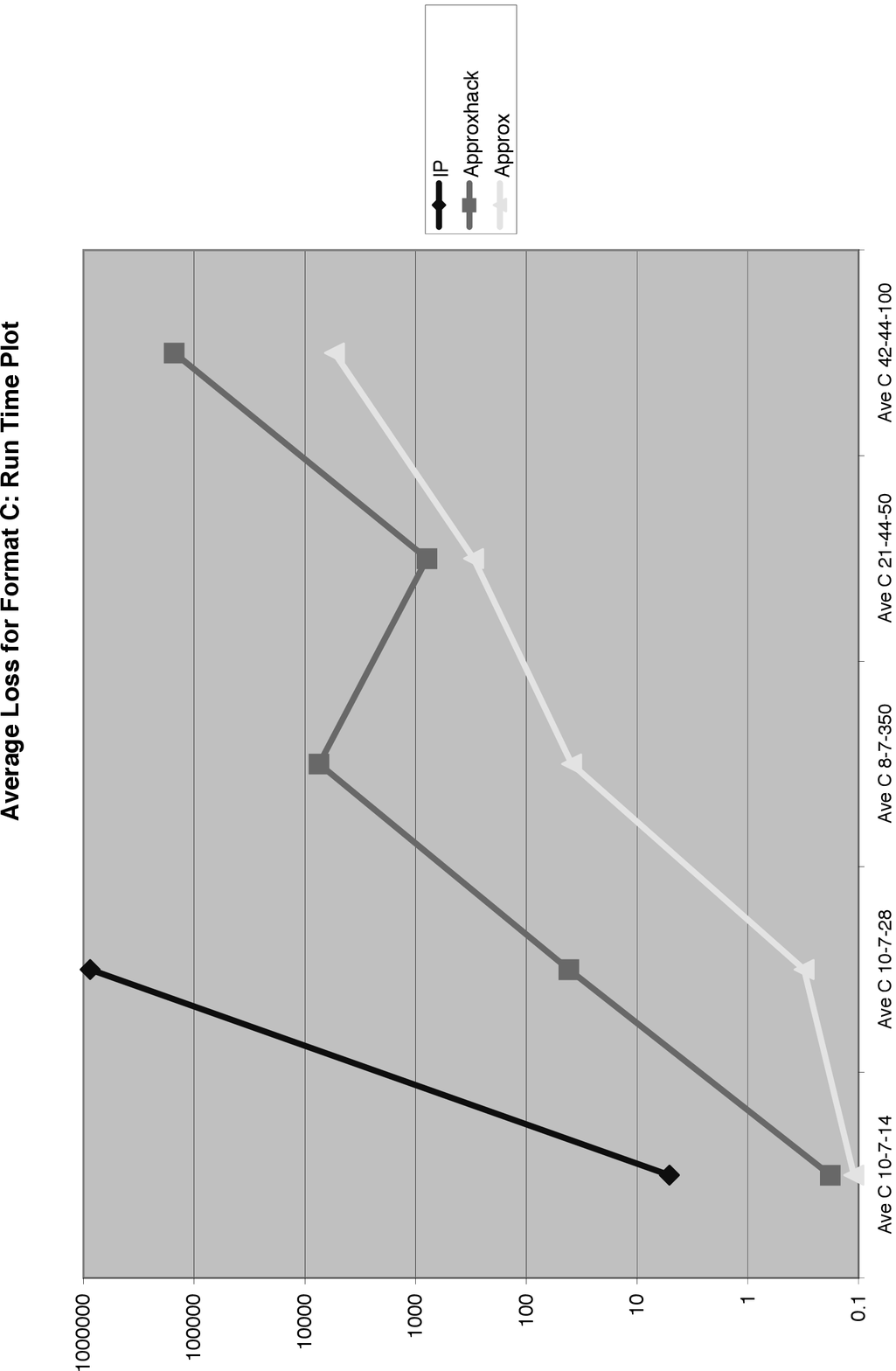,angle=-90,width=4.3in}
\end{array}$
\end{center}
\vspace*{-1in}
\caption{Run time (in seconds) under average loss conditions (log plot)}
\label{AverageLossRunTime}
\end{figure}

\begin{figure}[p]
\vspace*{-0.25in}
\begin{center}
$\begin{array}{c}
\epsfig{figure=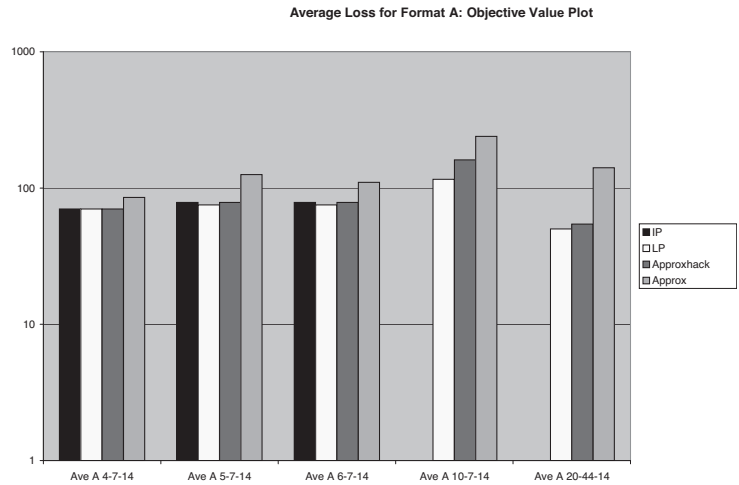,angle=-90,width=3.8in}\vspace*{-0in}\\
\epsfig{figure=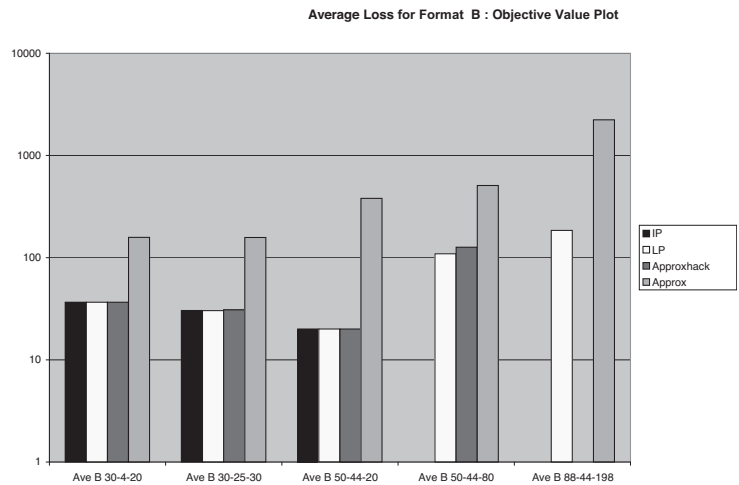,angle=-90,width=3.8in} \vspace*{-0in}\\
\hspace*{1in}\epsfig{figure=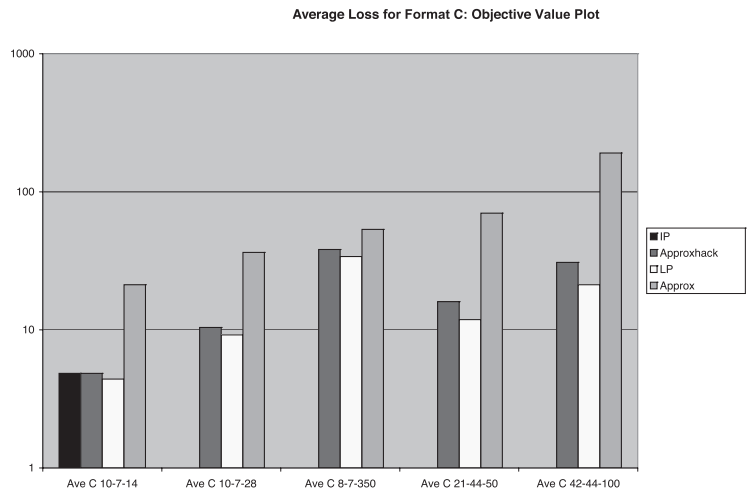,angle=-90,width=4.6in}
\end{array}$
\end{center}
\vspace*{-1in}
\caption{Value of the cost objective function under average loss conditions (log plot)}
\label{AverageLossObjective}
\end{figure}

\begin{figure}[p]
\begin{center}
\vspace*{-0.25in}
$\begin{array}{c}
\epsfig{figure=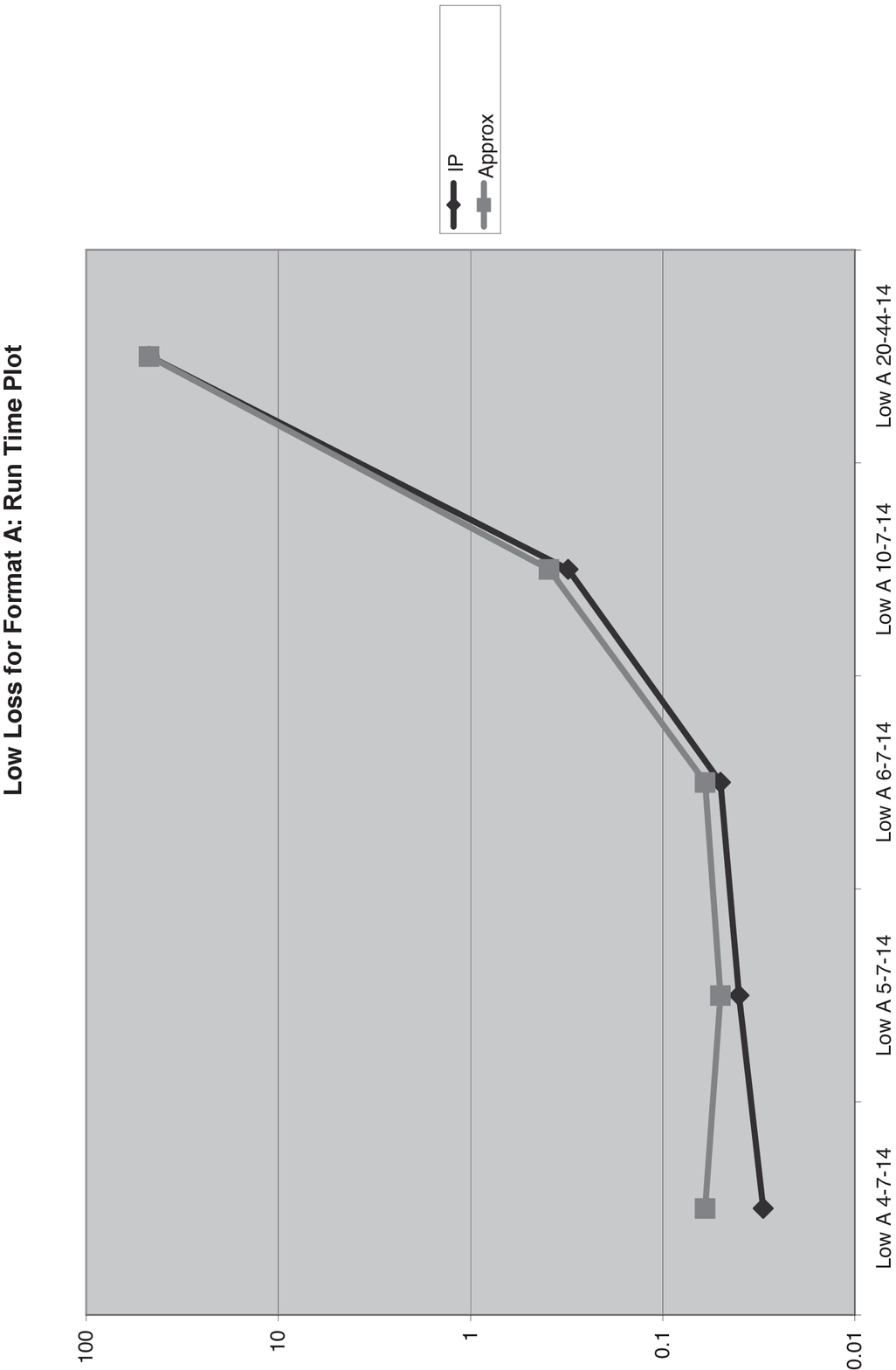,angle=-90,width=3.85in}\vspace*{-.5in} \\
\epsfig{figure=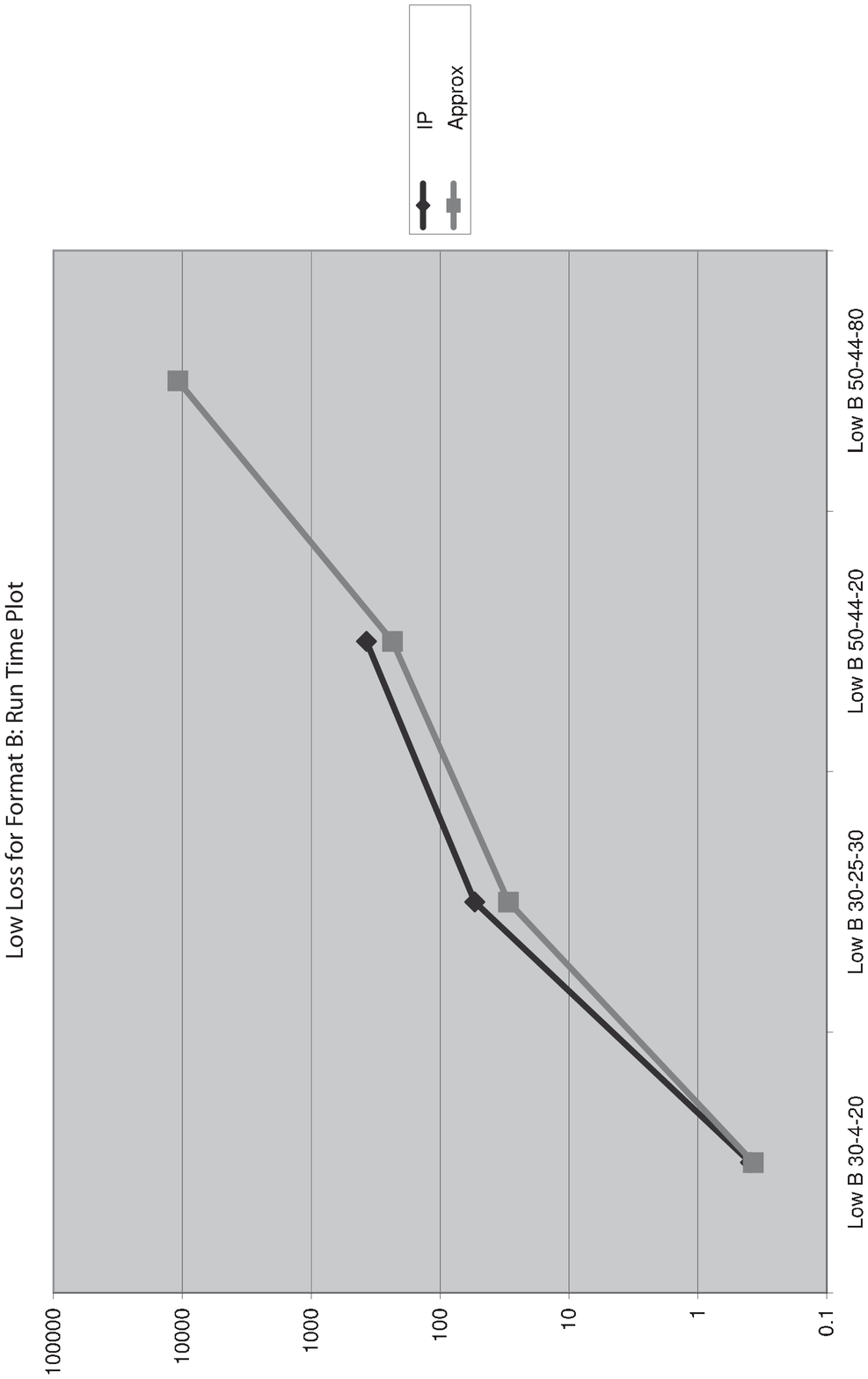,angle=-90,width=4.3in}\vspace*{-.5in} \\
\epsfig{figure=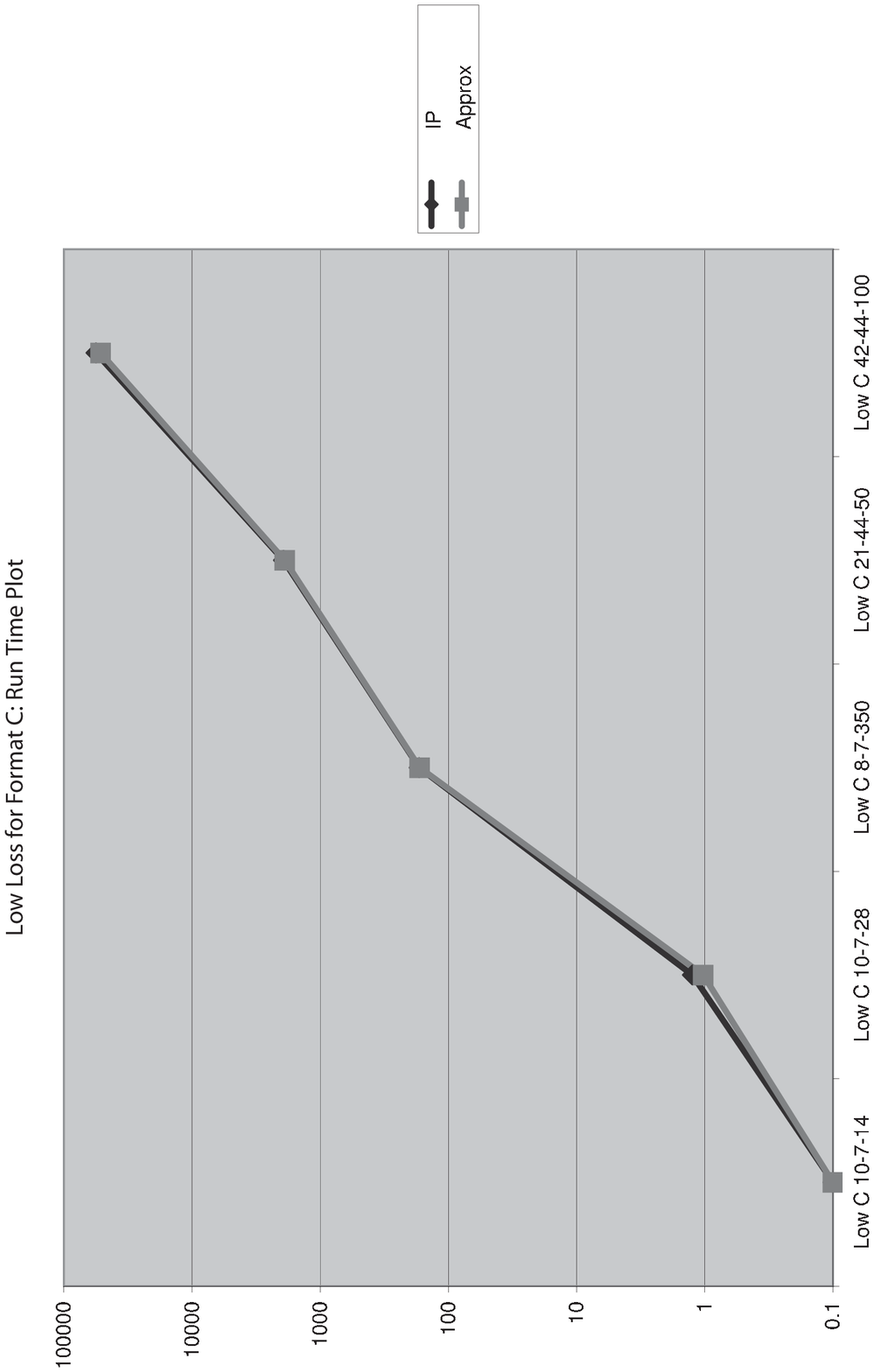,angle=-90,width=4.3in}  
\end{array}$
\end{center}
\caption{Run time (in seconds) under low loss conditions (log plot)}
\label{LowLossRunTime}
\end{figure}

\begin{figure}[p]
\begin{center}
\vspace{-.25in}
$\begin{array}{c} 
\epsfig{figure=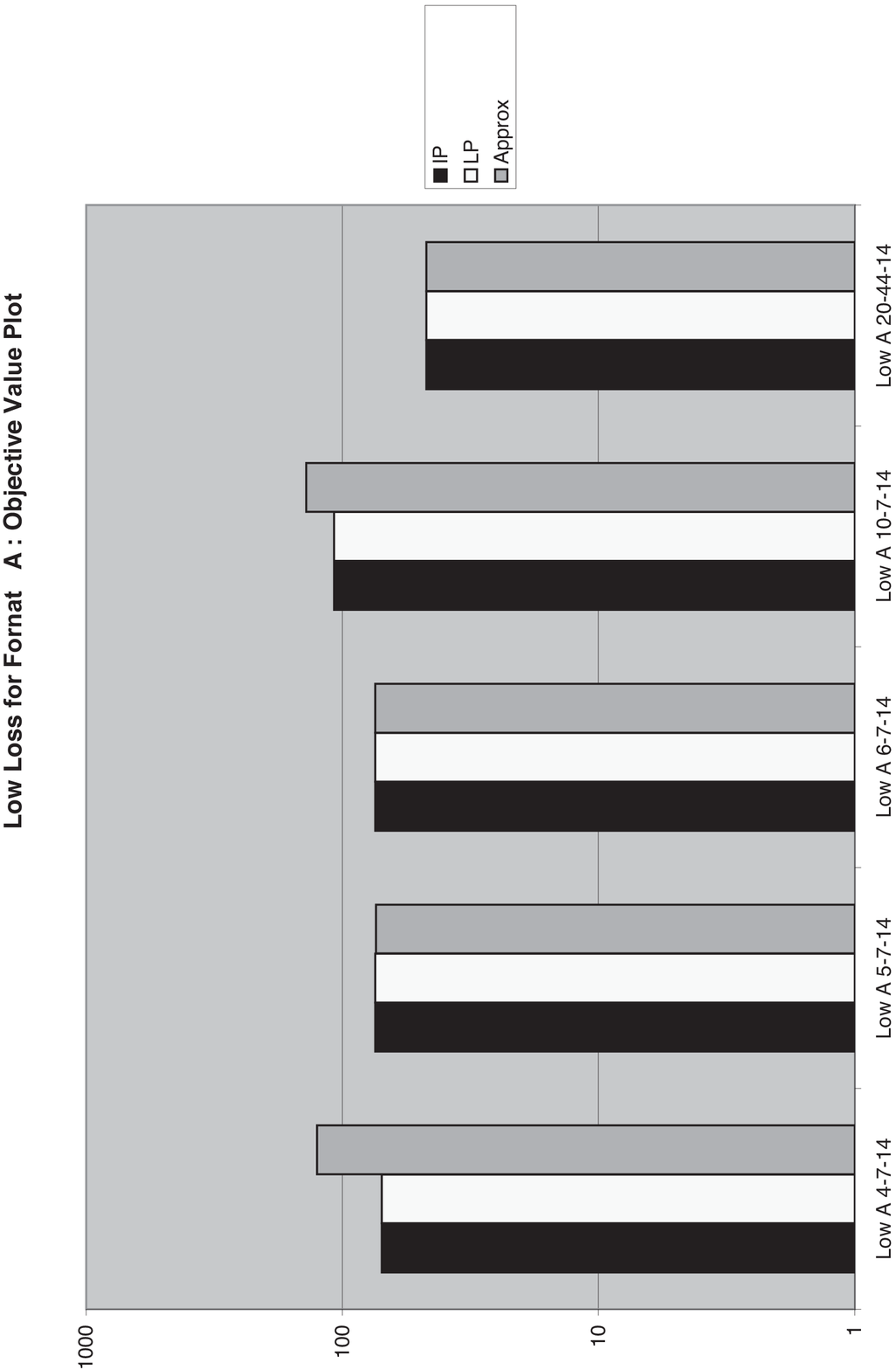,angle=-90,width=3.8in}\\
\epsfig{figure=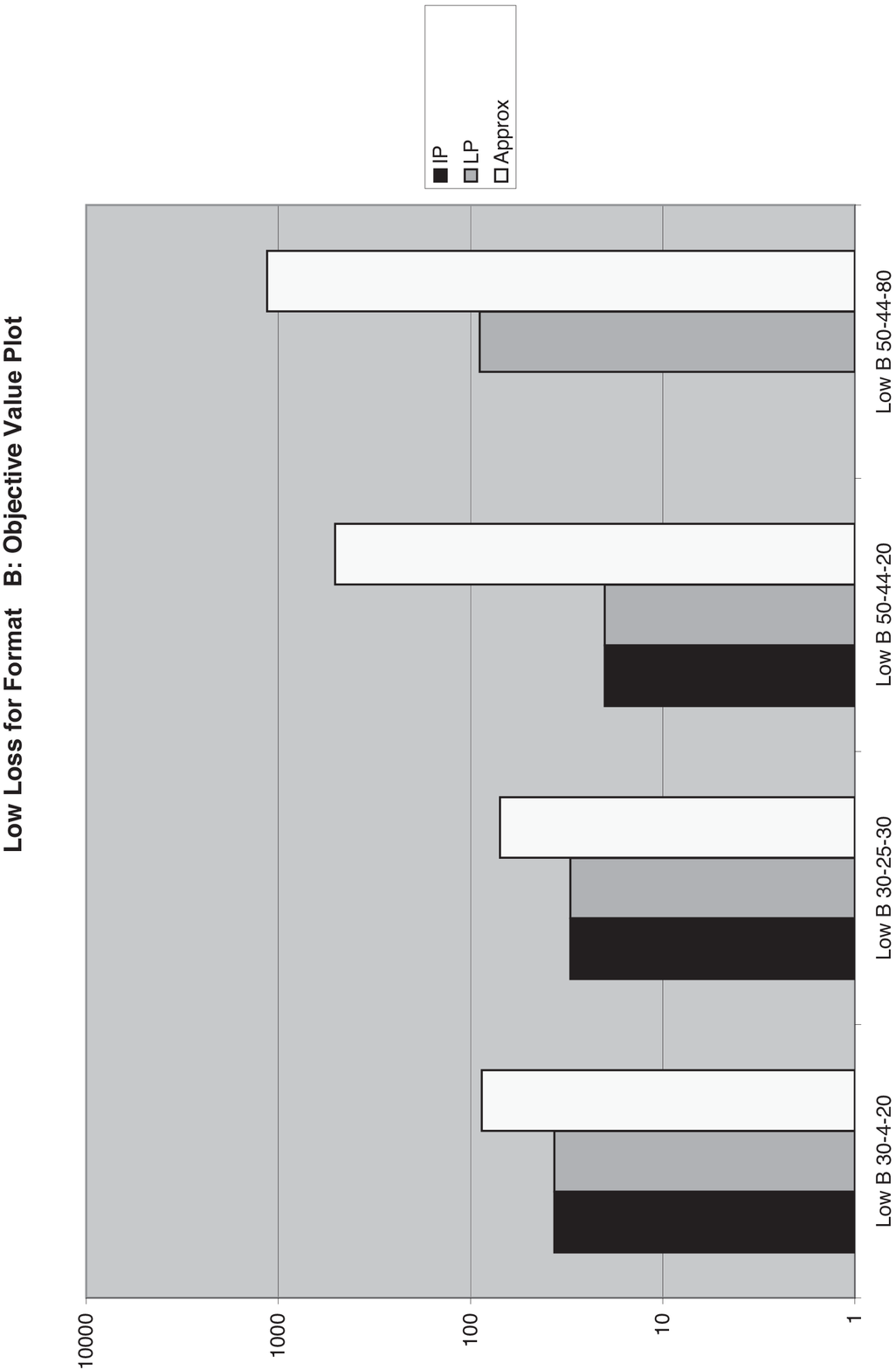,angle=-90,width=3.8in} \\
\epsfig{figure=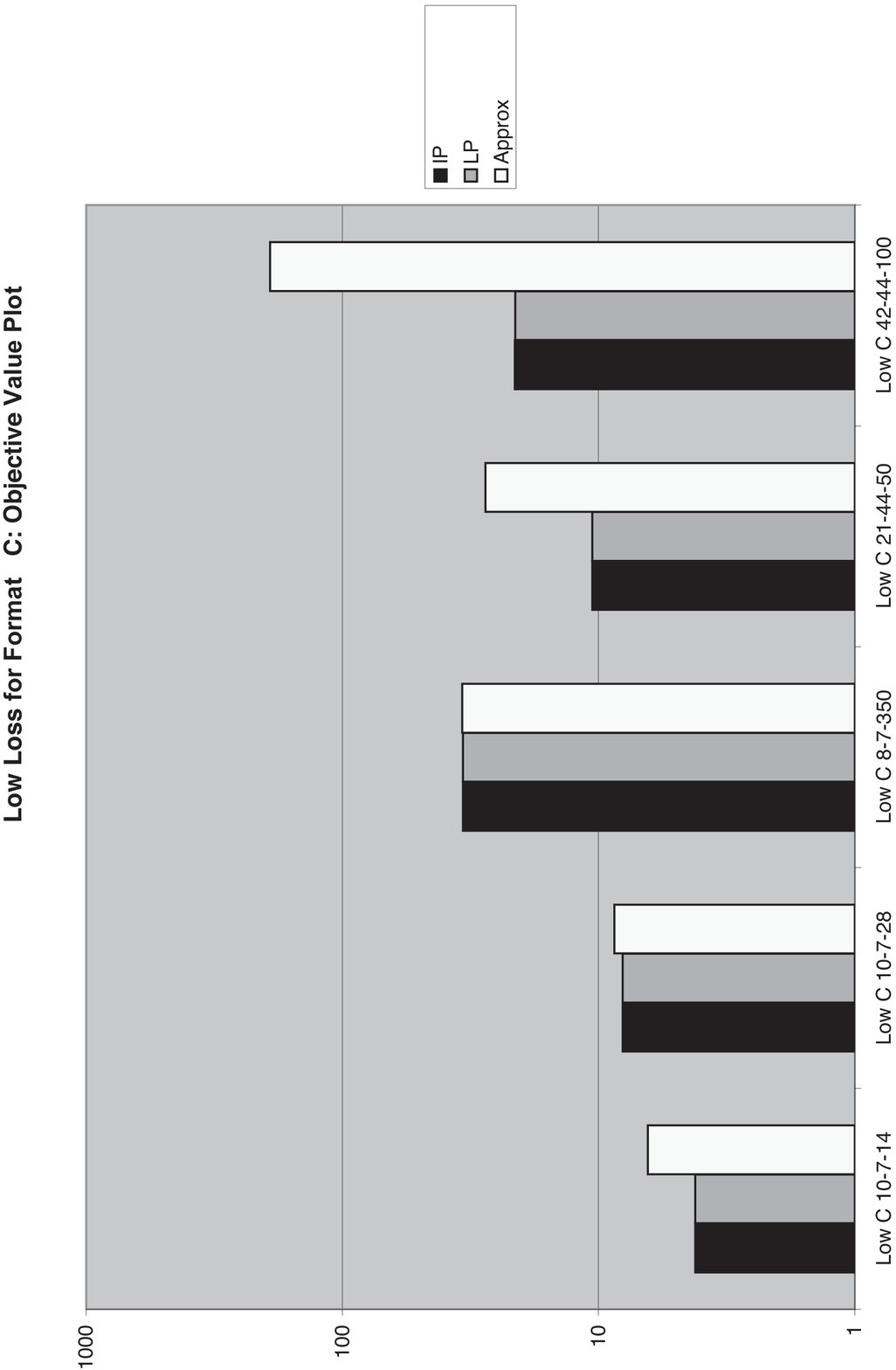,angle=-90,width=3.8in}
\end{array}$
\end{center}
\caption{Value of the cost objective function under low loss conditions (log plot)}
\label{LowLossObjective}
\end{figure}

\begin{figure}[p]
\begin{center}
\vspace*{-.25in}
$\begin{array}{c}
\psfig{figure=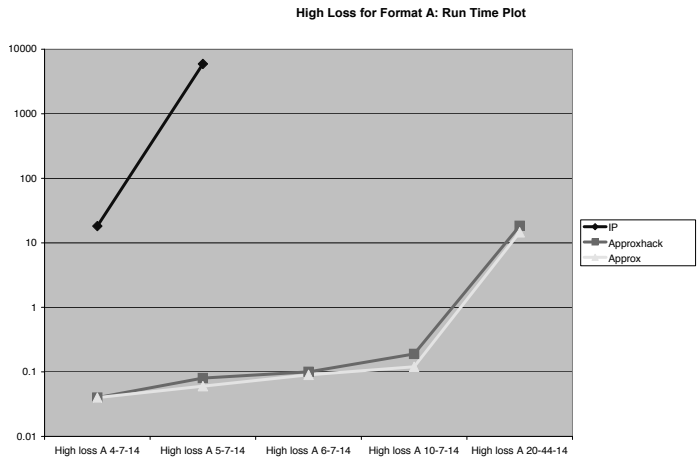,angle=-90,width=4.3in} \vspace*{-.75in}\\
\psfig{figure=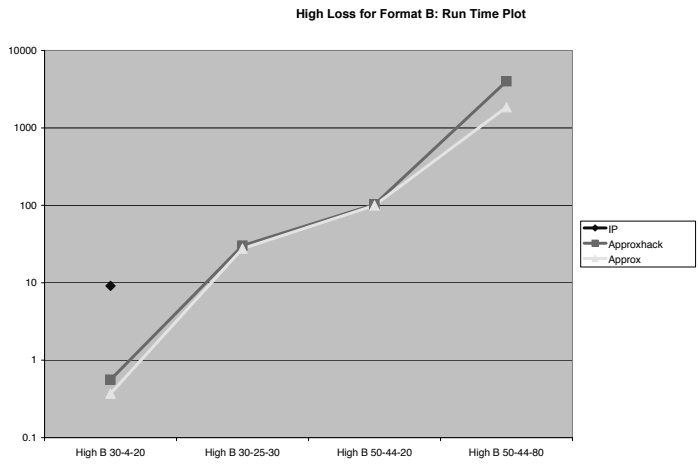,angle=-90,width=4.3in}\vspace*{-.75in}\\
\hspace*{-0.25in}\psfig{figure=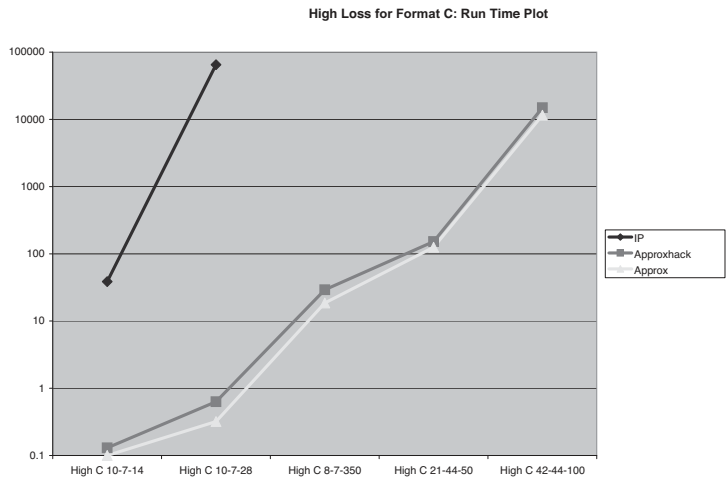,angle=-90,width=3.75in} 
\end{array}$
\end{center}
\caption{Run time (in seconds) under high loss conditions (log plot)}
\label{HighLossRunTime}
\end{figure}

\begin{figure}[p]
\begin{center}
\vspace{-.25in}
$\begin{array}{c}
\psfig{figure=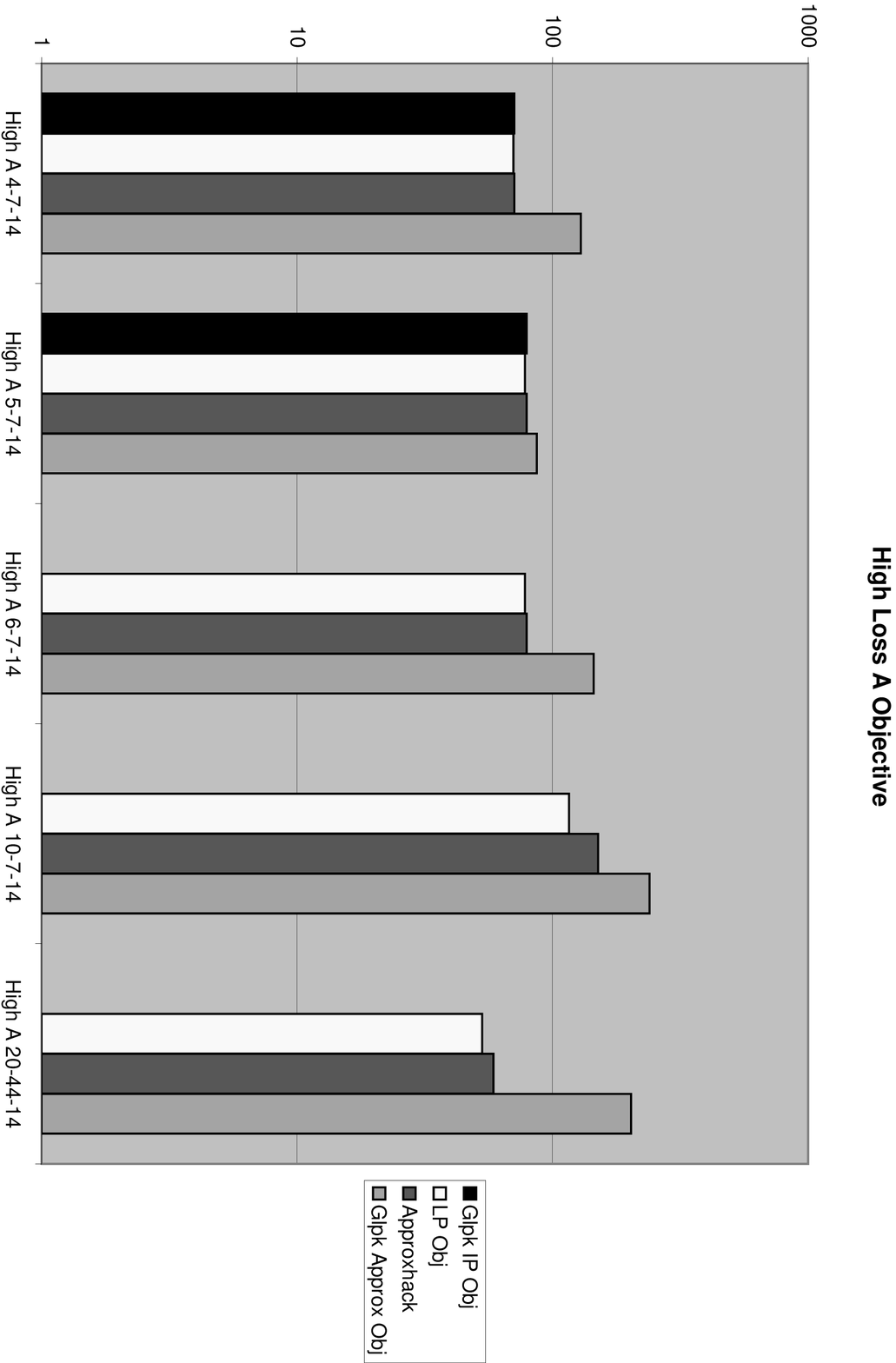,angle=90,width=4.0in} \\
\psfig{figure=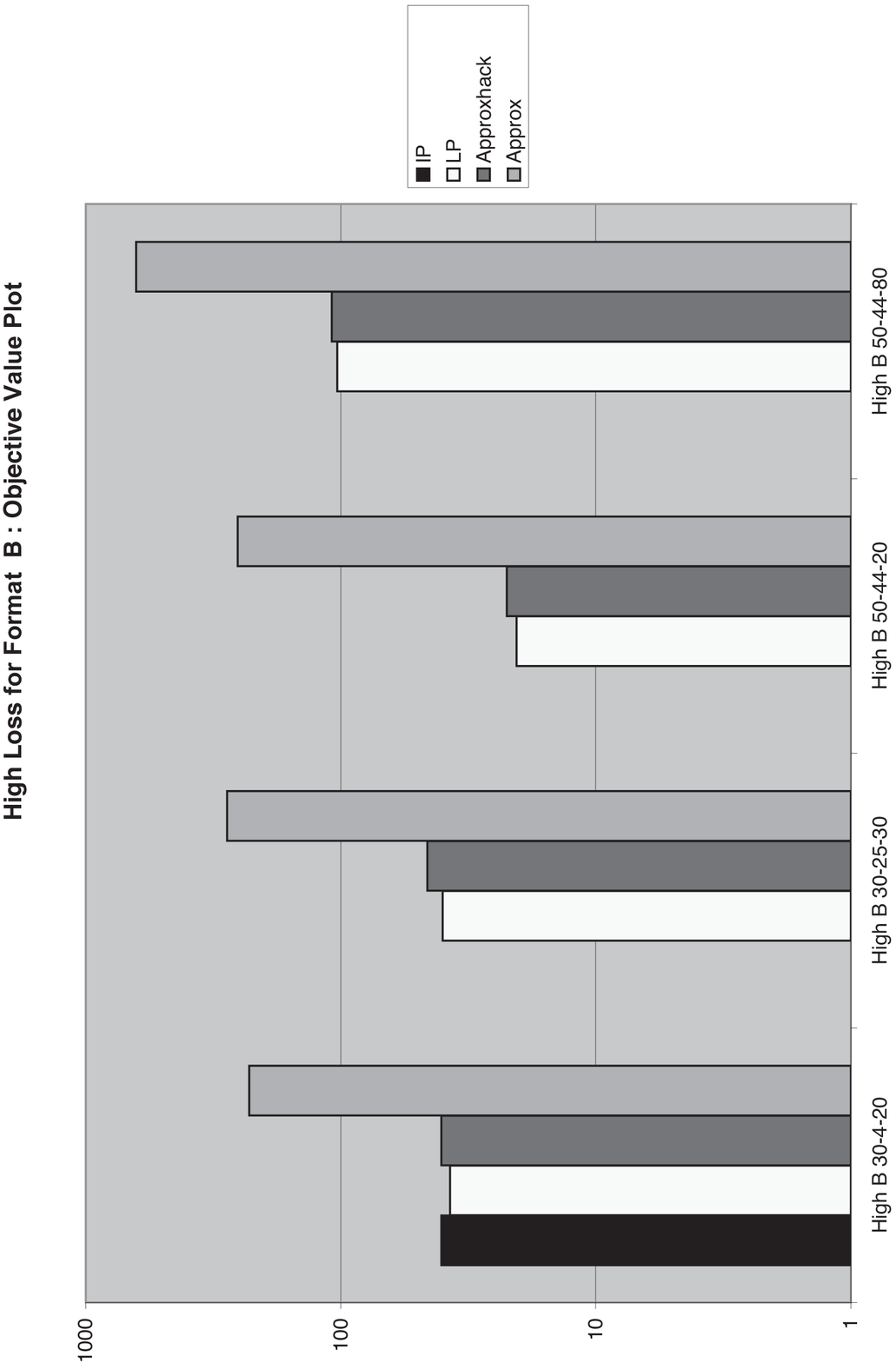,angle=-90,width=3.8in}\\
\psfig{figure=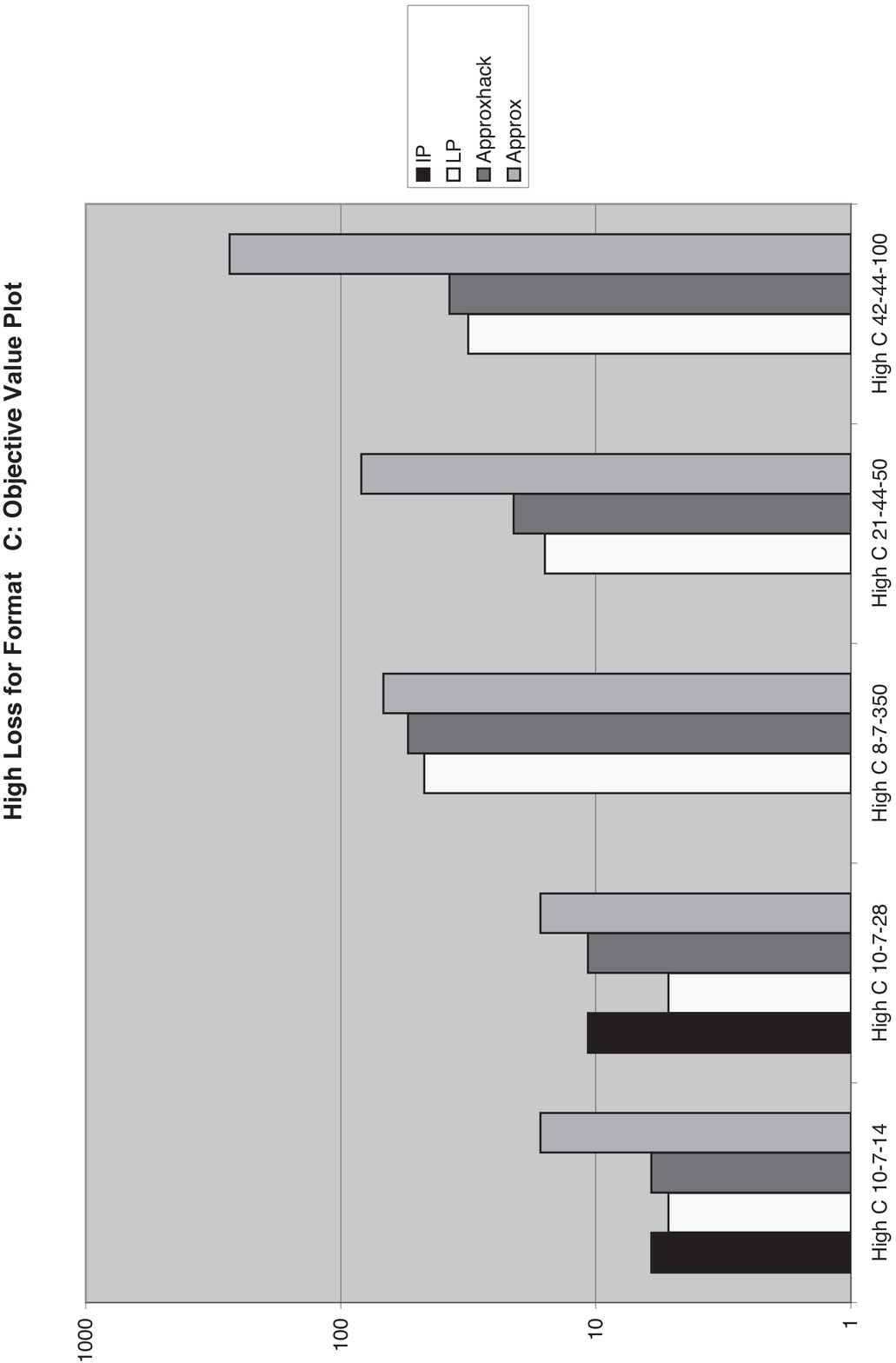,angle=-90,width=3.8in} 
\end{array}$
\end{center}
\caption{Value of the cost objective function under high loss conditions (log plot)}
\label{HighLossObjective}
\end{figure}

\noindent{\bf Average loss conditions.} This is the scenario in which we model moderate loss conditions for the network links by averaging our loss traces from the Akamai network over a calendar day. As expected,  our experimental results show that {\tt Approx} is the fastest in all cases and in some cases the only feasible algorithm (See Figure~\ref{AverageLossRunTime})\footnote{Note that all run time plots in this section use the log scale. Hence, the data points look visually closer than they actually are.}. {\tt IP} is sometimes not feasible beyond even relatively small problem sizes, like the $10\times7\times28$ network for format C that had a problem size of about 2500 variables. On the other hand {\tt Approx} was able to solve a $42\times44\times100$ network in reasonable time
and even a $88\times44\times198$ network that yielded a problem size of about 800,000 variables.  {\tt Approxhack} performed well up to a size 200,000 variables, trailing in time to {\tt Approx}, but failed on higher sizes.

The comparison between the cost objective value achieved by each of the algorithms is also shown in Figure~\ref{AverageLossObjective}.  Note that the optimal cost is achieved by {\tt IP} and can be used as a basis for comparison. For larger network sizes where {\tt IP} was too inefficient to produce a solution, the cost objective value of LP is relevant since it is a lower bound on the best possible cost. As expected, {\tt Approx} produces a larger value for the cost objective function, but is within a reasonable factor of optimal. 

\noindent{\bf Low loss period.}
This is the scenario where we simulate low loss conditions that tend to occur in the non-peak hours and is derived from the Akamai traces. This is the ``easy'' case, since we expect a number of low-loss paths (i.e., high-weight paths) from the sources to the sinks to be available to meet the weight thresholds at the sinks. 
During a low loss period,  {\tt Approx} barely
outperforms {\tt  IP} in terms of time (see Figure~\ref{LowLossRunTime}). Both algorithms run in a feasible amount of time for up to
a $42\times44\times100$ network, which is a problem size of about 200,000 variables. {\tt IP} took too much time and did not, however, complete for the largest network for format B that we simulated, which had dimensions $50\times44\times80$. 

In terms of the value of the cost objective function, when {\tt IP} completed it outperformed {\tt Approx} in most cases
by only a small percentage. But, in a few cases for the larger networks in Formats B and C, {\tt IP} outperformed {\tt Approx} by a larger factor (see Figure~\ref{LowLossObjective}). Since {\tt IP} produced results for all but the largest of networks, we did not run {\tt Approxhack} for the low loss scenario. 

\noindent{\bf High loss period.}
There are periods of time when heavier losses can be experienced across the Internet, typically during the peak hours of the day.  This section addresses this scenario using the peak-hour traces from Akamai.
We see that our algorithm {\tt Approx} is again superior in terms of time to {\tt IP} and also outperforms {\tt Approxhack} (See Figure~\ref{HighLossRunTime}).
{\tt IP} ran for more than 18 hours on the $10\times7\times28$ network for Format C
where {\tt Approx} was able to solve the larger $21\times44\times50$ network for Format C in 3.4 minutes.
{\tt Approx} was even able to solve a relatively dense $42\times44\times100$ network for Format C, which has problem size of 200,000  variables, in 3 hours and 13 minutes.

In terms of the cost objective function that is being minimized, {\tt Approxhack} did beat {\tt Approx}
as one can expect, though in many cases by only a small constant factor. But {\tt Approxhack} took a longer time to produce the better solution. For instance, the $42\times44\times100$ network for Format C took 4 hours 18 minutes (See Figure~\ref{HighLossRunTime}). And, the time difference between {\tt Approx} and {\tt Approxhack} widened even more with larger problem sizes such as the $50\times44\times80$ network of Format B.

\subsection{Experimental Conclusions}

Most of our expectations for running time and quality of the solution were confirmed by our experiments.
Our approximation algorithm {\tt Approx}  was the only feasible way to solve all of the simulated networks and problem sizes in a reasonable amount of time, though it meant sacrificing some solution optimality. As live streaming continues to significantly increase  in popularity each year,   the problem sizes  that we are required to solve increase as well.   Therefore, polynomial-time approximation algorithms such as {\tt Approx} provide the only feasible solutions for the future, as exponential time algorithms such as {\tt IP} and our variant {\tt Approxhack} may not always complete within a reasonable amount of time for the most popular formats. However, our experiments also suggest that the ``easier'' case of low loss is amenable to a more exact solution using {\tt IP}. This would suggest a hybrid approach where {\tt IP} or perhaps even {\tt Approxhack} is used for small and medium-sized networks when the loss is low, while {\tt Approx} is used in all other cases. It is worth noting, however, that overlay network construction is most critical in average and high loss situations where there is a strong need to route streams around hot spots of congestion and packet loss on the Internet. In this regime, {\tt Approx} was the only consistently feasible choice.

\section{Concluding remarks}
\label{sec:concl}
Algorithms for constructing optimum overlay networks are at the heart of modern live stream delivery technology. The algorithms need to be highly efficient, as new overlay networks need to be constructed rapidly in response to the changing failure and loss characteristics of the Internet. Further, the algorithms must scale to even larger networks as global live streaming usage continues to grow. In this work, we provide the first problem formulation and efficient algorithm for constructing live streaming overlays. Besides proving theoretical guarantees,  we have shown that our algorithm is effective on typical real-world inputs. 

A number of interesting challenges for future research remain.  A primary challenge is developing {\em incremental\/} algorithms for solving the overlay network construction problem. Often times there are Internet disruptions that are local to a specific ISP or a specific geography, requiring a quick response by rerouting the impacted streams. In such a situation, it might not be necessary or even feasible to recompute the entire overlay network using an algorithm such as {\tt Approx}. Rather, it would be useful to develop provably-good algorithms that can work in an incremental fashion by recomputing only parts of the overlay network that are directly impacted by the disruption. In such an operational model, the overlay network would be constructed from scratch only when the deployments or the Internet changed in a major way, while smaller changes to the overlay network would be made more frequently in an incremental fashion.

\appendix

\section{Acknowledgements}
The authors want to thank Umut Acar, Aditya Akella, Jeff Pang and Maverick Woo for many helpful discussions and for helping collect data for the experiments. A preliminary version of this paper containing only a subset of the results appeared as an extended abstract in the Proceedings of the ACM Symposium on Parallel Algorithms and Architectures (SPAA) in 2003. This work was supported in part by NSF grant CCR-0122581, NSF grant CCR-012258, NSF Career award No. CCR-97-03017 and NSF Award CNS-05-19894.

\bibliographystyle{alpha}
\bibliography{reflector_total}
\end{document}